\documentclass[reprint,superscriptaddress,amsmath,amssymb,longbibliography,twocolumn]{revtex4-2}

\usepackage{amsmath}
\usepackage{mathtools}
\usepackage{amssymb}
\usepackage{graphicx}
\usepackage{bm}
\usepackage[colorlinks, linkcolor=blue, citecolor=blue, urlcolor=blue,breaklinks=true]{hyperref}
\usepackage{color,dsfont,multirow,booktabs}
\usepackage{braket}
\usepackage{tabularx}
\usepackage{lipsum,booktabs}
\usepackage{subfiles}
\usepackage{subcaption}
\usepackage{caption}
\usepackage{ragged2e}
\usepackage{gensymb}
\usepackage{orcidlink}
\usepackage[T1]{fontenc}
\usepackage[normalem]{ulem}
\usepackage{amsthm}  
\newtheorem{theorem}{Theorem}        
\newtheorem{lemma}{Lemma}            
\newtheorem{proposition}{Proposition} 
\newtheorem{corollary}{Corollary}[section]
\newcommand\tr{\operatorname{Tr}}

\newcommand{\commentold}[1]{}
\DeclareMathSymbol{:}{\mathpunct}{operators}{"3A}
\newcommand{\dicke}[2]{\ket{\smash{D_{#2}^{#1}}}}
\newcommand{\dickeD}[2]{\bra{\smash{D_{#2}^{#1}}}}

\newcommand\numberthis{\addtocounter{equation}{1}\tag{\theequation}}

\begin{document}
\date{\today}

\title{Harnessing Environmental Noise for Quantum Energy Storage}

\author{Borhan Ahmadi\orcidlink{0000-0002-2787-9321}}
\email{borhan.ahmadi@ug.edu.pl}
\address{ International
  Centre for Theory of Quantum Technologies (ICTQT), University of Gdańsk, Jana Bażyńskiego 8, 80-309 Gdańsk, Poland}
\author{Aravinth Balaji Ravichandran\orcidlink{0000-0001-5379-7525}}
\address{ International
  Centre for Theory of Quantum Technologies (ICTQT), University of Gdańsk, Jana Bażyńskiego 8, 80-309 Gdańsk, Poland}
\email{aravinth.ravichandran@ug.edu.pl}
\author{Paweł Mazurek\orcidlink{0000-0003-4251-3253}}
\email{pawel.mazurek@ug.edu.pl}
\address{Institute of Informatics, Faculty of Mathematics, Physics and Informatics, University of Gdańsk, Wita Stwosza 63, 80-308 Gdańsk, Poland}
\author{Shabir Barzanjeh}
\address{Department of Physics and Astronomy, University of Calgary, Calgary, AB T2N 1N4 Canada}
\author{Paweł Horodecki}
\address{ International
  Centre for Theory of Quantum Technologies (ICTQT), University of Gdańsk, Jana Bażyńskiego 8, 80-309 Gdańsk, Poland}

\begin{abstract}
Quantum hardware increasingly relies on energy reserves that can later be converted into useful work; yet, most battery-like proposals demand coherent drives or engineered non-equilibrium resources, limiting practicality in noisy settings. We develop an autonomous charging paradigm in which an ensemble of identical two-level units, collectively coupled to a thermal environment, acquires work capacity without any external control. The common bath mediates interference between emission and absorption pathways, steering the many-body state away from passivity and into a steady regime with nonzero extractable work. The full charging dynamics and closed-form expressions are obtained for the steady-state, showing favorable scaling with the number of cells that approach the many-body optimum. We show that the mechanism is robust to local noise: under a convex mixture of collective and local dissipation, non-zero steady-state ergotropy persists, exhibits counterintuitive finite-temperature optima, and remains operative when the collective channel is comparable to or stronger than the local one. We show that environmental fluctuations can be harnessed to realize drive-free, scalable quantum batteries compatible with circuit- and cavity-QED platforms. Used as local work buffers, such batteries could potentially enable rapid ancilla reset, bias dissipative stabilizer pumps, and reduce syndrome-extraction overhead in fault-tolerant quantum computing.
\end{abstract}

\maketitle
\section*{Introduction}

The ability to store energy and later convert it into useful work is a core requirement in modern quantum technologies, including computing~\cite{RevModPhys.80.885, PRXQuantum.1.020101}, secure communication~\cite{Chen_2021}, and quantum sensing~\cite{RevModPhys.89.035002}. In all of these platforms, storage is only valuable insofar as the energy remains extractable — i.e., available as work on demand \cite{Allahverdyan_2004}. Quantum processors benefit from local energy reservoirs that support fast gate operations \cite{PhysRevLett.124.067701,PhysRevResearch.6.033215,RevModPhys.93.025005}, active reset \cite{kobayashi2023feedback}, and error correction \cite{sutcliffe2025distributed}; quantum networks require controlled energy transfer for entanglement distribution and repeater operation; and quantum sensors rely on stable energy flows to reach sensitivities beyond classical limits. The common challenge is to ensure that the stored energy remains ergotropic, meaning it can be converted to work even in the presence of decoherence and noise.

As we know from classical thermodynamics, thermal equilibrium states are passive: no work can be extracted from them by unitary dynamics alone~\cite{Cengel1989ThermodynamicsA}. The quantum refinement of this statement uses ergotropy—the maximal work obtainable from a quantum state under cyclic unitary operations~\cite{Allahverdyan_2004}. Passive states (including Gibbs states) have zero ergotropy; non-passive states have non-zero ergotropy enabled by population inversion, coherence, or correlations that can be rearranged by a unitary. This distinction between passive and non-passive states motivates the concept of quantum batteries (QBs): engineered quantum systems that store energy in a form convertible to work (ergotropy)~\cite{PhysRevE.87.042123, PhysRevLett.122.047702, PhysRevResearch.5.013155,Tomas2023, Rodríguez_2024, PhysRevA.107.032218, binder2015quantacell, kamin2023steady, kamin2020non, PhysRevA.107.042419, quach2022superabsorption, barra2022quantum, PhysRevLett.131.240401, bv4w-jr6q, PhysRevLett.134.220402, PhysRevA.110.052404, hadipour2025nonequilibrium}.

Despite rapid theoretical progress, many QB designs rely on external coherent drives, engineered non-thermal reservoirs, or specially prepared nonequilibrium states to induce non-passivity \cite{RevModPhys.96.031001}. These methods are powerful but often impractical in uncontrolled or high-temperature settings where coherence is fragile and precise control is limited. This raises a basic question: Can a quantum battery be charged using only thermal energy, with no external coherent drive? Demonstrating this would establish a route to autonomous and scalable devices.

Recent results in many-body quantum thermodynamics suggest a promising path. When several quantum units (such as qubits) couple collectively to a common environment, dissipation can become cooperative, modifying relaxation pathways relative to independent baths \cite{PhysRevA.2.883}. Symmetric (Dicke-like) coupling permits interference between indistinguishable emission and absorption processes, building correlations that are inaccessible to independently coupled subsystems. Such collective effects have been shown to enhance the performance bounds of quantum thermal machines and widen the conditions under which useful work can be extracted~\cite{PhysRevLett.124.170602, Niedenzu_2018, Boeyens_2025, PhysRevLett.132.210402, PhysRevApplied.23.024010}. Importantly, collective coupling can produce steady-state coherences or correlated populations even with thermal reservoirs, thereby creating non-passive reduced states of the working medium without external coherent control.

Here we introduce a model of an autonomous, self-charging quantum battery that acquires ergotropy directly from a thermal reservoir via collective dissipation. The battery consists of an ensemble of identical two-level systems (cells) symmetrically coupled to a bosonic bath. Despite the bath's incoherent nature, interference between collective absorption and emission channels drives the battery to a non-passive steady-state with non-zero ergotropy. We derive the full charging dynamics, characterize the steady-state ergotropy as a function of reservoir temperature and system size, and identify the parameter regimes where thermal energy is efficiently converted into ergotropic work. In particular, we show that (i) the extractable work increases with the number of cells due to cooperative effects; (ii) within the validity of the collective-coupling model, the steady-state ergotropy can approach the maximal value allowed for the corresponding constraints; and (iii) counter to conventional expectations, higher reservoir temperatures can enhance charging by activating additional collective channels. Moreover, they are robust to imperfections. Even with a convex mixture of collective and local dissipation, steady-state ergotropy persists and exhibits finite-temperature optima and a stable activation lobe at moderate collectivity, tolerating added dephasing.

These results show a thermodynamic regime in which environmental noise—mediated by collective coupling—becomes a resource for energy storage. The mechanism yields a practical design principle for scalable, autonomous QBs that function in resource-limited or high-noise conditions. Our proposed approach is compatible with existing experimental platforms such as circuit and cavity QED, where ensembles of qubits or atoms naturally experience partial collective coupling to common (and often thermalized) modes. Starting from the most accessible laboratory condition—a product of local thermal states—the battery autonomously develops non-passivity, with extractable work that systematically grows with system size. Our analysis thus provides both a conceptual advance in quantum thermodynamics and a feasible blueprint for implementing autonomous, thermally charged quantum batteries to power quantum processors.

\section*{Results}
\subsection*{Theoretical Model and Hamiltonian}

As shown in Fig.~\ref{Model}, we model the QB as an ensemble of $N$ non-interacting two-level atoms (qubits), each acting as an individual cell of the battery. The free Hamiltonian is (with $\hbar = 1$) 
\begin{equation}
    H_B = \sum_{i=1}^N \omega\, \sigma_+^{(i)} \sigma_-^{(i)},
\end{equation}
where $\omega$ denotes the energy splitting between the ground and excited states, and $\sigma_+^{(i)}$ ($\sigma_-^{(i)}$) are the raising (lowering) operators for the $i$-th atom. The battery is initialized in a fully uncharged state, with all atoms prepared in their ground states.  
To implement charging, the ensemble is coupled to a common thermal reservoir at temperature $T_c$, modeled as a continuum of bosonic modes. The interaction Hamiltonian takes the form \cite{breuer2002theory}
\begin{equation}
    H_I = \sum_k \left( \sum_{i=1}^N g\, \sigma_+^{(i)} \right) \otimes a_k + \text{h.c.},
\end{equation}
where $g$ is the coupling strength and $a_k$ is the annihilation operator of the $k$-th reservoir mode. This form of interaction implies that all atoms couple collectively to each reservoir mode, a condition satisfied when the interatomic spacing is much smaller than the thermal or optical wavelength of the bath \cite{Agarwal1974,Stephen1964,damanet2016competition,PhysRevA.99.052105} (see Supplementary Note I).  

Such a collective coupling regime is experimentally accessible in a variety of architectures, including ensembles of atoms in high-finesse optical cavities~\cite{bohnet2012steady}, superconducting qubits coupled to a common transmission line~\cite{van2013photon,PhysRevA.88.043806}, and dense arrays of spin defects in solid-state hosts~\cite{PhysRevLett.107.060502,PhysRevX.14.041055}. Under these conditions, dissipation occurs through collective jump operators rather than local ones, leading to cooperative energy exchange with the environment~\cite{PhysRevA.2.889,GrossHaroche1982,RevModPhys.95.015002}. This collective dissipation is the essential ingredient that enables the battery to acquire ergotropy directly from an incoherent thermal reservoir.

Under the standard Born--Markov and secular approximations~\cite{breuer2002theory,audretsch2007entangled}, the reduced dynamics of the battery in the interaction picture is governed by the master equation (see Supplementary Note I)
\begin{equation}
    \dot{\rho} = \mathcal{L}[\rho],
    \label{ME}
\end{equation}
where the Liouvillian superoperator $\mathcal{L}$ is given by \cite{Agarwal1974,Stephen1964,damanet2016competition,PhysRevA.99.052105}
\begin{align}\label{Liouvillian}
    \mathcal{L}[\boldsymbol{\cdot}] &= \gamma_c (n_c+1)\left(J_-\,\boldsymbol{\cdot}\, J_+ - \tfrac{1}{2}\{J_+J_-, \boldsymbol{\cdot}\} \right) \nonumber \\
    &\quad + \gamma_c n_c \left(J_+\,\boldsymbol{\cdot}\, J_- - \tfrac{1}{2}\{J_-J_+, \boldsymbol{\cdot}\} \right).
\end{align}
Here, $\gamma_c$ is the decay rate, and \(n_c = ({e^{\frac{\omega}{kT_c}} - 1})^{-1}\) is the Bose--Einstein distribution function, which gives the mean thermal occupation number at frequency $\omega$ and reservoir temperature $T_c$, with $k$ being the Boltzmann constant. The operators $J_\pm = \sum_{i=1}^N \sigma_\pm^{(i)}$ are the collective lowering and raising operators, respectively.  

By expressing the dynamics in terms of the collective operators $J_\pm$, we capture the cooperative nature of the system--reservoir interaction, which arises because all atoms couple \emph{indistinguishably} to the same reservoir modes. The collective form of \eqref{Liouvillian} makes explicit that dissipation proceeds through the permutation–symmetric jumps $J_\pm$, i.e., it is insensitive to which atom emitted or absorbed. In particular, when the emitters are deeply subwavelength so that
\(\xi_{ij}\ll 1\) (with \(\xi_{ij}\equiv k\,|\mathbf r_i-\mathbf r_j|\), $k$ the resonant wavenumber), the radiation field cannot resolve individual sites and no which–atom information is available. Consequently, the dynamics preserves the exchange symmetry of the atomic state and implements indistinguishable decay/absorption pathways through the collective operators \(J_\pm\). This structure of the dissipator reflects that energy exchange with the thermal bath proceeds through collective emission and absorption processes. Importantly, such collective coupling enables the system to evolve into non-passive steady states with non-zero ergotropy, despite the absence of any external coherent driving or measurement.  

We use ergotropy as the central figure of merit in our analysis, as it directly quantifies the maximum work that can be extracted from a quantum state through a cyclic unitary process—one in which the system Hamiltonian returns to its initial form at the end of the evolution, $H(t_0) = H(t_f)$~\cite{Allahverdyan_2004}.
Let $E_B(\rho) := \mathrm{Tr}\{\rho H_B\}$ denote the internal energy of the battery when in state $\rho$. The ergotropy $\mathcal{W}(\rho)$ is defined as~\cite{Allahverdyan_2004} 
\begin{equation}
    \mathcal{W}(\rho) := E_B(\rho) - E_B(\rho_p),
\end{equation}
where $\rho_p$ is the passive state associated with $\rho$, obtained by minimizing the energy over all unitary transformations,
\begin{equation}
   \rho_p = \arg \min_{U \in SU(2^N)} \mathrm{Tr}\{U \rho U^{\dagger} H_B\}.
\end{equation}
Note that the passive state $\rho_p$ has the same spectrum (eigenvalues) as $\rho$, but its populations are arranged so as to minimize the energy with respect to $H_B$. The difference between the energy of $\rho$ and that of $\rho_p$ defines the ergotropy, i.e., the maximum amount of work that can be extracted from the state via unitary operations alone. Thermal (Gibbs) states are always passive and therefore have zero ergotropy \cite{Allahverdyan_2004}, whereas states with population inversion or quantum coherence in the energy eigenbasis yield non-zero ergotropy. In this way, ergotropy provides a rigorous operational measure of the stored, extractable energy in the quantum battery.
\begin{figure}[t!]
\center
\includegraphics[width=8cm]{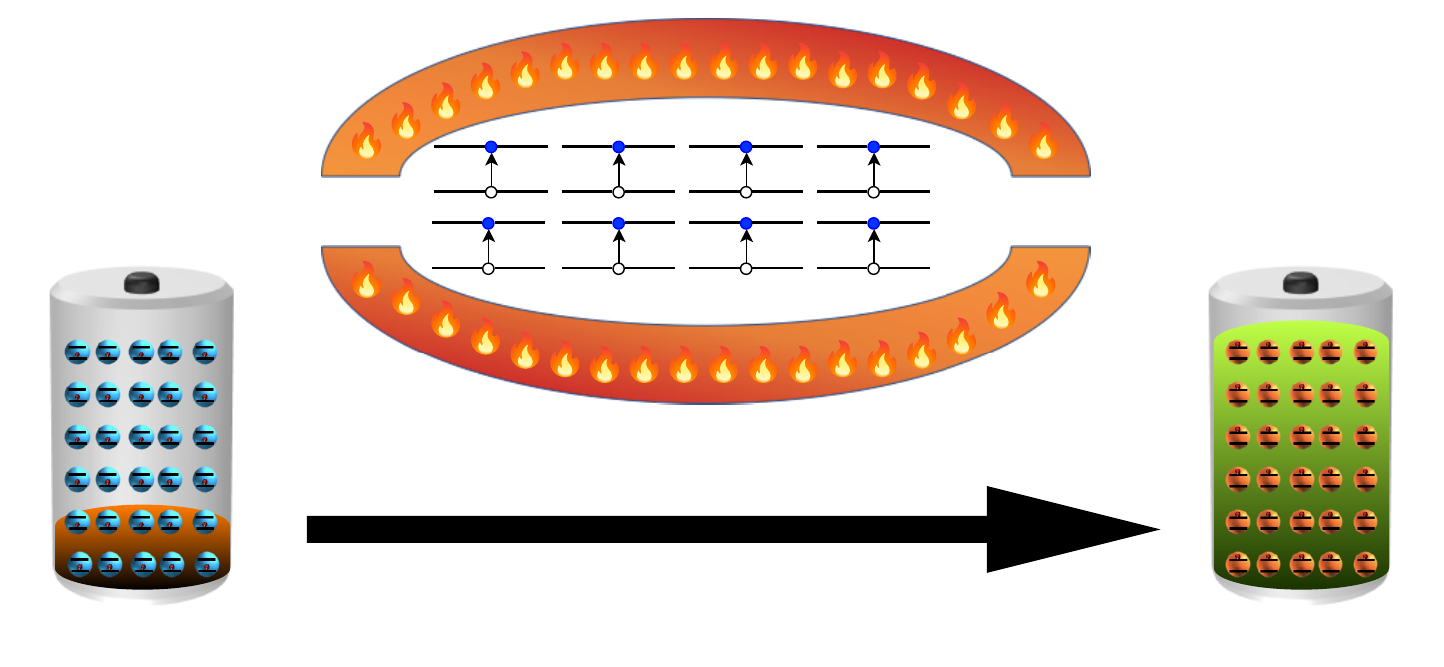}
\captionsetup{justification=justified}
\caption{\textbf{Schematic representation of the charging process of the QB.} On the left is the uncharged QB with each qubit constituting a cell of the QB. Depicted as cavity is the shared reservoir at a finite temperature $T_R$ \justifying that interacts with each cell collectively exciting each individual cell. As a result, the QB is charged as depicted in the right.} 
\label{Model}
\end{figure}

\subsection*{Charging Process of the QB}

To analyze the charging process induced by the collective dissipator, we now derive the transient and steady-state dynamics of the $N$-qubit battery. Our analysis applies to arbitrary initial states $\rho(t_{0})$, including those outside the fully symmetric subspace. Below we shall show how to represent the evolving state $\rho(t)$ in the form that is suitable with respect to the final steady-state. More specifically, it turns out that the Liouvillian, expressed in terms of collective angular momentum operators, naturally decomposes the dynamics into irreducible sectors suitable to represent the state $\rho(t)$. We first note that the positive operators $J_- J_+$ and $J_+ J_-$ commute with each other and with the $J_{z}$ operator.
\begin{equation}
    J_z = \frac{1}{2}\sum_{i=1}^{N} \sigma_z^{(i)},
\end{equation}
where $\sigma_z^{(i)}$ is the Pauli-$z$ operator acting on the $i$-th qubit. As a result, $J_- J_+$, $J_+ J_-$, and $J_z$ share a common eigenbasis. Each eigenstate is associated with a fixed number of excitations, determined by its $J_z$ eigenvalue. These eigenstates can be labeled by quantum numbers $\ket{j, m, \sigma}$, which satisfy  
\begin{align}
    J^2 \ket{j, m, \sigma} &= j(j+1) \ket{j, m, \sigma}, \\
    J_z \ket{j, m, \sigma} &= m \ket{j, m, \sigma},
\end{align}
where $j$ is the total spin quantum number, $m$ the magnetic quantum number, and $\sigma$ indexes the degeneracy within each $(j,m)$ subspace~\cite{Shankar1994,sakurai2020modern}. For even $N$, the allowed values of $j$ range from $0 \leq j \leq N/2$, while for odd $N$ they range from $1/2 \leq j < N/2$. The degeneracy index runs over  
\[
1 \leq \sigma \leq \binom{N}{N/2 - j} - \binom{N}{N/2 - j - 1},
\]  
and $m$ takes integer steps from $-j$ to $j$. Since $[J^2, J_{\pm}] = 0$ and $[J^2, J_z] = 0$, the ladder operators $\hat{J}_{\pm}$ only connect eigenstates of $J_z$ within the same $J^2$ subspace. Consequently, the Liouvillian evolution preserves this block structure, leading to dynamics that are block-diagonal in the so-called Bohr sectors~\cite{davies1974markovian,Baumgartner_2008}. These sectors are labeled by the multi-index  
\[
k = (j, j', \sigma, \sigma', \Delta_{J_z}),
\]  
where $\Delta_{J_z} = m - m'$. The full system density matrix at any moment of time (we drop here the time variable $t$ so that the formulas are not too obscured) takes the form
\begin{equation}\label{10NEW}
    \rho^{k} = \sum_{m,m':\, m - m' = \Delta_{J_z}} 
    \rho_{j,m,\sigma;\, j',m',\sigma'} \;
    \ket{j, m, \sigma}\bra{j', m', \sigma'}.
\end{equation}
Each block $\rho^k$ evolves independently under the master equation. Physically, the Bohr sectors correspond to coherences between states with fixed $j$, $j'$, and $\Delta_{J_z}$. This block-diagonal structure significantly simplifies both analytical treatment and numerical simulations of the dynamics.
For each Bohr sector $\rho^k$, we collect the associated matrix elements into a vector $\vec{p}^k$, with components defined as  
\[
\vec{p}^k_l = \rho_{j,m,\sigma,;j',m',\sigma'},
\]  
where the index $l$ enumerates all values consistent with a fixed pair $(j,j')$ and coherence order $\Delta_{J_z} = m - m'$. The components are arranged in order of increasing $m$, so that $l = 1, 2, \dots, d$. The dimension $d$ of the sector is given by  
\begin{equation}
    d = \min\!\left[\underbrace{\dim \mathcal{H}_J}_{2j+1},\, 
                      \underbrace{\dim \mathcal{H}_{J'}}_{2j'+1}\right] 
        - \max\!\left[|\Delta_{J_z}| - |j - j'|,\, 0\right].
\end{equation}
This decomposition allows the full density matrix to be written as a vector
\begin{equation}
    \vec{p} = \bigoplus_k \vec{p}^k,
\end{equation}
with each $\vec{p}^k$ evolving independently within its respective sector. The dynamics are then governed by a block-diagonal Liouvillian superoperator,
\begin{equation}
    \frac{d\vec{p}}{dt} = L \vec{p}, \qquad L = \bigoplus_k L^k,
\end{equation}
where $L^k$ is the linear operator generating the evolution in the $k$th sector. As we show below, the structure of the master equation~\eqref{ME} leads to two distinct types of sector dynamics:  
1. \textit{Thermalizing sectors}, where the system relaxes to a non-trivial stationary state.  
2. \textit{Leaking sectors}, where coherences decay irreversibly and the steady-state vanishes.
This classification follows directly from the form of $L^k$, which we now derive explicitly. For a sector $k = (j, j', \sigma, \sigma', \Delta_{J_z})$, the master equation~\eqref{ME} leads to coupled linear equations for the matrix elements $\rho_{j,m,\sigma;j'm',\sigma'}$ in that sector:
\begin{align}
    \frac{d}{dt}\rho_{j,m,\sigma;\, j',m',\sigma'} &= 
    d_m^k \,\rho_{j,m,\sigma;\, j',m',\sigma'} \nonumber \\ 
    &\quad + b_{m-1}^k \,\rho_{j,m-1,\sigma;\, j',m'-1,\sigma'} \nonumber\\
    &\quad + c_{m}^k \,\rho_{j,m+1,\sigma;\, j',m'+1,\sigma'}.
\end{align}
This structure implies that $L^k$ is tri-diagonal:
\begin{equation}\label{15NEW}
L^{k} =
\begin{pmatrix}
d_1^k & c_1^k & 0 & 0 & \cdots \\
b_1^k & d_2^k & c_2^k & 0 & \cdots \\
0 & b_2^k & d_3^k & c_3^k & \cdots \\
0 & 0 & b_3^k & d_4^k & \cdots \\
\vdots & \vdots & \vdots & \vdots & \ddots
\end{pmatrix}.
\end{equation}
The coefficients are given by
\begin{align}
    d_m^k &= - \tfrac{\gamma}{2}\Big[n\,(A_{m+1}^k + A_{m'+1}^{'k}) 
    + (n+1)\,(A_m^k + A_{m'}^{'k})\Big], \nonumber\\
    c_m^k &= \gamma (n+1)\sqrt{A_{m+1}^k A_{m'+1}^{'k}},\,
    b_m^k = \gamma n\sqrt{A_{m+1}^k A_{m'+1}^{'k}}, \nonumber
\end{align}
where $A_m^k$ and $A_{m'}^{'k}$ are determined by the collective coupling structure. To analyze the structure of the induced dynamics and distinguish between thermalizing and leaking sectors, we introduce the shorthand notation 
\begin{equation}
\begin{aligned}
    A_{m}^k &= (j + m)(j - m + 1),\\
    A_{m'}^{k'} &= (j' + m')(j' - m' + 1),
\end{aligned}
\end{equation}
which arises from the matrix elements of the collective ladder operators $J_\pm$ acting within the angular momentum basis.  
To build intuition, we first note that the off-diagonal elements of each generator matrix $L^k$ are non-negative. This ensures that the evolution governed by $L^k$ preserves positivity of the density matrix, as required for any valid quantum dynamical map.  
Next, consider the sum of the elements in each column of $L^k$. A key observation is that this sum is always non-positive and takes the form
\begin{align}\label{17NEW}
    &-\gamma(n+1) \left( \tfrac{A_m^k + A_{m'}^{k'}}{2} - \sqrt{A_m^k A_{m'}^{k'}} \right) \nonumber \\
    &\quad - \gamma n \left( \tfrac{A_{m+1}^k + A_{m'+1}^{k'}}{2} - \sqrt{A_{m+1}^k A_{m'+1}^{k'}} \right) \leq 0.
\end{align}
Above, the equality holds if and only if both conditions
\begin{align}
    A_m^k &= A_{m'}^{k'}, \quad 
    A_{m+1}^k = A_{m'+1}^{k'}
\end{align}
are satisfied simultaneously. This occurs precisely when the two subspaces are identical, i.e., $j = j'$ and $\Delta_{J_z} = m - m' = 0$. In all other cases, the column sums are strictly negative, which implies that the corresponding coherences decay in the long-time limit.  
Thus, only the sectors with $j = j'$ and $\Delta_{J_z} = 0$ can sustain non-decaying (stationary) components of the density matrix, corresponding to thermalizing dynamics. All other sectors are purely dissipative and vanish in the steady-state, identifying them as leaking sectors. 

To formalize this distinction, we focus on the column sums of the generator matrix $L^k$. A sector exhibits leakage if there exists at least one column of $L^k$, indexed by $w$, for which the sum of its elements,
\begin{equation}
    S(w) := \sum_i L^k_{i w},
\end{equation}
is strictly negative. In this case, the long-time evolution within the sector yields a vanishing steady-state,
\begin{equation}\label{20NEW}
    \lim_{t \rightarrow \infty} \vec{p}^k(t) = 0.
\end{equation}
The mechanism is visible from the tri-diagonal structure of $L^k$. Each component $\vec{p}^k_l$ couples only to its nearest neighbors $\vec{p}^k_{l \pm 1}$, so the total sector weight $\sum_l \vec{p}^k_l(t)$ evolves as
\begin{equation}\label{22NEW}
    \frac{d}{dt} \sum_l \vec{p}^k_l(t) = \sum_j S(j)\, \vec{p}^k_j(t).
\end{equation}
Because the dynamics is positivity-preserving, it does not mix real and imaginary parts of $\vec{p}^k$, and allows finite-time propagation of information across the sector, the presence of any column with $S(w) < 0$ guarantees that the total weight $\sum_l \vec{p}^k_l(t)$ decays to zero. Consequently, the entire sector leaks.  
Leakage is avoided only when the conditions $j = j'$ and $\Delta_{J_z} = 0$ are simultaneously satisfied. In these exceptional cases, $L^k$ corresponds either to (i) diagonal blocks within a fixed subspace labeled by $(j,\sigma)$, or (ii) coherences between degenerate subspaces with the same total spin $j$ but different degeneracy labels $\sigma \neq \sigma'$, provided the states share the same magnetic quantum number $m$.  
In such \textit{thermalizing sectors}, the column sums of $L^k$ vanish, and the coefficients satisfy
\begin{equation}\label{23NEW}
    d_m^k = c_{m-1}^k + b_m^k, 
    \qquad b_m^k = \alpha_c \, c_m^k,
\end{equation}
where $c_m^k$ and $b_m^k$ denote the de-excitation and excitation rates between neighboring $m$ levels, with their ratio fixed by the thermal Gibbs factor $\alpha_c= \frac{n_c}{(n_c+1)}$. This detailed balance condition leads to a unique steady-state distribution of the form
\begin{equation}\label{eq24}
    \lim_{t \to \infty} \vec{p}^k_l(t) 
    = \alpha_c^l \vec{p}^k_0(t),
\end{equation}
where $\vec{p}^k_0(t)$ is fixed by conservation of probability,  
\begin{equation}
    \sum_l \vec{p}^k_l(t) = \sum_l \vec{p}^k_l(t_0).
\end{equation}
A rigorous proof of the leakage behavior in non-thermalizing sectors is provided in Supplementary Note IIG.

To recast the above structure of the stationary state in a compact form, we invoke the Schur--Weyl duality \cite{fulton1991representation,goodman2009symmetry,KEYL2002431,georgi1983lie}. 
The Liouvillian in Eq.~\eqref{Liouvillian} commutes with the permutation $\mathcal{U}(\pi)$ of every qubit, i.e.
$\mathcal{U}(\pi)\,\mathcal L[\rho]\,\mathcal{U}(\pi)^\dagger=\mathcal L[\mathcal{U}(\pi)\rho\,\mathcal{U}(\pi)^\dagger]$ for all $\pi$ \cite{Shankar1994,sakurai2020modern,georgi1983lie}. Then, in the decomposition $\mathcal H^{(N)}\simeq\bigoplus_j(\mathcal{V}_j\otimes U_j)$, the generator necessarily takes the block form in the Schur basis
\begin{equation}
    \mathcal L \;=\;\bigoplus_{j}\Big(\,\mathbb{I}_{\mathcal{V}_j}\otimes \mathcal L_j\,\Big),
\label{eq:SW-block}
\end{equation}
i.e.\ it acts trivially on the multiplicity (degeneracy) space $\mathcal{V}_j$ and acts non-trivially only on the spin-$j$ carrier $U_j$.
In particular, for each fixed $j$ the reduced state $\chi_j(t)\coloneqq\mathrm{Tr}_{U_j}\rho_j(t)$ is conserved:
$\frac{d}{dt}\chi_j(t)=\mathrm{Tr}_{\mathcal{U}_j}[(\mathbb{I}\otimes\mathcal L_j)[\rho_j(t)]]=0$. Here $\mathcal{V}_j$ carries the irreducible representation (irrep) of the symmetric group $S_N$ (indexed by $\sigma$) with multiplicity $\nu_j$ and $U_j$ carries the $SU(2)$ irrep of dimension $2j+1$ (indexed by $m$). Using the structure \eqref{eq:SW-block} in the sector analysis in Eqs. \eqref{10NEW}-\eqref{15NEW} above, we conclude that only the sectors with $j=j'$ and $\Delta J_z=0$ may host stationary components, while all others leak (Eqs.~\eqref{17NEW}-\eqref{20NEW}). Within each surviving $j$-block, Eq.~\eqref{eq24} determines the unique steady-state $\tau_j$ on $U_j$ which can be presented in the full basis as:
\begin{equation}
    \mathbb{I}_{\nu_{j}} \otimes \tau_j\;=\; \frac{1}{\mathcal Z_j}\sum_{\sigma = 1}^{\nu_{j}}\sum_{m=-j}^{j} \alpha_{c}^{\,j+m}\,\ket{j,m, \sigma}\bra{j,m,\sigma},
\end{equation}
while off-diagonal coherences in $m$ decay. Therefore, the global steady-state factorizes in each $j$ sector as
\begin{equation}\label{SS}
    \rho_{\mathrm{SS}} \;=\; \bigoplus_{j}\Big(\chi_j\otimes \tau_j\Big),
\end{equation}
with $\chi_j\ge 0$, $\sum_j\mathrm{Tr}\,\chi_j=1$ determined by the initial projections onto $\mathcal{V}_j$
\begin{equation}
    \chi_j \otimes \mathbb{I}_{m}= \sum_{\sigma, \sigma' = 1}^{\nu_j} (\chi_j)_{\sigma,\sigma'} \sum_{m = -j}^{j} \ket{j, m, \sigma}\bra{j, m, \sigma'}.
\end{equation}
More specifically, the only remaining information on the initial state of the system $\rho(t_{0})$ is encapsulated in the matrix through the following matrix elements: $(\chi_{j})_{\sigma,\sigma'} = \sum_{m=-j}^{j}\bra{j,m,\sigma'}\rho(t_0)\ket{j,m,\sigma}$ (see Supplementary Note II).

As will be seen in the following the protection of intra-degeneracy coherence provides the microscopic mechanism that enables work extraction from an initially thermal state.

Finally, we emphasize that the collective Liouvillian preserves permutation symmetry and does not mix different total-spin sectors $j$. This non-ergodicity ensures that interference-enabled structures—such as superradiant and subradiant pathways, as well as degeneracy-protected coherences within fixed-$j$ manifolds—survive at stationarity. Even at high reservoir temperatures, the dynamics remain confined to collective ladders; there is no symmetry-breaking “reservoir microscope” capable of washing out these coherences. Note however that local noise is capable of breaking this symmetry, leading to a unique stationary state, as shown in Section: Robustness to local noise. 
We are now well-equipped to numerically compute the steady-state ergotropy for the Gibbs-product initial condition up to a significant high number of qubits.

\subsection*{Ergotropy of the QB}

Having established the charging process of our QB via the collective dissipative model and its symmetry constraints, we now turn to the central experimental motivation of this work. In realistic settings, the most readily available initial states are thermal (Gibbs) states, which are also the most passive states: no work can be extracted from them through unitary dynamics alone. Guided by this consideration, we take as initial state the product of local Gibbs states,
\begin{equation}
    \rho(t_{0})= \rho_{\beta_{\mathrm{q}}}^{\otimes N},
    \qquad
    \rho_{\beta_{q}}=
\frac{1}{Z}\begin{pmatrix}
    1 & 0\\
    0 & e^{-\beta_{q}}
\end{pmatrix}, \quad
    q = e^{-\beta_{\mathrm{q}}},
\end{equation}
which contains no ergotropy at $t=t_0$. Despite this passivity, the collective coupling to a single bosonic reservoir autonomously activates ergotropy.
The steady-state ergotropy $\mathcal{W}$ for $N = 26$ qubits (illustrated in Fig.~\ref{Fig2}) shows a broad parameter regime where the steady-state exhibits non-zero ergotropy, even when the reservoir temperature is finite. In addition, increasing the temperature of the reservoir increases the amount of activated ergotropy. In this plot, the local temperature parameter is defined $q = e^{-\beta_{\mathrm{q}}}$. For clarity, the axes are arranged as $(q,\alpha_c)$.  

Two main trends emerge:
(i) Even a zero-temperature reservoir ($\alpha_c \to 0$) activates the ergotropy for $N>1$ provided the local temperature parameter $q$ is sufficiently high, with the activated ergotropy increasing with the size of the system (see Supplementary Note III).  
(ii) Analytically, in the thermodynamic limit the activation of ergotropy becomes {\it generic} (see Supplementary Note IV):
\begin{equation}
    \lim_{N\to\infty}\mathcal{W}(N,\alpha_c, q) > 0,
    \qquad \text{for all} \quad \alpha_c \neq q,
\end{equation}
demonstrating that only the fine-tuned line $\alpha_c = q$ yields vanishing ergotropy.
This establishes the key result: starting from the most accessible and fully passive product Gibbs state, collective dissipation alone suffices to generate a steady-state with finite ergotropy, whose magnitude grows systematically with $N$.
\begin{figure}[t]
\center
\includegraphics[width=1\columnwidth]{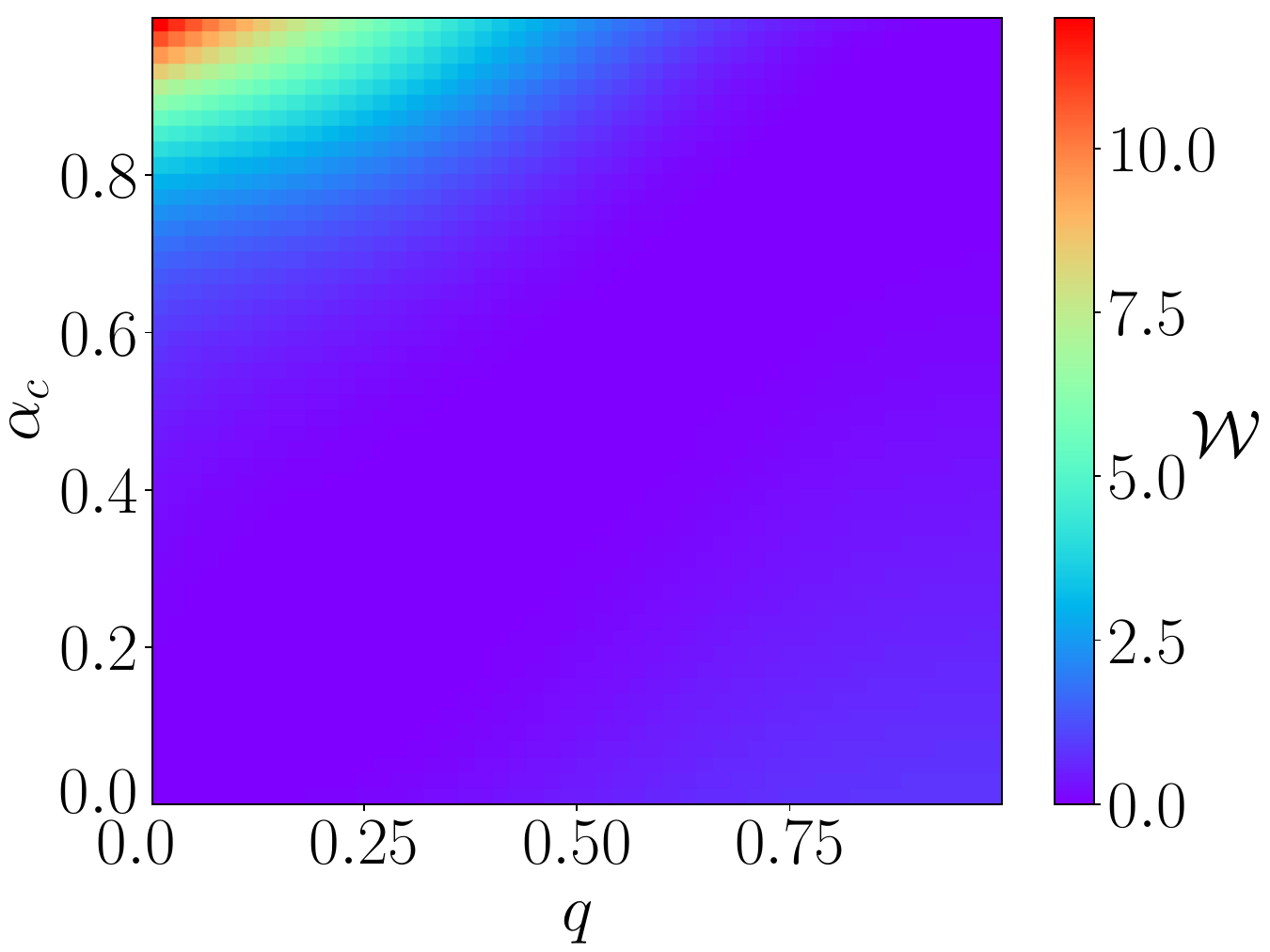}
\caption{\textbf{Steady-state ergotropy under ideal collective dissipation.}
Heat map of the steady-state ergotropy $\mathcal{W}$ for $N=26$ with perfectly collective coupling ($\eta=1$) versus the common-reservoir parameter $\alpha_c$. Away from the fine-tuned line $\alpha_c=q$, $\mathcal{W}>0$ and typically grows as either $\alpha_c\to 1$ or $q\to 1$.\justifying}
\label{Fig2}
\end{figure}
\begin{figure}[t]
    \centering
    \begin{subfigure}{0.49\textwidth}
        \centering
        \includegraphics[width=1\textwidth]{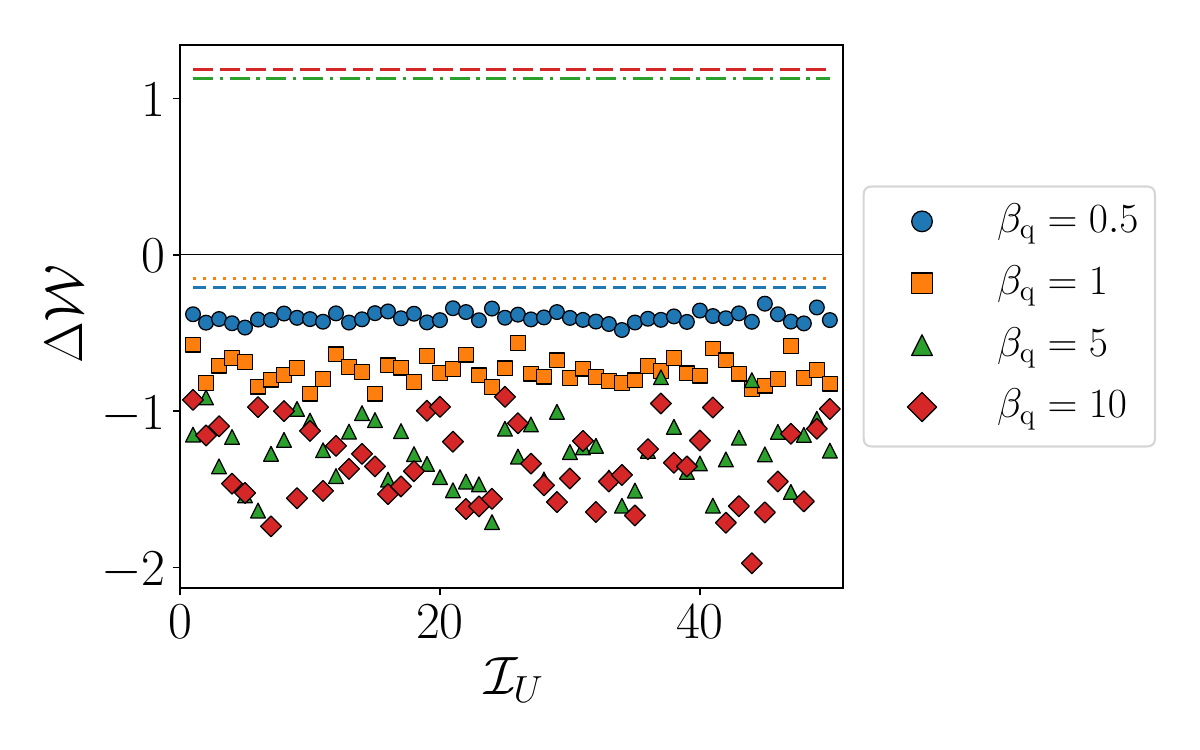}
        \caption{Ergotropic balance for $\beta_{c} = 0.01$}
        \label{DeltaW0.01}
    \end{subfigure}
    \hfill
    \begin{subfigure}{0.49\textwidth}
        \centering
        \includegraphics[width=1\textwidth]{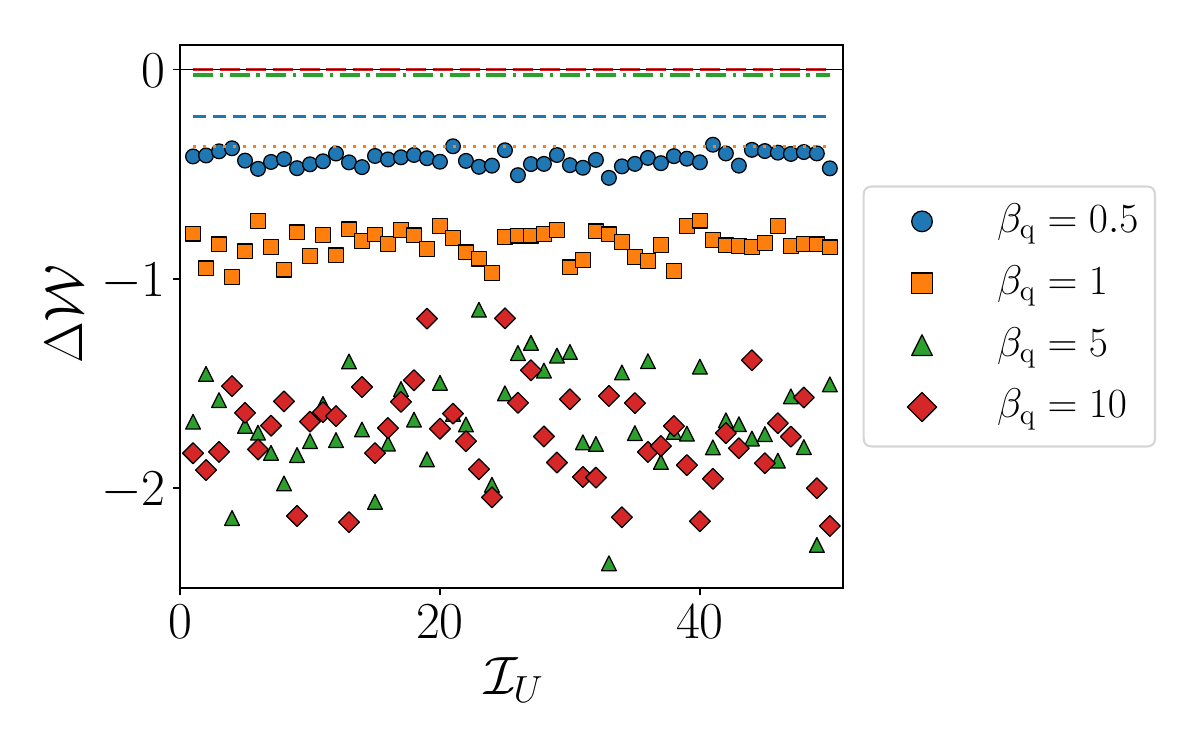}
        \caption{Ergotropic balance for $\beta_{c} = 10$}
        \label{DeltaW010}
    \end{subfigure}
    \caption{Comparison of the ergotropic balance $\Delta \mathcal{W}$ after applying random unitaries. $\Delta \mathcal{W}$ is plotted against the ordinal index of the unitary, denoted by $\mathcal{I}_{U}$, for different system temperatures $\beta_{q}$ and two environment temperatures $\beta_{c}$. In all cases, the ergotropic balance satisfies $\Delta \mathcal{W} < 0$. \justifying}
    \label{DeltaW}
\end{figure}
%

As a particular case, our QB can be initialized fully uncharged, i.e., ground state $|0\rangle^{\otimes N}$. As discussed above for initial states diagonal in the Dicke basis $\rho(t_0)=\sum_{i=0}^{N}p_{i}(t_0)\dicke{N}{i}\dickeD{N}{i}$, the equation of motion \eqref{ME} does not generate any coherences with respect to Dicke basis and since the ground state lies within the block with highest $j=\frac{N}{2}$, the evolution remains within the same block.  Consequently, at any given time, the state can be expressed as $\rho(t)=\sum_{i=0}^{N}p_{i}(t)\dicke{N}{i}\dickeD{N}{i}$. The ergotropy of the stationary state, in this case, reads (see Supplementary Note V) \cite{PhysRevE.102.042111}
\begin{align*}
    \mathcal{W} = N + \frac{\alpha_c}{1-\alpha_c}+\frac{N+\alpha_c}{\alpha_c^{1+N} - 1}.\numberthis
\end{align*}
In infinite temperature limit, $n\rightarrow\infty$ ($q\rightarrow 1$), we have $\lim_{q\rightarrow 1}\mathcal{W} = \frac{N}{2}+\frac{1}{N+1}-1$.

Now we address the problem of identifying the optimal initial state for ergotropy distillation, providing numerical evidence suggesting that not only preparing the system in a product Gibbs state may be experimentally friendly, but also optimal from the point of view of ergotropy extraction.  We initialize the system in a product of Gibbs states of fixed temperature, and apply a random global unitary in order to obtain a random initial state. We compare the ergotropy in the stationary state, diminished by the energetic cost of the unitary rotation, with ergotropy available without the rotation. Numerical analysis of the 4 qubit system case, as shown in Fig. \ref{DeltaW}, suggests that a product of Gibbs states is optimal, as no energetic benefit can be observed from a random unitary (with its energetic cost taken into account). We note that 4 qubit systems can exhibit coherent terms in the stationary state when initialized in a product state and rotated unitarily, therefore the whole structure Eq.~\eqref{SS} was exploitede in the analysis.  
Further details of this analysis are presented in Supplementary Note VI. 
\begin{figure}[t]
\center
\includegraphics[width=1\columnwidth]{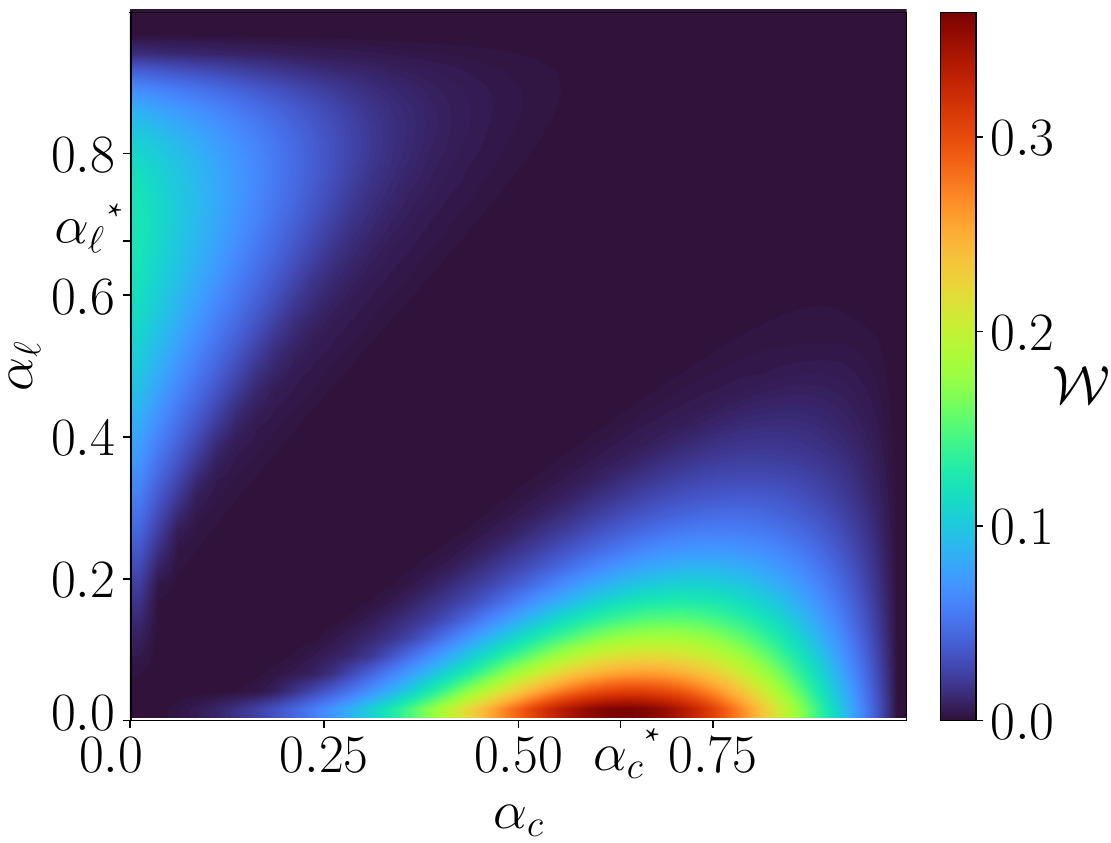}
\caption{\textbf{Finite-temperature optima under partial collectivity.}
Steady-state ergotropy $\mathcal{W}$ for $N=7$, $\gamma_r=1$ with collective fraction $\eta=0.9$ as a function of $(\alpha_c,\alpha_\ell)$. Unlike the ideal case, the maximum occurs at a finite $\alpha_c^\star<1$ and a non-zero $\alpha_\ell^\star>0$, reflecting a trade-off between interference-enabled collective pumping (favored by larger $\alpha_c$) and which-path information injected by local channels (growing with $\alpha_\ell$) that degrades interference.\justifying}
\label{Fig3}
\end{figure}

\subsection*{Robustness to local noise}

In realistic devices, deviations from perfectly collective coupling arise from inhomogeneous qubit–resonator couplings, spatial phase variations, and additional local decay channels. To capture these effects, we parametrize this using a collective coupling fraction $\eta\in[0,1]$, defined such that $\eta=1$ corresponds to perfectly collective dissipation (all qubits coupling identically to the reservoir), while $\eta=0$ corresponds to purely local dissipation with no collective enhancement. We then consider the GKLS generator \cite{breuer2002theory,audretsch2007entangled,gardiner2004quantum}
\begin{equation}\label{Leta}
    \mathcal{L}_\eta \;=\; \eta\,\mathcal{L}_{\mathrm{coll}} \;+\; (1-\eta)\,\mathcal{L}_{\mathrm{local}},
\end{equation}
where $\mathcal{L}_{\mathrm{coll}}$ is the ideal collective dissipator (built from $J_\pm$), and $\mathcal{L}_{\mathrm{local}}$ is a sum of independent single-qubit dissipators. Note that because GKLS generators form a convex cone, any positive linear combination such as Eq. \eqref{Leta} is again a valid physical GKLS generator in the weak-coupling, factorized derivation of the master equation when the qubits are on resonant with each other \cite{Mitchison_2018,Hofer_2017,Gonzalez2017,PhysRevA.97.062124}, and it physically corresponds to simultaneous coupling to two independent reservoirs \cite[Sec.~3.2]{breuer2002theory}. We take the bosonic reservoir parameters in the standard form $\alpha_x=\frac{n_x}{(n_x+1)}$ with $x\in\{c,\ell\}$ denoting the collective and local channels, respectively.

Deviations from perfect collectivity qualitatively reshape the landscape. In Fig.~\ref{Fig3} we observe a finite-temperature optimum in both control parameters: for $N=7$, $\gamma_r\equiv\frac{\gamma_\ell}{\gamma_c}=1$ at fixed $\alpha_\ell$ there exists an $\alpha_c^\star<1$ that maximizes $\mathcal{W}$, whereas at fixed $\alpha_c$ the optimum shifts to a non-zero $\alpha_\ell^\star>0$. The underlying physics of this counterintuitive effect is detailed in the following. Physically, $\alpha_c$ governs the strength of collective pumping along the irrep of $j$, where indistinguishable paths interfere constructively (superradiant channels) and degeneracy-protected coherences survive; this interference structure is the resource that makes ergotropy distillable at stationarity. However, the local channel (weighted by $1-\eta$ and parametrized by $\alpha_\ell$) provides which-path information that erodes these interference effects. Consequently, when $\eta<1$ there is a trade-off: too small $\alpha_c$ underpopulates the symmetric subspace, while too large $\alpha_c$ in the presence of local noise overheats and undermines the interference-protected structure; likewise, a small but non-zero $\alpha_\ell$, with the help of interference caused by the collective reservoir, can facilitate transport across excitation sectors, but excessive local noise suppresses $\mathcal{W}$. The activation lobes in Fig.~\ref{Fig3} at moderate–high $\alpha_c$ and small–moderate $\alpha_\ell$ captures precisely this compromise. Numerics, and the analysis that we refer to Supplementary Note VII reveal a robust activation lobe along the $\alpha_c$-axis in Fig.~\ref{Fig3}: at fixed $\alpha_\ell$, the common reservoir generates essentially a constant amount of ergotropy that persists for all $0<\eta<1$, provided any additional dephasing remains below the collective rate $\gamma_c$. Equivalently, the ergotropy produced by the collective channel at fixed $\alpha_\ell$ is only weakly dependent on the collective fraction $\eta$, demonstrating robustness against disorder and noise. The activation point shifts with collectivity—lower $\eta$ requires higher $\alpha_c$ (higher temperatures)—but for $\eta\gtrsim0.8$ the optimal value $\alpha^*_c$ is only weakly affected  (see Supplementary Note VIIB). We further observe a clear activation lobe along the $\alpha_\ell$-axis at $\alpha_c=0$: for $N\gtrsim 5$ and $\eta\gtrsim 0.5$, a small–to–moderate local temperature ($\alpha_\ell$) robustly activates ergotropy. The activation magnitude increases with $N$, consistent with stronger collective interference within the battery that facilitates energy transfer.

So, why “overheating” is harmless at $\eta=1$ but harmful when $\eta<1$?
The short answer is that overheating only hurts when there exists a channel that turns energy into which-path information. In the perfectly collective limit ($\eta=1$) all dissipation proceeds through the permutation-symmetric jumps $J_\pm=\sum_i\sigma_\pm^{(i)}$. Because the environment never resolves which qubit emitted or absorbed an excitation, the dynamics retains indistinguishability at all temperatures $\alpha_c$, and the interference structure encoded by the bright/dark Dicke channels is preserved. Increasing $\alpha_c$ merely enhances upward transitions (superabsorption) within irreps, without introducing any which-path information that could wash out phases. By contrast, when a local component is present ($\eta<1$), the jumps $\sigma_\pm^{(i)}$ act as which-path–resolving events: they mix total-spin sectors, restore primitivity (unique steady-state), and progressively suppress the super/subradiant contrast. In this regime, making the common reservoir too hot amplifies the rate at which the extra excitations are processed by local channels, thereby undermining interference-protected structure and reducing ergotropy—hence the finite-temperature optimum in Fig.~\ref{Fig3} (see Supplementary Note VIII for analytical detail).

A further observation of the steady-state ergotropy, as presented in detail in Supplementary Note VIIC, is that for weak local dissipation ($\gamma_\ell\ll\gamma_c$), the activation lobe along the $\alpha_\ell$-axis is broad and resilient: suppression sets in only at relatively large local temperatures $\alpha_\ell$; increasing $\gamma_r$ quenches this lobe already at moderate $\alpha_\ell$, while the position and width of the $\alpha_c$ lobe are nearly unchanged. This asymmetry is analytically explained in the Supplementary Note VIII: local, which-path jumps drain weight from the bright sector at a rate $\propto \gamma_\ell(2n_\ell{+}1)$, suppressing the $\alpha_\ell$ lobe, whereas the collective detailed balance that sets the $\alpha_c$ optimum remains largely intact.

As for the transient charging dynamics near and away from the optimum,
Fig.~\ref{ErgoOpt} shows the transient at the optimal collective temperature $\alpha_c^\star$ (fixed $\alpha_\ell$), where the steady-state carries non-zero ergotropy. For $\gamma_c t\!\ll\!1$ the traces nearly overlap: indistinguishable collective jumps $J_\pm$ coherently pump population up the irreps, so the initial rise of $\mathcal W(t)$ is essentially insensitive to $\gamma_r$. At later times, site–resolved jumps $\sigma_\pm^{(i)}$ inject which-path information and gradually erode this interference; larger $\gamma_r$ speeds up the wash-out and slightly lowers the plateau, whereas small $\gamma_r$ leaves a long interference-dominated window before relaxation. (A detailed $N$–dependent study is given in Supplementary Note IX.)

For comparison, Fig.~\ref{Fig4} reports a non-optimal point ($\alpha_c=0.9,\ \alpha_\ell=0.5$) where the steady-state is passive, $\mathcal W(\infty)=0$. The transient still displays the same mechanism—an interference-enabled rise followed by wash-out—but with a striking difference in scale: the optimal case saturates at a robust steady value $\mathcal W_{\rm SS}\!\approx\!0.55$ (Fig.~\ref{ErgoOpt}), whereas the non-optimal case exhibits a large overshoot up to $\mathcal W_{\rm peak}\!\approx\!2.5$ (Fig.~\ref{Fig4}) before decaying to zero. The peak is produced by rapid collective pumping that briefly outpaces local which-path mixing; it is not stationary. However, harvesting such a transient statee would require time-synchronized extraction, while operation near $\alpha_c^\star$ provides a repeatable, timing-insensitive steady ergotropy set by collective detailed balance.
\begin{figure}[t]
\includegraphics[width=1\columnwidth]{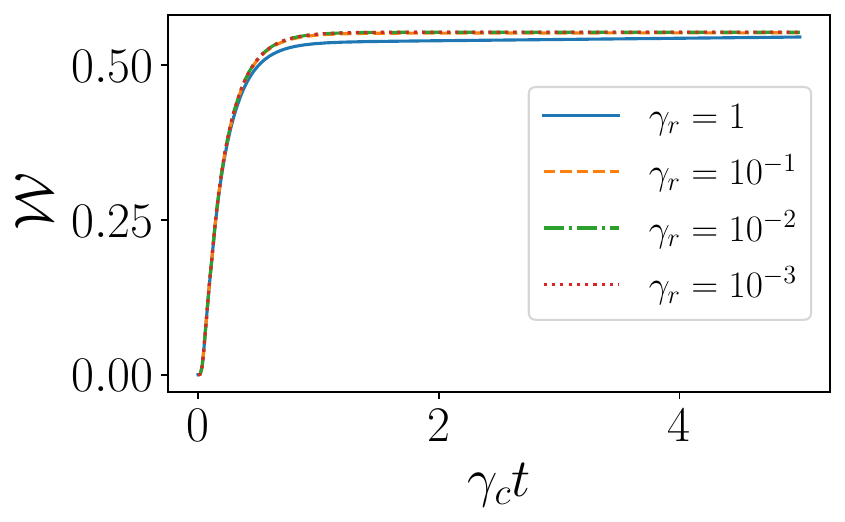}
    \captionsetup{justification=justified}
    \caption{\textbf{Charging dynamics near the activation point.}
    Ergotropy $\mathcal W(t)$ versus rescaled time $\gamma_c t$ for $N=10$, $\eta=0.9$ and various dissipation ratios $\gamma_r$. The collective bath is set close to its optimal value $\alpha_c^\star$; $\alpha_\ell=0$ is fixed across curves.\justifying} 
    \label{ErgoOpt}
\end{figure}
\begin{figure}[t]
\center
\includegraphics[width=1\columnwidth]{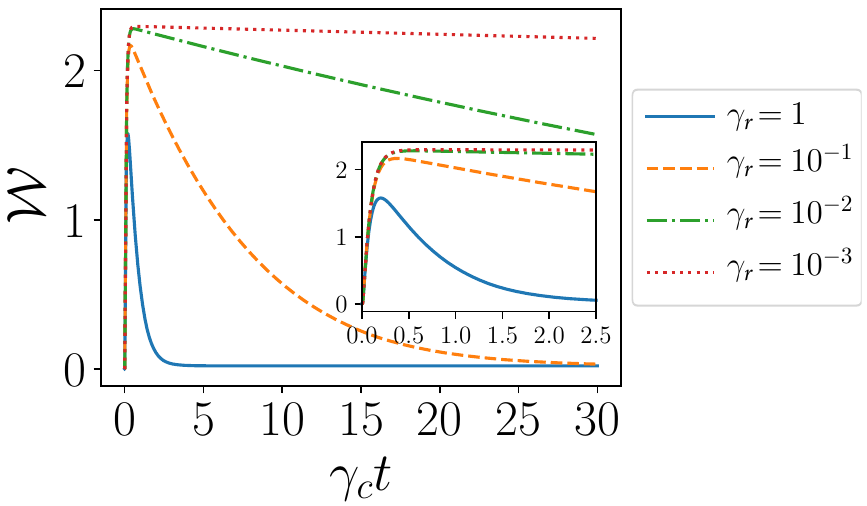}
\caption{\textbf{Transient charging and memory wash-out.}
Time evolution of ergotropy $\mathcal{W}(t)$ for $N = 10, \eta = 0.6$, $\alpha_c = 0.9$, $\alpha_\ell = 0.5$ (non-optimal operating point) for several ratios $\gamma_r$. For small $\gamma_r$ the interference-erasure timescale is long, so the collective channel dominates over an extended window before eventual relaxation.\justifying}
\label{Fig4}
\end{figure}

\subsection*{Discussion}

The present results identify a pathway for charging a QB from a fully passive initial state using only collective coupling to a thermal reservoir. A central question is whether such a mechanism can be realized experimentally with present-day platforms. Circuit QED \cite{RevModPhys.93.025005} provides a particularly promising avenue. Ensembles of superconducting transmon qubits \cite{PhysRevLett.103.083601,krinner2019engineering} coupled to a common microwave resonator routinely achieve collective coupling strengths $\frac{g}{(2\pi)}$ in the range of $10$–$50\,\mathrm{MHz}$, with relaxation rates $\frac{\gamma}{(2\pi)} \sim 0.1$–$1\,\mathrm{MHz}$ \cite{PhysRevApplied.17.044016,KjaergaardAnnual,mlynek2014observation}. The thermal occupation of the resonator mode can be tuned simply by injecting incoherent microwave noise at the resonant frequency, thereby controlling the reservoir parameter $\alpha_c = \frac{n_c}{(n_c+1)}$ over a broad range \cite{PhysRevResearch.6.023262,PhysRevX.10.041054}.

A practical experiment would proceed by first preparing each qubit in a local Gibbs state at inverse temperature $\beta_{\mathrm{q}}$ using engineered thermal reservoirs. The ensemble is then collectively coupled to the noisy resonator for a controlled interaction time, after which the stored energy $E$ can be reconstructed using either quantum state tomography or energy-resolved population measurements. Work extraction can be demonstrated by applying an approximate optimal unitary, implemented through sequences of microwave pulses, to transfer the stored energy into a designated load mode or auxiliary resonator. The difference in measured energy before and after this unitary provides a direct measure of the accessible ergotropy.  

Although our analytical framework focuses on the steady-state, numerical simulations of Eq.~\eqref{ME} show that ergotropy activation occurs on timescales of order $(\gamma N)^{-1}$ in the strongly collective regime. For typical circuit QED parameters, this corresponds to charging within microsecond to sub-microsecond windows, yielding power densities competitive with the fastest QB proposals to date. The absence of a coherent drive makes the protocol particularly attractive for integration into autonomous devices: the same noisy or thermal field that carries a signal—for example in a quantum sensor—could simultaneously charge the QB, which then powers downstream tasks such as readout or error correction. This dual role of the environment is a distinctive advantage in uncontrolled or resource-limited settings, ranging from cryogen-free laboratories to space-based platforms.  

At a conceptual level, the underlying physics of the scheme rests on interference-enabled dissipation. Collective jumps through the operators $J_\pm$ make emission and absorption events indistinguishable, producing super- and subradiant interference in the Dicke basis and preserving coherences within degenerate $j$-manifolds at stationarity. These protected coherences, inaccessible under purely local dissipation, allow a purely thermal reservoir to sustain non-passive steady states with finite ergotropy. Local noise acts as a which-path marker, restoring primitivity of the semigroup and progressively reducing both transient and steady-state ergotropy. The interplay between collective and local channels generates the finite-temperature optima and “memory wash-out’’ behavior observed in Figs.~\ref{Fig3}–\ref{Fig4}.  

For perfectly collective coupling ($\eta = 1$), raising the reservoir temperature enhances upward transitions along the same collective ladders without introducing which-path information, so interference survives and ergotropy can even grow. When $\eta < 1$, local noise mixes total-spin sectors and reduces the contrast between super- and subradiant pathways. The result is a competition: insufficient collective pumping underpopulates bright ladders, while excessive heating in the presence of local noise enhances dephasing and phase randomization. This trade-off yields finite-temperature optima for both common- and local-reservoir parameters. Transiently, $\mathcal{W}(t)$ grows as interference builds and bright ladders fill, but as $\gamma_r$ increases, memory of the initial state is erased more rapidly and both the peak and long-time ergotropy decrease. This “memory wash-out’’ provides a quantitative signature of the erosion of interference by which-path noise.

\textit{Applications—DFSs and QEC.}
Collective coupling naturally supports decoherence-free (and subradiant) subspaces, while the same environment autonomously charges a nearby work buffer. This suggests a hardware-efficient layer for fault tolerance: (i) rapid ancilla reset using the locally stored work \cite{Geerlings2013DDROP,Magnard2018AllMicrowaveReset,Zhou2021ParametricReset}; (ii) dissipative stabilizer mechanisms
\cite{Verstraete2009DissipationDriven,Barreiro2011OpenSystemSimulator,Mueller2011StabilizerPumping,Reiter2016DissipativeEntanglement}; and (iii) stabilization of code spaces/DFSs by operating within collectively protected manifolds \cite{Zanardi1997NoiselessCodes,Lidar1998DFSforQC,Bacon2000UFTonDFS,Altepeter2004TwoQubitDFS,Dicke1954Coherence,Gross1982SuperradianceReview,Guerin2016Subradiance}. Any path of this type  needs a proper control of the time scales in the designed dynamics. 
The robustness we observe
against partial collectivity and dephasing supports this perspective especially  in the last part of the analysis where the observed transient memory wash-out can serve as a   time-scale guide for reset and readout windows. 
Compatibility with circuit- and cavity-QED platforms is well established \cite{Shankar2013AutonomousBell,Leghtas2015TwoPhotonDissipation,Magnard2018AllMicrowaveReset,RevModPhys.96.031001,Campaioli2018QBChapter}.

In summary, these results show that interference can serve as a genuine thermodynamic resource. The incoherent-charging paradigm we propose is compatible with cavity and waveguide QED, atomic arrays, and superconducting circuits, and integrates naturally with reservoir engineering techniques. By exploiting indistinguishability rather than initial coherence as the enabling resource, this approach provides a practical and noise-resilient route to scalable quantum batteries powered directly by simple thermal environments.

\textit{Acknowledgments.-}
BA, ABR and PH acknowledge support from IRA Programme (project no. FENG.02.01-IP.05-0006/23) financed by the FENG program 2021-2027, Priority FENG.02, Measure FENG.02.01., with the support of the FNP. S.B. acknowledges funding by the Natural Sciences and Engineering Research Council of Canada (NSERC) through its Discovery Grant, funding and advisory support provided by Alberta Innovates through the Accelerating Innovations into CarE (AICE) -- Concepts Program, and support from Alberta Innovates and NSERC through Advance Grant. PM acknowledges support by the Polish National Agency for Academic Exchange (NAWA), under Strategic Partnerships Programme, project number BNI/PST/2023/1/00013/U/00001.


\clearpage
\onecolumngrid  

\begin{center}
\textbf{\large Supplementary Note for "Harnessing Environmental Noise for Quantum Energy Storage"}
\end{center}

\section{Microscopic system--bath coupling and conditions for collectivity}
\label{sec:SM_collective}

\subsection{Hamiltonian and mode functions}
We consider $N$ identical two-level systems (TLS) at fixed positions $\{\mathbf r_i\}_{i=1}^N$, transition frequency $\omega_0$, and transition dipole
$\mathbf d=\langle 1|\hat{\mathbf d}|0\rangle$, coupled to a bosonic thermal reservoir at temperature $T_R$.
Here $\hat{\mathbf d}$ denotes the electric-dipole operator and the vector $\mathbf d$ is the corresponding transition matrix element. For consistency we drop hats on operators henceforth, and all quantities without hats should be understood as operators unless explicitly stated otherwise, whereas $\mathbf d$ denotes the transition dipole matrix element defined above.
In the dipole approximation, the light--matter interaction is
\begin{equation}
H_\mathrm{I} \;=\; -\sum_{i=1}^N \mathbf d^{(i)}\!\cdot\!\mathbf E(\mathbf r_i,t)
\;=\; -\sum_{i=1}^N \big(\mathbf d\,\sigma_+^{(i)}+\mathbf d^{\,*}\sigma_-^{(i)}\big)\!\cdot\!
\big(\mathbf E^{(+)}(\mathbf r_i,t)+\mathbf E^{(-)}(\mathbf r_i,t)\big),
\label{eq:SM:HI:dipole}
\end{equation}
where $\mathbf E=\mathbf E^{(+)}+\mathbf E^{(-)}$ is the electric-field operator in the interaction picture with respect to the free-field Hamiltonian, and
$\mathbf E^{(-)}=(\mathbf E^{(+)})^\dagger$.
The TLS operators are defined such that $\sigma_+^{(i)}|0\rangle_i = |1\rangle_i$ and $\sigma_-^{(i)}|1\rangle_i = |0\rangle_i$.
Note that we assume $\hbar = 1$ as in the main text.
In a mode expansion with quantization volume $V$ (assuming periodic boundary conditions),
\begin{equation}
\mathbf E^{(+)}(\mathbf r,t)
= i\sum_{\mathbf k,\lambda}\sqrt{\frac{\omega_k}{2\varepsilon_0 V}}\;
\mathbf e_{\mathbf k\lambda}\,u_{\mathbf k\lambda}(\mathbf r)\,a_{\mathbf k\lambda}\,e^{-i\omega_k t},
\qquad
[a_{\mathbf k\lambda},a_{\mathbf k'\lambda'}^\dagger]=\delta_{\mathbf k\mathbf k'}\delta_{\lambda\lambda'},
\label{eq:SM:Eplus}
\end{equation}
with polarization unit vectors $\mathbf e_{\mathbf k\lambda}$ and mode functions $u_{\mathbf k\lambda}(\mathbf r)$ (for plane waves, $u_{\mathbf k\lambda}(\mathbf r) = e^{i\mathbf k\cdot\mathbf r}$).
Applying the rotating-wave approximation (RWA) to Eq.~\eqref{eq:SM:HI:dipole} gives
\begin{equation}
H_\mathrm{I}
= \sum_{\mathbf k,\lambda}\hbar g_{\mathbf k\lambda}
\Bigg[\sum_{i=1}^N \sigma_+^{(i)}\,u_{\mathbf k\lambda}(\mathbf r_i)\Bigg] a_{\mathbf k\lambda}
+\text{h.c.},
\qquad
g_{\mathbf k\lambda}
= -i\sqrt{\frac{\omega_k}{2\varepsilon_0\hbar V}}\;\mathbf d\!\cdot\!\mathbf e_{\mathbf k\lambda},
\label{eq:SM:HI:RWA}
\end{equation}
which is the standard starting point for the quantum--statistical derivation of the Born--Markov--secular master equation \cite{breuer2002theory,Agarwal1974}. The thermal reservoir at temperature $T_R$ is specified by
$\langle a_{\mathbf k\lambda}^\dagger a_{\mathbf k'\lambda'}\rangle
= n(\omega_k,T_R)\,\delta_{\mathbf k\mathbf k'}\delta_{\lambda\lambda'}$,
with $n(\omega,T_R)=1/(e^{\omega/(k_BT_R)}-1)$ the Bose--Einstein distribution. For $T_R = 0$, we have $n(\omega,0)=0$, corresponding to the vacuum state with spontaneous emission only.

\subsection{Born--Markov--secular master equation at finite temperature}
Starting from the RWA interaction Hamiltonian in Eq.~\eqref{eq:SM:HI:RWA}, tracing out the reservoir under the standard Born--Markov and secular approximations \cite{breuer2002theory} yields the master equation
\begin{align}
\dot\rho
&=-i\,[H_S,\rho]
+\sum_{i,j=1}^N \gamma^{(\downarrow)}_{ij}\!\left(\sigma^{(i)}_-\rho\,\sigma^{(j)}_+ -\tfrac12\{\sigma^{(j)}_+\sigma^{(i)}_-,\rho\}\right)
+\sum_{i,j=1}^N \gamma^{(\uparrow)}_{ij}\!\left(\sigma^{(i)}_+\rho\,\sigma^{(j)}_- -\tfrac12\{\sigma^{(j)}_-\sigma^{(i)}_+,\rho\}\right),
\label{eq:SM_ME}
\end{align}
where $H_S = \omega_0\sum_{i=1}^N \sigma_{+}^{(i)}\sigma_{-}^{(i)}$ is the free system Hamiltonian. The rates $\gamma^{(\downarrow)}_{ij}$ and $\gamma^{(\uparrow)}_{ij}$ describe collective emission and absorption, respectively, with $\gamma^{(\downarrow\uparrow)}_{ii}$ the single-qubit rate and the off-diagonal elements $\gamma^{(\downarrow\uparrow)}_{ij}=\gamma^{(\downarrow\uparrow)}_{ji}$ ($i \neq j$) encoding interference effects. The secular approximation eliminates terms oscillating at frequencies $\pm 2\omega_0$, resulting in the separated emission and absorption channels. These rates are given by
\begin{equation}
    \gamma^{(\downarrow)}_{ij}
    = \frac{2}{\hbar^{2}}\, \operatorname{Re}\!\int_{0}^{\infty}\! d\tau\, e^{i\omega_0\tau}\, \sum_{\alpha,\beta\in\{x,y,z\}}\big\langle E^{(+)}_{\alpha}(\mathbf r_i,\tau)\,E^{(-)}_{\beta}(\mathbf r_j,0)\big\rangle\, d_{\alpha} d_{\beta},\qquad
\gamma^{(\uparrow)}_{ij} = e^{-\hbar\omega_0/(k_BT_R)}\, \gamma^{(\downarrow)}_{ij},\label{eq:Gammaij}
\end{equation}
where $d_\alpha$ is the $\alpha$-th Cartesian component of the transition dipole moment $\mathbf{d}$, and the real part ensures the rates are real and symmetric. The second relation is the detailed balance condition. For the thermal reservoir specified in the previous section, we have \cite{damanet2016competition}
\begin{equation}
\gamma^{(\downarrow)}_{ij}=(n(\omega_0,T_R)\!+\!1)\,g_{ij},\qquad
\gamma^{(\uparrow)}_{ij}=n(\omega_0,T_R)\,g_{ij},
\label{eq:SM_updown}
\end{equation}
where $n(\omega_0,T_R)$ is the Bose--Einstein distribution evaluated at the transition frequency, and $g_{ij}$ is the vacuum contribution defined by
\begin{equation}
g_{ij} = 2\, \operatorname{Re}\!\int_{0}^{\infty}\! d\tau\, e^{i\omega_0\tau}\, \sum_{\alpha,\beta\in\{x,y,z\}}\big\langle E^{(+)}_{\alpha}(\mathbf r_i,\tau)\,E^{(-)}_{\beta}(\mathbf r_j,0)\big\rangle_{\text{vac}}\, d_{\alpha} d_{\beta}.
\label{eq:gij_def}
\end{equation}
All spatial dependence of the dissipator is encoded in the real symmetric matrix $g_{ij}$. The construction and thermal generalization in Eqs.~\eqref{eq:SM_ME}--\eqref{eq:SM_updown} follow the quantum-statistical projection-operator route summarized in Ref.~\cite{Agarwal1974}.

\subsection{Free-space kernels and asymptotic regimes}

For an electromagnetic reservoir in free space, $g_{ij}$ depend only on $x_{ij}\equiv k_0 r_{ij}$ with $r_{ij}=|\mathbf r_i-\mathbf r_j|$, $k_0=\omega_0/c$, and on the angle $\alpha_{ij}$ between $\mathbf r_{ij}$ and $\mathbf d$.
Using the notation of Ref. \cite{damanet2016competition}, one has (with $g_{ii}=\gamma_0$ the single-emitter rate)
\begin{align}
g_{ij} &= \frac{3\gamma_0}{2}\Big[(1-3\cos^2\alpha_{ij})\Big(\frac{\cos x_{ij}}{x_{ij}^2} - \frac{\sin x_{ij}}{x_{ij}^3}\Big)
+ (1-\cos^2\alpha_{ij})\,\frac{\sin x_{ij}}{x_{ij}}\Big]. \label{eq:SM_gij}
\end{align}
with $\gamma_0 = (\omega_0^3 d_{eg}^2)/(3\pi\hbar\epsilon_0 c^3)$ the single-atom spontaneous emission rate. Two limiting regimes control collectivity \cite{damanet2016competition}:
\begin{itemize}
\item \textbf{Small-sample (Dicke) limit:} if $x_{ij}\ll 1$ for all pairs $(i,j)$, then $g_{ij}\to \gamma_0$ and the dissipator \eqref{eq:SM_ME} \emph{closes} on collective operators,
\begin{equation}
    \mathcal D_{\rm coll}[\rho] = \gamma_0(n_R\!+\!1)\Big(J_-\rho J_+ - \tfrac12\{J_+J_-,\rho\}\Big)
    + \gamma_0 n_R\Big(J_+\rho J_- - \tfrac12\{J_-J_+,\rho\}\Big),
    \qquad J_\pm\equiv \sum_{i=1}^N \sigma^{(i)}_\pm.
\label{eq:SM_collectiveD}
\end{equation}
\item \textbf{Large-sample limit:} if $x_{ij}\gg 1$ (or if the angular average is effective), then $g_{ij}\to 0$ for $i\neq j$; emitters relax independently and cooperative pathways vanish.
\end{itemize}

\subsection{Validity conditions and role of dispersion}

The derivation of Eq. \eqref{eq:SM_ME} assumes weak coupling ($\gamma_0\ll\omega_0$), a short reservoir correlation time relative to system timescales (Markov limit), and spectral separation of Bohr frequencies (secular approximation) \cite{Agarwal1974}. As analyzed in Ref. \cite{damanet2016competition}, \emph{finite-size} phase factors and \emph{dipole--dipole} dispersion compete: the former suppress $g_{ij}$ off-diagonals as $x_{ij}$ grows, while the latter introduces coherent exchange that can break the ideal collective (Dicke) symmetry.

\subsection{Mapping to the collective interaction used in the main text}
Under the identical-coupling conditions of the small-sample limit (all $x_{ij}\ll 1$ and negligible inhomogeneity in $f_{ij}$), the mode functions can be taken equal across sites, $u_{k\lambda}(\mathbf r_i)\simeq u_{k\lambda}(\mathbf r_j)$ for all $i,j$, and Eq. \eqref{eq:SM:HI:dipole} reduces to
\begin{equation}
    H_\mathrm{I} \simeq \sum_{\mathbf{k},\lambda}\hbar g_{\mathbf{k}\lambda}\, J_+\, a_{\mathbf{k}\lambda}+\text{h.c.},
\end{equation}
which, when grouped by frequency, yields the collective form used in Eq. (2) of the main text,
\begin{equation}
    H_\mathrm{I} = \sum_{\mathbf k,\lambda}\Big(\sum_{i=1}^N g_{\mathbf k\lambda}\,\sigma^{(i)}_+\Big) a_{\mathbf k\lambda} + \text{h.c.}
\end{equation}
The corresponding dissipator is exactly Eq. \eqref{eq:SM_collectiveD} with upward/downward rates weighted by $n$ and $n+1$. This identifies the parameter regime where the common reservoir acts through \emph{indistinguishable} jump operators $J_\pm$, which is the prerequisite for the interference-enabled charging mechanism studied in the main text.

\section{Steady State Structure}\label{SteadyStructure}
\subsection{Preface}

In this section of the Supplementary Material we provide the analytical expression (and examples) of the steady-state of the system under symmetric dissipative dynamics given by the GKLS master equation 
\begin{eqnarray} \label{MESup}
    \dot{\rho} = \mathcal{L}[\rho].
\end{eqnarray}
We begin by introducing the system Hamiltonian and the collective angular momentum basis \( |j, m, \sigma \rangle \)—which is also the Schur basis arising from Schur-Weyl duality. This symmetry-adapted basis not only respects the permutation invariance of the Hamiltonian but also provides a convenient framework for describing Lindbladian dynamics involving collective jump operators. Subsequently, we present explicit examples of analytically obtained steady states for small values of \( N \) (two-qubit and three-qubit scenarios).

\subsection{The Model}

We consider a system of $N$ identical, non-interacting two-level systems (TLSs). The Hamiltonian is given by:
\begin{equation}
    H_B = \sum^{N}_{i=1}\omega \sigma_{+}^{(i)}\sigma_{-}^{(i)}
\end{equation}
where $\sigma_{+}^{i} = \ket{1_i}\bra{0_i}$ and $\sigma_{-}^{i} = \ket{0_i}\bra{1_i}$ are the raising and lowering operators for the $i$-th TLS, with $\ket{0_i}$ and $\ket{1_i}$ denoting the ground and excited states, respectively. The Hilbert space for each TLS is $\mathbb{C}^2$, and the total Hilbert space is the tensor product $\mathcal{H} = (\mathbb{C}^2)^{\otimes N}$ with dimension $2^N$.

To exploit the inherent permutation symmetry of the system, we adopt the collective angular momentum basis (also known as the \textbf{Extended Dicke basis}) $\{\ket{j,m,\sigma}\}$. This basis simultaneously diagonalizes the total spin operators $\vec{J}^2$ and $J_z$, defined as:
\begin{align}
    J_z &= \frac{1}{2} \sum_{i=1}^{N} \sigma_{z}^{(i)}, \\
    \vec{J}^2 &= J_x^2 + J_y^2 + J_z^2,
\end{align}
where $\sigma_i^z = \ket{1_i}\bra{1_i} - \ket{0_i}\bra{0_i}$ is the Pauli-$z$ operator. We also define the ladder operators (which are our Lindblad operators) $J_{\pm}$ as follows:
\begin{equation}
    J_{\pm}= \sum_{i} \sigma_{\pm}^{i}.
\end{equation}
The basis states are labeled by two quantum numbers: the total angular momentum $j$,
\begin{equation}
    j \in \begin{cases} 
    \{0, 1, \dots, N/2\} & \text{if}\ N\ \text{ even}, \\
    \{1/2, 3/2, \dots, N/2\} & \text{if}\ N\ \text{odd}.
\end{cases}
\end{equation}
the magnetic angular momentum $m$ that takes values
\begin{equation}
    -j \leq m \leq j.
\end{equation}
A basis state is denoted $\ket{j, m, \sigma}$. The latin index $\sigma$ here is not a quantum number, but it denotes the degeneracy ranging $1\leq \sigma\leq \nu_{j}$ where $\nu_j = \binom{N}{\frac{N}{2}-j} - \binom{N}{\frac{N}{2}-j-1}, 
\quad \text{with } \binom{N}{k<0}=0.$ for a given $j$.
The reason this basis $\{\ket{j, m, \sigma}\}$ comes handy in solving the problem roots to the \textbf{Schur-Weyl duality}, which decomposes $\mathcal{H}$ into irreducible representations (irreps) of $\text{SU}(2)$ \cite{fulton1991representation,goodman2009symmetry,KEYL2002431}:
\begin{equation}
    \mathcal{H}^{(N)} \cong \bigoplus_{j} \mathcal{V}_j \otimes \mathcal{U}_j.
\end{equation}
Here $\mathcal{V}_j$ carries the irrep of the symmetric group $S_N$ (indexed by $\sigma$) with multiplicity $\nu_j$ and $\mathcal{U}_j$ carries the $SU(2)$ irrep of dimension $2j+1$ (indexed by $m$).

\subsection{Worked Example: Schur--Weyl Decomposition for a Two-Qubit System}

Consider the two-qubit state
\begin{equation}
    \ket{\psi} = \frac{\ket{00} + \ket{01} - \ket{10}}{\sqrt{3}}.
\label{eq:ex-state}
\end{equation}
For $N=2$, the Schur--Weyl (or Dicke) basis decomposes the Hilbert space into the triplet ($j=1$) and singlet ($j=0$) sectors:
\[
\ket{1,1} = \ket{11}, \qquad
\ket{1,0} = \tfrac{1}{\sqrt{2}}(\ket{10} + \ket{01}), \qquad
\ket{1,-1} = \ket{00}, \qquad
\ket{0,0} = \tfrac{1}{\sqrt{2}}(\ket{10} - \ket{01}).
\]
Expressing $\ket{01}$ and $\ket{10}$ in this basis,
\[
\ket{01} = \tfrac{1}{\sqrt{2}}(\ket{1,0} + \ket{0,0}), 
\qquad
\ket{10} = \tfrac{1}{\sqrt{2}}(\ket{1,0} - \ket{0,0}),
\]
one finds that the $\ket{1,0}$ contribution cancels, leading to the exact decomposition
\begin{equation}
    \ket{\psi} = 
    \frac{1}{\sqrt{3}}\,\ket{1,-1}
    + \sqrt{\frac{2}{3}}\,\ket{0,0}
    \qquad\Rightarrow\quad
    p_{j=1} = \frac{1}{3}, \quad p_{j=0} = \frac{2}{3}.
\label{eq:ex-decomp}
\end{equation}
Thus, the state carries a weight of $1/3$ in the triplet ($m=-1$) subspace and $2/3$ in the singlet.
The corresponding density operator is
\begin{equation}
    \rho_\psi = \ket{\psi}\!\bra{\psi}
    = \frac{1}{3}\ket{1,-1}\!\bra{1,-1}
    + \frac{\sqrt{2}}{3}
      \Big(\ket{1,-1}\!\bra{0,0} + \ket{0,0}\!\bra{1,-1}\Big)
    + \frac{2}{3}\ket{0,0}\!\bra{0,0}.
\label{eq:ex-rho}
\end{equation}
The off-diagonal terms represent coherences between subspaces with different total angular momentum quantum numbers ($j=1$ and $j=0$).

\subsection{General Intuition for the structure of steady-state}

Understanding the structure of the steady-state density matrix hinges on the GKLS master equation \eqref{MESup}, which governs the system evolution. The dissipator is pivotal in selectively removing coherences between states distinguished by different total angular momentum $J^2$. Notably, the lowering and raising operators $J_{\pm}$ commute with $J^2$ ($[J_{\pm}, J^2] = 0$), ensuring that the dissipative processes described by the jumps conserve the total angular momentum $J^2$. This framework allows us to express the dissipator term as
\begin{align*} \label{DissipatorAngular}
    \mathcal{L}[\cdot] = \gamma(n+1)\left(J_{-} \cdot J_{+} - \frac{1}{2}\{J_{+}J_{-}, \cdot\}\right) + \gamma n\left(J_{+}\cdot J_{-} - \frac{1}{2}\{J_{-}J_{+}, \cdot\}\right),\numberthis
\end{align*}
where the lowering operator decomposes into irreducible blocks,
\[
J_- = \bigoplus_{j}\bigoplus_{\sigma=1}^{\nu_j} L_j^{(\sigma)},
\]
where \(\nu_j\) is the multiplicity of the \(j\)-irrep and 
\(\{\ket{j,m,\sigma}\}_{m=-j}^{j}\) is the corresponding eigenbasis.
Each \(L_j^{(\sigma)}\) acts as the ladder (lowering) operator on the 
\((2j+1)\)-dimensional magnetic subspace of the \(\sigma\)-th copy,
\[
L_j^{(\sigma)}\ket{j,m,\sigma} \propto \ket{j,m-1,\sigma}.
\]
Hence, \(J_\pm\) change the magnetic quantum number \(m\) 
but preserve the total angular momentum \(j\) and the degeneracy index \(\sigma\).
. Crucially, this structure implies that the dissipator $\mathcal{L}$ projects the density matrix $\rho$ onto distinct $j$ subspaces, leading to the selective elimination of off-diagonal coherences between states with different $j$ values. In simpler terms, when the system’s density matrix is expressed in the ${|j,m,\sigma\rangle}$ basis, each block with angular momentum $j$ undergoes a thermalization process where diagonal elements thermalize and off-diagonal elements vanish. But for blocks with degeneracy $\sigma$, off-diagonal elements with the same $m$ do not vanish but still thermalize. To illustrate, consider a 3-qubit system with $j=\frac{3}{2}, \frac{1}{2}, \frac{1}{2}$ \cite{Shankar1994,sakurai2020modern}, resulting in a degeneracy $\nu_{j}=2$ (indexed by $\sigma$) for the $j=\frac{1}{2}$ subspace. Then we have
\begin{equation}\label{AngularL}
    J_{-}^{(3)} = L_{3/2} \bigoplus L^{(1)}_{1/2} \bigoplus L^{(2)}_{1/2}.
\end{equation}

\subsection{The structure}\label{Structure}

The steady-state density matrix $\rho_{SS}$ is block-diagonal in the basis $\ket{j, m, \sigma}$, where $j$ denotes the total angular momentum quantum number, $m$ its projection, and $\sigma$ the degeneracy index. This state admits the decomposition:
\begin{equation} \label{GeneralSteady-State0}
    \rho_{SS} = \bigoplus_{j} (\chi_{j} \otimes \tau_{j}),
\end{equation}
where $\chi_j$ acts on the degeneracy subspace, and $\tau_j$ is the thermal part of the steady state that acts on the magnetic angular momentum subspace for each $j$. Explicitly, we can write:
\begin{equation}
  \mathbb{I}_{\nu_{j}} \otimes \tau_j 
  \;=\; \frac{1}{
  \mathcal Z_j}\sum_{\sigma = 1}^{\nu_{j}}\sum_{m=-j}^{j} \alpha_{c}^{\,j+m}\,\ket{j,m, \sigma}\bra{j,m,\sigma},
\qquad
\mathcal Z_j=\sum_{k=0}^{2j}\alpha_{c}^{k}=\frac{1-\alpha_{c}^{2j+1}}{1-\alpha_{c}}.
\end{equation}
\par
\noindent
and
\begin{equation}
    \chi_j \otimes \mathbb{I}_{m} = \sum_{\sigma, \sigma' = 1}^{\nu_j} (\chi_j)_{\sigma,\sigma'} \sum_{m = -j}^{j} \ket{j, m, \sigma}\bra{j, m, \sigma'}.
\end{equation}
The matrix elements $(\chi_{j})_{\sigma,\sigma'}$ can be given as $(\chi_{j})_{\sigma,\sigma'} = \sum_{m=-j}^{j}\bra{j,m,\sigma'}\rho(t_0)\ket{j,m,\sigma}$.
The diagonal elements $(\chi_{j})_{\sigma,\sigma}$ represents the population of the sectors with degeneracy where "which-path" information preserved in degeneracy subspaces, and the off-diagonal elements $(\chi_{j})_{\sigma, \sigma'} (\sigma \neq \sigma')$ represent quantum coherences that survive collective dissipation. It is straightforward to see that $\chi_{j}$ is a positive semi-definite matrix $\chi_{j} \succeq 0$ with unit-trace $\tr \chi_{j} = 1.$
\par
\noindent
The total density matrix $\rho$ is given by:
\begin{equation}\label{GeneralSteady-State1}
    \rho_{SS} \;=\; \bigoplus_{j}\chi_j\otimes\tau_j
\;=\; \sum_{j}\sum_{\sigma,\sigma'=1}^{\nu_j}\sum_{m=-j}^{j}
(\chi_j)_{\sigma\sigma'}\frac{\alpha_{c}^{j+m}}{\mathcal Z_j}\,\ket{j,m,\sigma}\bra{j,m,\sigma'}. 
\end{equation}
Note that the steady-state density matrix is normalized:
\begin{equation}
\operatorname{Tr}\rho_{SS}=\sum_j \operatorname{Tr}\chi_j 
=\sum_{j} p_{j} = 1.
\end{equation}
\subsection{Examples}
\subsubsection{Two-qubit Steady State}

The extended Dicke basis for the two-qubit system is given as follows: $\mathcal{B}^{(2)} = \{\ket{00},\frac{\ket{01}+\ket{10}}{\sqrt{2}},\ket{11},\frac{\ket{01}-\ket{10}}{\sqrt{2}}\}$. 
For an arbitrary initial state at time $t_{0}$ in this basis $\rho(t_{0}) = \sum_{i,j = 1}^{4}\rho_{ij}(t_0)\ket{i}\bra{j}$ where $\ket{i},\ket{j} \in \mathcal{B}^{(2)}$, the steady-state condition $(\mathcal{L}\rho_{SS}^{(2)} = 0)$ yields the following linear equations:
\begin{equation}
\begin{aligned}
    n \gamma \rho_{1 1} &= (1 + n) \gamma \rho_{2 2}, \\
    (\gamma + 3 n \gamma) \rho_{1 2} &= 2 (1 + n) \gamma \rho_{2 3}, \\
    (1 + 2 n) \gamma \rho_{1 3} &= 0, \\
    n \gamma \rho_{1 4} &= 0, \\
    (\gamma + 3 n \gamma) \rho_{2 1} &= 2 (1 + n) \gamma \rho_{3 2}, \\
    \gamma \left( n \rho_{1 1} - (1 + 2 n) \rho_{2 2} + (1 + n) \rho_{3 3} \right) &= 0, \\
    2 n \gamma \rho_{1 2} &= (2 + 3 n) \gamma \rho_{2 3}, \\
    (1 + 2 n) \gamma \rho_{2 4} &= 0, \\
    (1 + 2 n) \gamma \rho_{3 1} &= 0, \\
    2 n \gamma \rho_{2 1} &= (2 + 3 n) \gamma \rho_{3 2}, \\
    n \gamma \rho_{2 2} &= (1 + n) \gamma \rho_{3 3}, \\
    (1 + n) \gamma \rho_{3 4} &= 0, \\
    n \gamma \rho_{4 1} &= 0, \\
    (1 + 2 n) \gamma \rho_{4 2} &= 0, \\
    (1 + n) \gamma \rho_{4 3} &= 0.
\end{aligned}
\end{equation}
The solution to this system of linear equations gives us the two-qubit steady state $\rho_{SS}^{(2)}$ given by 
\begin{align}\label{2qss}
    \rho_{SS}^{(2)} = 
    \begin{pmatrix}
    \frac{p_1}{1+\alpha_{c} + \alpha_{c}^2} & 0 & 0 & 0 \\
    0 & \frac{\alpha_{c} p_1}{1+\alpha_{c} + \alpha_{c}^2} & 0 & 0 \\
    0 & 0 & \frac{\alpha_{c}^2 p_1}{1+\alpha_{c} + \alpha_{c}^2} & 0 \\
    0 & 0 & 0 & p_2
    \end{pmatrix}
\end{align}
where $p_1 = \frac{\rho_{11}(t_0) + \rho_{22}(t_0) + \rho_{33}(t_0)}{\mathcal{N}^{(2)}}$ and $p_2 = \frac{\rho_{44}(t_0)}{\mathcal{N}^{(2)}}$ are the populations of the sectors corresponding to the sectors $j=1$, and $j=0$, and $\alpha$ is the Gibbs ratio given by $\alpha_{c}=\frac{n_{c}}{n_{c}+1}$. Here $\mathcal{N}^{(2)}$ is the normalization factor for the two-qubit initial density matrix given by $\mathcal{N}^{(2)} = \sum_{i=1}^4\rho_{ii}(t_{0})$.
The state has the angular momentum decomposition:
\begin{itemize}
    \item \textbf{$j=1$ block} (dimension 3, degeneracy $\nu_j=1$):
    \[
    \chi_{j=1} = p_1, \quad 
    \tau_{j=1} = \begin{pmatrix} \frac{1}{1 + \alpha_{c} + \alpha_{c}^2} & 0 & 0 \\ 0 & \frac{\alpha_{c}}{1 + \alpha_{c} + \alpha_{c}^2} & 0 \\ 0 & 0 & \frac{\alpha_{c}^2}{1 + \alpha_{c} + \alpha_{c}^2} \end{pmatrix} = \sum_{m=-1}^{1} \frac{\alpha_{c}^{1+m}}{1 + \alpha_{c} + \alpha_{c}^2} \ket{1,m}\bra{1,m}
    \]
Block factorization: 
    \[
    \chi_{j=1} \otimes \tau_{j=1} = p_1 \cdot \begin{pmatrix} \frac{1}{1 + \alpha_{c} + \alpha_{c}^2} & 0 & 0 \\ 0 & \frac{\alpha_{c}}{1 + \alpha_{c} + \alpha_{c}^2} & 0 \\ 0 & 0 & \frac{\alpha_{c}^2}{1 + \alpha_{c} + \alpha_{c}^2} \end{pmatrix}
    \]
    
    \item \textbf{$j=0$ block} (dimension 1, degeneracy $\nu_j=1$):
    \[
    \chi_{j=0} = p_2, \quad 
    \tau_{j=0} = 1 = \alpha_{c}^{0+0} \ket{0,0}\bra{0,0}
    \]
Block factorization: 
    \[
    \chi_{j=0} \otimes \tau_{j=0} = p_2
    \]
\end{itemize}
Full state reconstruction:
\begin{equation}
    \rho_{SS}^{(2)} = \underbrace{\begin{pmatrix} \frac{p_1}{1+\alpha_{c} + \alpha_{c}^2} & 0 & 0 \\ 0 & \frac{\alpha_{c} p_1}{1+\alpha_{c} + \alpha_{c}^2} & 0 \\ 0 & 0 & \frac{\alpha_{c}^2 p_1}{1+\alpha_{c} + \alpha_{c}^2} \end{pmatrix}}_{\chi_{1} \otimes \tau_{1}} \oplus \underbrace{\begin{pmatrix} p_2 \end{pmatrix}}_{\chi_{0} \otimes \tau_{0}}
\end{equation}
which is the original form \eqref{GeneralSteady-State0}.

\subsubsection{Three-qubit steady state}
\noindent For a three-qubit system, the extended Dicke basis looks as follows:

$\mathcal{B}^{(3)} = \left\{
\ket{000}, 
\frac{\ket{001}+\ket{010}+\ket{100}}{\sqrt{3}}, 
\frac{\ket{011}+\ket{101}+\ket{110}}{\sqrt{3}}, 
\ket{111}, 
\frac{\ket{001}+\ket{010}-2\ket{100}}{\sqrt{6}}, 
\frac{2\ket{011}-\ket{101}-\ket{110}}{\sqrt{6}},\frac{\ket{001}-\ket{010}}{\sqrt{2}},\frac{\ket{101}-\ket{110}}{\sqrt{2}}
\right\}.$
As in the previous case, in this basis, for an arbitrary initial state $\rho(t_{0}) = \sum_{i,j = 1}^{8}\rho_{ij}\ket{i}\bra{j}$ where $\ket{i},\ket{j} \in \mathcal{B}^{(3)}$, the steady-state condition \(\mathcal L[\rho_{SS}^{(3)}]=0\) gives rise to the following block form for \(\rho_{SS}^{(3)}\):
\begin{align}\label{3qss}
    \rho_{SS}^{(3)} =
    \begin{pmatrix}
    \frac{p_{1}}{1+ \alpha_{c} + \alpha_{c}^2 + \alpha_{c}^3} & 0 & 0 & 0 & 0 & 0 & 0 & 0 \\
    0 & \frac{\alpha_{c} p_{1}}{1+ \alpha_{c} + \alpha_{c}^2 + \alpha_{c}^3} & 0 & 0 & 0 & 0 & 0 & 0 \\
    0 & 0 & \frac{\alpha_{c}^2 p_{1}}{1+ \alpha_{c} + \alpha_{c}^2 + \alpha_{c}^3} & 0 & 0 & 0 & 0 & 0 \\
    0 & 0 & 0 & \frac{\alpha_{c}^3 p_{1}}{1+ \alpha_{c} + \alpha_{c}^2 + \alpha_{c}^3} & 0 & 0 & 0 & 0 \\
    0 & 0 & 0 & 0 & \frac{p_{2}}{1+\alpha_{c}} & 0 & \frac{c_2}{1+\alpha_{c}} & 0 \\
    0 & 0 & 0 & 0 & 0 & \frac{\alpha_{c} p_{2}}{1 + \alpha_{c}} & 0 & \frac{\alpha_{c} c_2}{1+\alpha_{c}} \\
    0 & 0 & 0 & 0 & \frac{c_2}{1+\alpha_{c}} & 0 & \frac{p_{3}}{1+\alpha_{c}} & 0 \\
    0 & 0 & 0 & 0 & 0 & \frac{\alpha_{c} c_2}{1+\alpha_{c}} & 0 & \frac{\alpha_{c} p_{3}}{1+\alpha_{c}}
    \end{pmatrix},
\end{align}
which again admits the form \eqref{GeneralSteady-State0}. As in the two-qubit case, the populations 
$p_{1} = \frac{\rho_{11}(t_{0}) + \rho_{22}(t_{0}) + \rho_{33}(t_{0}) + \rho_{44}(t_{0})}{\mathcal{N}^{(3)}}$, $p_{2} = \frac{\rho_{55}(t_{0}) + \rho_{66}(t_{0})}{\mathcal{N}^{(3)}}$, 
and $p_{3} = \frac{\rho_{77}(t_{0}) + \rho_{88}(t_{0})}{\mathcal{N}^{(3)}}$ are the populations corresponding to each sector characterized by $j = \frac{3}{2}, j=\frac{1}{2}$ and $j=\frac{1}{2}$. Here $\mathcal{N}^{(3)} = \sum_{i=1}^8\rho_{ii}(t_{0})$ is the normalization factor.
Here $c_2 = \rho_{57}(t_0) + \rho_{68}(t_0)$ quantifies the coherence between the energy levels of the same sector.
It is easy to see that the state \eqref{3qss} admits the form \eqref{GeneralSteady-State0} with
\begin{equation}
\small
\chi_{j=\frac{3}{2}} = \begin{pmatrix} p_1 \end{pmatrix},\quad 
\tau_{j=\frac{3}{2}} = 
\begin{pmatrix}
\frac{1}{1 + \alpha_{c} + \alpha_{c}^2 + \alpha_{c}^3} &0 &0 &0 \\
0 &\frac{\alpha_{c}}{1 + \alpha_{c} + \alpha_{c}^2 + \alpha_{c}^3} &0 &0 \\
0 &0 &\frac{\alpha_{c}^2}{1 + \alpha_{c} + \alpha_{c}^2 + \alpha_{c}^3} &0 \\
0 &0 &0 &\frac{\alpha_{c}^3}{1 + \alpha_{c} + \alpha_{c}^2 + \alpha_{c}^3}
\end{pmatrix},\quad 
\chi_{j=\frac{1}{2}} = 
\begin{pmatrix} p_2 & c_2 \\ c_2 & p_3 \end{pmatrix},\quad 
\tau_{j = \frac{1}{2}} = 
\begin{pmatrix} \frac{1}{1+\alpha_{c}} &0 \\ 0 &\frac{\alpha_{c}}{1+\alpha_{c}} \end{pmatrix}.
\end{equation}

\subsection{Proof of leakage in evolution sectors violating $j=j'$ and $\Delta_{J_z}=0$}\label{A2}

Let us begin with the master equation,
\begin{equation}\label{ME2}
    \frac{d\rho}{dt} = \gamma(n+1)\left(J_-\rho J_+ - \dfrac{1}{2}\{J_+J_-, \rho\}\right) + \gamma n\left(J_+\rho J_- - \dfrac{1}{2}\{J_-J_+, \rho\}\right).
\end{equation}
For density matrix elements $\rho_{j,m,\sigma : j',m',\sigma'} = \langle j, m,\sigma|\rho|j', m',\sigma'\rangle$, we then have
\begin{align}\label{rhojm}
    \frac{d}{dt}\rho_{j,m,\sigma;j'm',\sigma'} = &-\frac{\gamma(n+1)}{2}[A_{m} + A'_{m'}]\rho_{j,m,\sigma;j',m',\sigma'} - \frac{\gamma n}{2}[A_{m+1} + A'_{m'+1}]\rho_{j,m,\sigma;j'm',\sigma'} \nonumber\\
    &+ \gamma(n+1)\sqrt{A_{m+1}A'_{m'+1}}\rho_{j,m+1,\sigma;j',m'+1,\sigma'} + \gamma n\sqrt{A_{m}A'_{m'}}\rho_{j,m-1,\sigma;j', m'-1,\sigma'}.
\end{align}
where
\begin{equation}
\begin{aligned}
    A_{m} &\equiv(j+m)(j-m+1), \qquad A'_{m'}\equiv(j'+m')(j'-m'+1), \\
    A_{m+1} &\equiv(j-m)(j+m+1), \quad A'_{m'+1}\equiv(j'-m')(j'+m'+1).
\end{aligned}
\end{equation}
Note that we have omitted the degeneracy label $\sigma$ in the basis as it is independent of the results here.
The master equation \eqref{ME2} can be vectorized and rewritten as
\begin{equation}
    \frac{d}{dt}|\rho\rangle\rangle = \mathcal{L}|\rho\rangle\rangle,
\end{equation}
where $\mathcal{L}$ is the Liouvillian super-operator. The general solution is
\begin{equation}
    |\rho(t)\rangle\rangle = e^{\mathcal{L}t}|\rho(t_0)\rangle\rangle.
\end{equation}
The eigenvalues of $\mathcal{L}$ determine the evolution of the state with
\begin{equation*}
    \mathcal{L} = \text{diag}(\mathcal{L}^{(1)}, \mathcal{L}^{(2)}, \mathcal{L}^{(3)}, \ldots)
\end{equation*}
where $\mathcal{L}^{(k)}$ is the sub-matrix governing the evolution of coherences in the $k$-th Bohr sector. For a specific Bohr sector S$(j,j',\Delta j_z)$ with $\Delta j_z = m - m'$, the corresponding sub-matrix $\mathcal{L}_{r,s,\Delta j_z}$ would be:
\begin{equation*}
    \mathcal{L}_{r,s,\Delta j_z} =
    \begin{pmatrix}
    d_1 & c_1 & 0 & 0 & \cdots \\
    b_1 & d_2 & c_2 & 0 & \cdots \\
    0 & b_2 & d_3 & c_3 & \cdots \\
    0 & 0 & b_3 & d_4 & \cdots \\
    \vdots & \vdots & \vdots & \vdots & \ddots
    \end{pmatrix}
\end{equation*}
where for the tri-diagonal $m (d_m, c_m, b_m)$ we have $d_m = - \frac{\gamma}{2}[n(A_{m+1} + A'_{m'+1}) + (n+1)(A_m + A'_{m'})]$, $b_m = \gamma n\sqrt{A_{m+1}A'_{m'+1}}$ and $c_m = \gamma(n+1)\sqrt{A_{m+1}A'_{m'+1
}}$.
As we will show below, at most one eigenvalue is 0, corresponding to the steady state and all other eigenvalues have negative real parts, causing decay to the thermal steady state. Therefore, the density matrix evolves as
\begin{equation}
    |\rho(t)\rangle\rangle = |\rho_{\lambda=0}\rangle\rangle + \sum_{\lambda\neq0} c_\lambda e^{\lambda t} |\rho_\lambda\rangle\rangle,
\end{equation}
where $\rho_{\lambda=0}=\rho_{SS}$ is the steady state and $\rho_{\lambda\neq0}$ are the eigen-modes corresponding to non-zero eigenvalues $\lambda$. To prove that all eigenvalues have negative real parts, we use the following Gershgorin's circle theorem.
\begin{theorem}[Gershgorin's Circle Theorem]\label{Gershgorin}\cite{HornMatrixAnalysis2012}
Let $M$ be a complex $n \times n$ matrix, with entries $m_{rs}$. For each row (column) $r\in\{1,\ldots,n\}$, define the radius $R_i$ as the sum of absolute values of non-diagonal entries in that row (column)
\begin{equation}
    R_r = \sum_{r \neq s} |m_{rs}|.
\end{equation}
Let $D(m_{rr},R_r) \subseteq \mathbb{C}$ be a closed disc, Gershgorin disc, in the complex plane centered at $m_{rr}$ with radius $R_r$:
\begin{equation}
    D(m_{rr},R_r) = \{z \in \mathbb{C} : |z - m_{rr}| \leq R_r\}.
\end{equation}
Then, every eigenvalue of matrix $M$ lies within at least one of the Gershgorin discs $D(m_{rr},R_r)$.
\end{theorem}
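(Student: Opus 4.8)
The plan is to obtain the disc containment directly from the eigenvalue equation by isolating the dominant component of an eigenvector. First I would fix an arbitrary eigenvalue $\lambda$ of $M$ together with an associated eigenvector $x=(x_1,\dots,x_n)^{T}\neq 0$, so that $Mx=\lambda x$. Because $x\neq 0$, I would select an index $r$ at which the modulus is maximal, i.e.\ $|x_r|\geq|x_s|$ for every $s$; this choice guarantees $|x_r|>0$, which is what makes the later division legitimate.

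Next I would read off the $r$-th component of $Mx=\lambda x$, namely $\sum_s m_{rs}x_s=\lambda x_r$, and separate the diagonal term to get $(\lambda-m_{rr})x_r=\sum_{s\neq r}m_{rs}x_s$. Taking moduli and using the triangle inequality yields $|\lambda-m_{rr}|\,|x_r|\leq\sum_{s\neq r}|m_{rs}|\,|x_s|$. The decisive step is then to replace each $|x_s|$ by the maximal value $|x_r|$, giving $|\lambda-m_{rr}|\,|x_r|\leq R_r\,|x_r|$ with $R_r=\sum_{s\neq r}|m_{rs}|$.

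Dividing through by $|x_r|>0$ produces $|\lambda-m_{rr}|\leq R_r$, which is precisely the assertion that $\lambda\in D(m_{rr},R_r)$; since $\lambda$ was arbitrary, every eigenvalue lies in at least one Gershgorin disc. To cover the column version stated in parentheses, I would apply the row result to $M^{T}$, whose spectrum coincides with that of $M$ and whose row radii are exactly the column radii of $M$.

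The argument is short, and the only subtle point is the selection of the dominant eigenvector component: that single choice simultaneously secures $|x_r|>0$ (so the division is valid) and furnishes the uniform bound $|x_s|\leq|x_r|$ that collapses the weighted sum into the radius $R_r$. I therefore expect no genuine obstacle beyond stating this maximality argument explicitly.
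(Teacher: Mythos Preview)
Your argument is the standard and correct proof of Gershgorin's theorem: pick a dominant eigenvector component, rearrange the eigenvalue equation, apply the triangle inequality, and divide by the nonzero maximal entry; the column version via $M^{T}$ is also fine. There is nothing to compare against, however, because the paper does not prove this theorem at all---it is stated as a classical result with a citation to Horn and Johnson and then invoked (together with the elementary bound $\Re(z)-\Re(w)\le|z-w|$) to control the real parts of the Liouvillian eigenvalues in each Bohr sector.
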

\begin{corollary}\label{corollary}
Using the column-based Gershgorin's Circle Theorem \ref{Gershgorin}, if $\lambda$ is an eigenvalue of $M$, there exists some $r$ such that
\begin{equation}
    |\lambda - m_{rr}| \leq R_r.
\end{equation}
\end{corollary}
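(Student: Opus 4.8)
The plan is to obtain Corollary~\ref{corollary} as the single-eigenvalue specialization of the column-based form of Theorem~\ref{Gershgorin}, and to establish that column-based form by reducing it to the row-based version through transposition. Since $M$ and its transpose $M^{T}$ share the same characteristic polynomial, they have identical spectra, so it suffices to prove the row-based bound and then apply it to $M^{T}$.

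For the row-based statement, I would fix an eigenvalue $\lambda$ with a nonzero eigenvector $\mathbf{x}=(x_1,\dots,x_n)^{T}$ satisfying $M\mathbf{x}=\lambda\mathbf{x}$, and select an index $r$ at which $|x_r|$ is largest; because $\mathbf{x}\neq 0$, one has $|x_r|>0$. Isolating the diagonal contribution in the $r$-th component of the eigenvalue equation gives
\begin{equation}
(\lambda-m_{rr})\,x_r=\sum_{s\neq r}m_{rs}\,x_s .
\end{equation}
Taking moduli, applying the triangle inequality, and using $|x_s|\le|x_r|$ for every $s$ yields
\begin{equation}
|\lambda-m_{rr}|\,|x_r|\le\sum_{s\neq r}|m_{rs}|\,|x_s|\le\Big(\sum_{s\neq r}|m_{rs}|\Big)\,|x_r| ,
\end{equation}
and dividing through by $|x_r|>0$ delivers $|\lambda-m_{rr}|\le R_r$, i.e.\ $\lambda\in D(m_{rr},R_r)$.

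To reach the column form required by the Corollary, I would apply the inequality just derived to $M^{T}$. Its diagonal entries coincide with those of $M$, while the off-diagonal row sums of $M^{T}$ equal the off-diagonal column sums of $M$, so the relevant radius becomes $R_r=\sum_{s\neq r}|m_{sr}|$. As $M^{T}$ and $M$ share eigenvalues, any eigenvalue $\lambda$ of $M$ satisfies $|\lambda-m_{rr}|\le R_r$ for the corresponding maximizing index $r$, which is exactly the assertion of Corollary~\ref{corollary}.

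I do not expect a genuine obstacle here, since the result is elementary. The one step that must be handled with care is the choice of the dominant component: it is essential to pick $r$ so that $|x_r|$ is maximal and strictly positive, as this simultaneously guarantees the uniform bound $|x_s|\le|x_r|$ and the admissibility of dividing by $|x_r|$. Beyond that selection, the entire argument is a single pass of the triangle inequality.
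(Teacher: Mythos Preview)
Your argument is correct and is the standard textbook proof of Gershgorin's theorem together with the transposition trick for the column form. The paper does not actually supply a proof of this corollary: it states Theorem~\ref{Gershgorin} (citing Horn--Johnson) in a form that already covers both the row and column versions, and then records Corollary~\ref{corollary} as an immediate restatement for a single eigenvalue. So your proposal goes beyond what the paper does by spelling out the underlying argument, but there is no methodological divergence to compare.
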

\noindent For our system the diagonal terms of matrix $M$ are
\begin{equation}
    m_{rr} = d_{m} = - \gamma(n+1)\dfrac{A_{m} + A'_{m'}}{2} - \gamma n\dfrac{A_{m+1} + A'_{m'+1}}{2},
\end{equation}
and the off-diagonal (coupling) terms in $m^{th}$ column are
\begin{equation}
\begin{aligned}
    c_{m-1} &\equiv \gamma(n+1)\sqrt{A_{m}A_{m'}} , \\
    b_{m} &\equiv \gamma n\sqrt{A_{m+1}A_{m'+1}}.
\end{aligned}
\end{equation}
Exploiting corollary \eqref{corollary} we have
\begin{equation}\label{inequlity1}
    |\lambda - d_{m}|\leq R_r = |c_{m-1}| + |b_{m}|.
\end{equation}
For any complex numbers $z$ and $w$, if $|z - w| \leq a$, then \cite{Churchill}
\begin{equation}\label{zr}
    \Re(z) - \Re(w) \leq |z - w| \leq a.
\end{equation}
Now applying relation \eqref{zr} to our inequality \eqref{inequlity1} we get
\begin{equation}
    \Re(\lambda) - \Re(m_{rr}) \leq |\lambda - m_{rr}| \leq |c_{m-1}| + |b_{m}|,
\end{equation}
which gives
\begin{equation}
    \Re(\lambda) \leq \Re(m_{rr}) + |c_{m-1}| + |b_{m}|,
\end{equation}
hence
\begin{equation}\label{inequality2}
    \Re(\lambda) \leq -\gamma(n+1)\left(\frac{A_m+A_{m'}}{2} - \sqrt{A_mA_{m'}}\right) - \gamma n\left(\frac{A_{m+1}+A_{m'+1}}{2}-\sqrt{A_{m+1}A_{m'+1}}\right).
\end{equation}
For $j \neq j'$ the right hand side of inequality \eqref{inequality2} is strictly negative due to the arithmetic-geometric mean inequality
\begin{equation}
    \frac{x+y}{2} > \sqrt{x}\sqrt{y}.
\end{equation}
Thus, there always exists a gap $\Delta > 0$ such that
\begin{equation}\label{gap}
    |\Re(\lambda)| \geq \gamma(n+1)\left(\frac{A_m+A_{m'}}{2} - \sqrt{A_mA_{m'}}\right) + \gamma n\left(\frac{A_{m+1}+A_{m'+1}}{2}-\sqrt{A_{m+1}A_{m'+1}}\right) = \Delta > 0.
\end{equation}
This proves that all coherences between different $j$ blocks decay exponentially with a rate at least $\Delta$.
\begin{figure}[t]
    \centering
    \begin{subfigure}[b]{0.32\textwidth}
        \includegraphics[width=\textwidth]{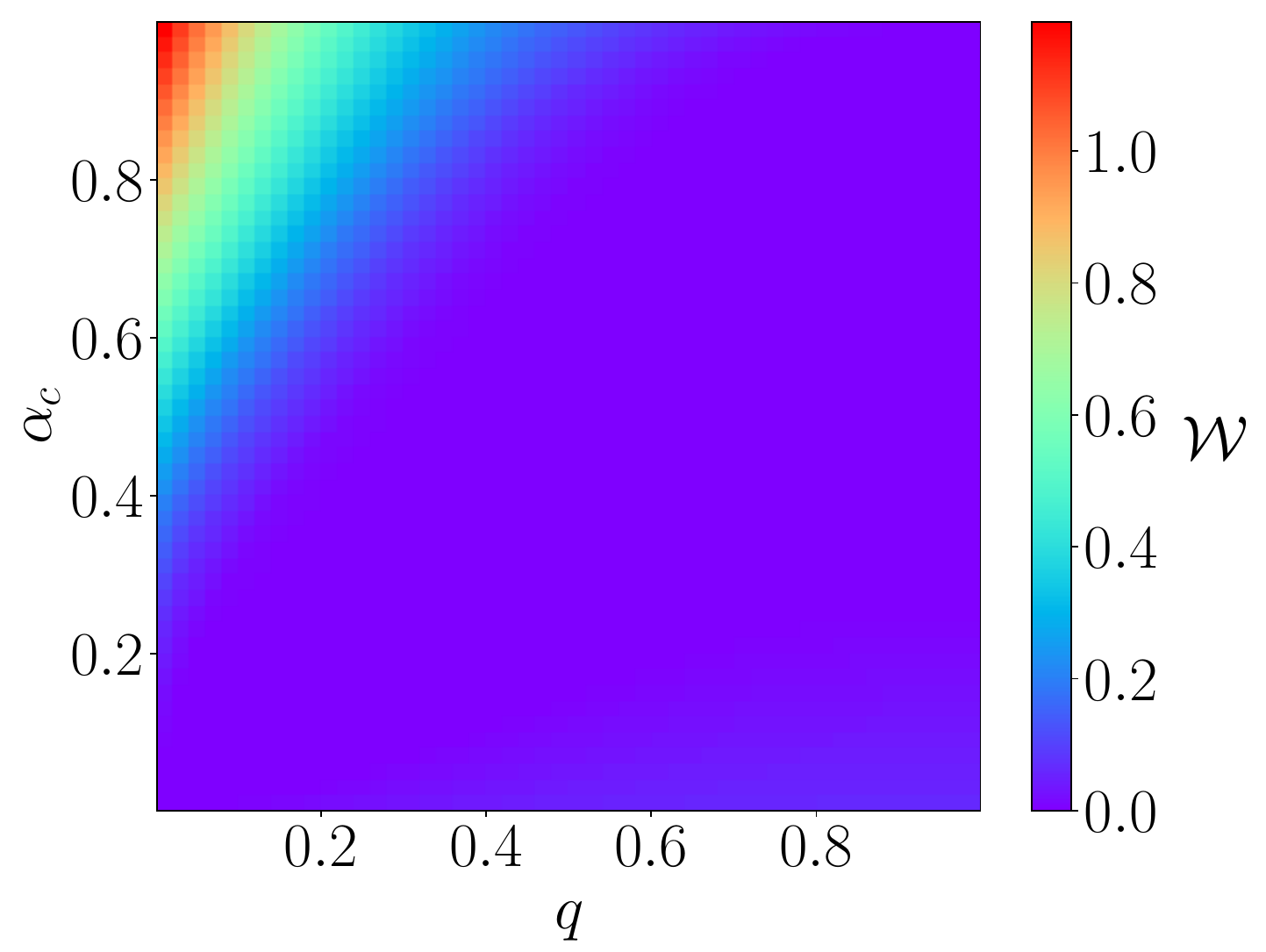}\caption{$N=4$}
        \label{4Qubits_density}
    \end{subfigure}
    \begin{subfigure}[b]{0.32\textwidth}
\includegraphics[width=\textwidth]{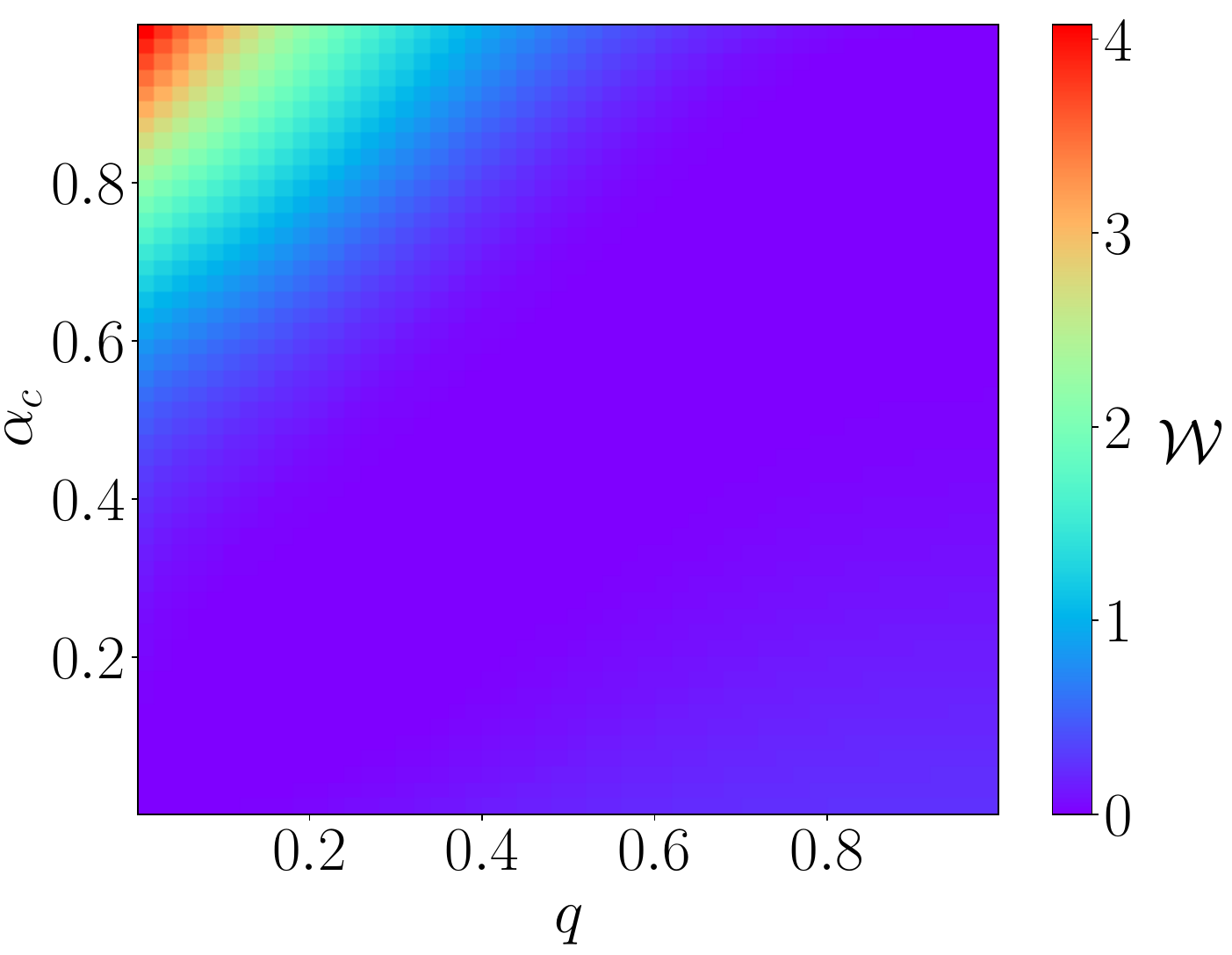}
        \caption{$N=10$}\label{10Qubits_density}
    \end{subfigure}
    \begin{subfigure}[b]{0.32\textwidth}
\includegraphics[width=\textwidth]{26Qubits_density.pdf}
        \caption{$N=26$}\label{26Qubits_density}
    \end{subfigure}
    \captionsetup{justification=justified}
    \caption{\textbf{Steady-state ergotropy under ideal collective dissipation ($\eta=1$).}
Heat map of the steady-state ergotropy $\mathcal{W}$ for $N=4,10,26$ with perfectly collective coupling versus the common-reservoir parameter $\alpha_c=e^{-\beta_R\hbar\omega_0}$. Except along the fine-tuned line $\alpha_c=q$ (initial temperature equals reservoir temperature), the steady state is nonpassive and $\mathcal{W}>0$. As $\alpha_c\to 1$ (hotter common bath), \(\mathcal{W}_{\rm SS}\) typically increases because the collective jumps \(J_\pm\) thermalize each Dicke ladder while preserving the interference-induced ladder structure, enabling ergotropy extraction from an incoherent reservoir.\justifying} 
\label{ErgoAlpha-q}
\end{figure}

\section{Activated ergotropy increase with system size}

Under perfectly collective coupling, the dissipator closes on the collective jumps \(J_\pm\) (see Sec.~\ref{SteadyStructure}), so the Liouvillian preserves total spin \(j\) and acts independently on each Dicke ladder. Within any fixed \(j\), the stationary shell populations obey the geometric law
\begin{equation}
    \frac{p^{\mathrm{st}}_{k+1}}{p^{\mathrm{st}}_{k}}=\alpha_c,\qquad
    \alpha_c \equiv \frac{n(\omega_0,T_R)}{n(\omega_0,T_R)+1}=e^{-\beta_R\omega_0}\in(0,1),
\end{equation}
with \(k\) the excitation number in the bright ladder within each $j$-block.
The steady state is thus a direct sum of \emph{thermal ladders} (one per \(j\)), with weights over different \(j\) fixed by the
initial state (Sec.~\ref{Structure}). For the ground initial condition (all weight in the \(j{=}\frac{N}{2}\) manifold), the stationary energy
\(E_{\mathrm{SS}}\) and ergotropy \(\mathcal{W}\) admit the closed forms reported in Sec.~\ref{InitialDiagonal}, which increase monotonically with \(\alpha_c\).
Two robust features are apparent.
(i) Even with a zero-temperature collective reservoir ($\alpha\!\to\!0$), the steady state exhibits nonzero ergotropy for $N>1$ whenever the initial local temperature parameter $q$ is sufficiently large, and the activated ergotropy grows with system size (see Fig. \ref{ErgoAlpha-q}).
(ii) Analytically, in the thermodynamic limit ergotropy activation is generic: for fixed $(\alpha_c,q)\in(0,1)$ (see Sec. \ref{alpha_cq} for analytical proof),
\begin{equation}
    \lim_{N\to\infty}\mathcal{W}(N,\alpha_c,q) \;>\; 0 
    \qquad \text{whenever}\quad \alpha_c \neq q,
\end{equation}
so the fine-tuned line $\alpha_c=q$ is the only case yielding vanishing ergotropy.

\noindent
Taken together, Fig.~\ref{ErgoAlpha-q} demonstrates that starting from a fully passive product Gibbs state, collective dissipation alone generates a nonpassive stationary state with finite ergotropy, whose magnitude increases systematically with $N$.

\section{Existence of population inversion for $\alpha_c \neq q$}\label{alpha_cq}

We can show that, as long as temperatures of the local baths and the common bath are different, the stationary state has non-zero ergotropy in the limit $N\rightarrow\infty$. Let us denote $\alpha_c=e^{-\beta_c}$ and $q=e^{-\beta_{q}}$. Then, in the stationary state, the population on the $m-th$ excited level in the symmetric space is given by
\begin{align}
    s(m)=\alpha_c^m\frac{(1-\alpha_c) \left(1-q^{N+1}\right)}{(1-q) \left(1-\alpha_c^{N+1}\right)(1+q)^{N}},
\end{align}
while the population on the ground state of the subspace with the lowest energetic state on level $m$ is given by
\begin{align}
    n_s(m)=\frac{\sum_{p=m}^{N-m}q^p}{(1+q)^N}\frac{1}{\sum_{p=0}^{N-2m}\alpha_c^p}=\frac{(1-\alpha_c)\left(q^m-q^{N+1-m}\right)}{(1-q) (1+q)^{N}(1-\alpha_c^{N+1-2m})}.
\end{align}
One can easily verify that for 
\begin{align}
    \frac{s(m+1)}{n_s(m)}=\alpha_c\frac{(1-q^{1+N})(\alpha_c^{m}-\alpha_c^{1+N-m})}{(q^m-q^{1+N-m})(1-\alpha_c^{1+N})}\rightarrow_{N\rightarrow\infty}\alpha_c(\frac{\alpha_c}{q})^m,
\end{align}
and for $\alpha_c>q$ there always exist $m$ big enough that the above is greater than $1$, pointing to population inversion. Analogously,
\begin{align}
    \frac{s(m-1)}{n_s(m)}=\frac{1}{\alpha_c}\frac{(1-q^{1+N})(\alpha_c^{m}-\alpha_c^{1+N-m})}{(q^m-q^{1+N-m})(1-\alpha_c^{1+N})}\rightarrow_{N\rightarrow\infty}\frac{1}{\alpha_c}(\frac{\alpha_c}{q})^m,
\end{align}
and for $q>\alpha_c$ there always exist $m$ big enough that the above is smaller then $1$, pointing to population inversion.

\section{Initial State with zero Temperature}\label{InitialDiagonal}

Before proceeding with the calculations, it is important to note that the rationale behind our study stems from the physics involved. Specifically, our motivation is grounded in the observation that the anti-commutators outlined in Eq. \eqref{DissipatorAngular} yield expressions such as $\sigma^i_+\sigma^j_+$, consequently facilitating the generation of coherence within the overall state of the QB which, in turn, will lead to nonzero ergotropy. In investigating energetic properties of the shared bath, we will exploit a symmetry of the evolution induced by the master equation \eqref{MESup}: in the so called Dicke basis, the evolution of the diagonal is independent from the evolution of coherences. Let us denote the basis of $H_{B}$ as $\{|x\rangle\}_x$, with $x\in\{0,2^n-1\}$, and a digit $x_{i}$ on every $i$ position in their binary representation. Then, $\sigma_{z}^{i}|x\rangle=(-1)^{x_{i}} |x\rangle$. Dicke states $\dicke{N}{k}$ are symmetric states with respect to a fixed total number of excitations $e(x)=\sum_{l=0}^{2^N-1}\delta_{x_{l},1}$:
\begin{align*}
   \dicke{N}{k} = \frac{1}{C^{N}_{k}} \sum\nolimits_{x:e(x)=k}\ket{x},\numberthis
\end{align*}
where $C^{N}_{k}={\binom{N}{k}}^{\frac{1}{2}}$. For example, $\dicke{5}{0} =|00000\rangle$, $\dicke{3}{1} =\frac{1}{\sqrt{3}}\Big(|001\rangle+|010\rangle+|100\rangle\Big)$.
Action of jump operators on Dicke states can be characterized as a shift with a phase factor:
\begin{eqnarray}
    J_-\dicke{N}{k}=\alpha^{N}_{k}\dicke{N}{k-1},\label{O}\\
    J_+\dicke{N}{k}=\beta^{N}_{k}\dicke{N}{k+1},\label{OO}
\end{eqnarray}
where 
\begin{align*}
    \alpha^{N}_{k}&=k\frac{C^{N}_{k}}{C^{N}_{k-1}}=k\sqrt{\frac{N-k+1}{k}},\numberthis\\    \beta^{N}_{k}&=(N-k)\frac{C^{N}_{k}}{C^{N}_{k+1}}=(N-k)\sqrt{\frac{k+1}{N-k}}.\numberthis
\end{align*}
This leads to
\begin{eqnarray}
    J_-J_+\dicke{N}{k}=\underbrace{\alpha^{N}_{k+1}\beta^{N}_{k}}_{(N-k)(k+1)}\dicke{N}{k},\label{OOO}\\
    J_+J_-\dicke{N}{k}=\underbrace{\alpha^{N}_{k}\beta^{N}_{k-1}}_{(N-k+1)(k)}\dicke{N}{k}.\label{OOOO}
\end{eqnarray}
We can now analyze the action of the dissipator on the Dicke state:
\begin{align*}\label{Evo}
    \mathcal{L}[\dicke{N}{k}\dickeD{N}{k}] = -\dicke{N}{k}\dickeD{N}{k}q_{0}
    + \dicke{N}{k-1}\dickeD{N}{k-1}q_{-} + \dicke{N}{k+1}\dickeD{N}{k+1}q_{+},\numberthis
\end{align*}
with 
\begin{align*}\label{Evo2}
    q_{+}(\gamma,n,N,k)&=\gamma n\Big(\beta^{N}_{k}\Big)^2=\gamma n (N-k) (k+1),\\
     q_{-}(\gamma,n,N,k)&=\gamma(n+1)\Big(\alpha^{N}_{k}\Big)^2=\gamma (n+1) k (N-k+1),\\
     q_{0}(\gamma,n,N,k)&=\gamma n (N-k)(k+1)+ \gamma(n+1)k(N-k+1)\numberthis.
\end{align*}
%
%
%
By exploiting Eq. (\ref{Evo}), from the stationary condition
\begin{eqnarray}
    \mathcal{L}(\rho^{st})=0,
\end{eqnarray}
we then obtain a set of $N+1$ equations for the diagonal elements of the stationary state $\{p_{0}^{st},p_{1}^{st},\dots,p_{N}^{st}\}$:
\begin{align*}\label{set}
    0 & = -p_{0}^{st}q_{0}(0)+p_{1}^{st}q_{-}(1),\\
    0 & = p_{0}^{st}q_{+}(0)-p_{1}^{st}q_{0}(1)+p_{2}^{st}q_{-}(2),\\
     0 & = p_{1}^{st}q_{+}(1)-p_{2}^{st}q_{0}(2)+p_{3}^{st}q_{-}(3),\\
     &\dots,\\
     0 & = p_{m-1}^{st}q_{+}(m-1)-p_{m}^{st}q_{0}(m)+p_{m+1}^{st}q_{-}(m+1),\\
     &\dots,\\
     0 & = p_{N-2}^{st}q_{+}(N-2)-p_{N-1}^{st}q_{0}(N-1)+p_{N}^{st}q_{-}(N),\\
     0 & = p_{N-1}^{st}q_{+}(N-1)-p_{N}^{st}q_{0}(N).\numberthis
\end{align*}
Above, while keeping the parameters $\gamma$, $n$, and $N$ constant across all equations, we explicitly indicated the dependence of $q_{-}$ and $q_{+}$ from Eq. \eqref{Evo2} solely on $k$. The set (\ref{set}), along with the conditions $0\le\alpha p_{k}^{st}\leq 1$ and $\sum_{k=0}^{N}p_{k}^{st}=1$, possesses a unique solution. Solving it becomes trivial when we observe that 
\begin{eqnarray}\label{balance}
    q_{0}(k)=q_{+}(k)+q_{-}(k),
\end{eqnarray} 
representing the expected conservation of total population. Thus, we can determine the ratio $\frac{p^{st}_{k+1}}{p^{st}_{k}}$ by equating the "left" and "right" currents across each cut:
\begin{eqnarray}
    p^{st}_k q_{+}(k)=p^{st}_{k+1} q_{-}(k+1)
\end{eqnarray}
for $k=0,1,\dots,N-1$, which gives
\begin{eqnarray}\label{q}
    \frac{p^{st}_{k+1}}{p^{st}_{k}}=\frac{n}{n+1}=: \alpha_c,
\end{eqnarray}
subject to normalization $\sum_{k=0}^{N}p_{k}^{st}=1$, returning 
\begin{eqnarray}\label{p0}
    p_{0}^{st}=\frac{1-\alpha_c}{1-\alpha_c^{N+1}}.
\end{eqnarray}
Defining $E_B(\rho):=tr\{\rho H_B\}$ as the internal energy of the system, ergotropy of a state $\rho$ is calculated as \cite{Allahverdyan_2004}
\begin{equation}\label{ergo}
    \mathcal{W}_B(\rho):=E_B(\rho)-E_B(\rho_p),
\end{equation}
where $\rho_p$ is the passive state associated with $\rho$ by $\rho_p=\min_{U}U\rho U^{\dagger}$, with the minimization over all unitaries. We notice that for $n>0$, Dicke states with superposition in energy basis are going to be activated. Each of them can be diagonalized in the energy eigenbasis by some unitary (see Fig. \ref{ErgoPic}). As a unitary does not influence the trace, this diagonalization merely concentrates all the weight $p^{st}_{k}$ on one of the degenerated energy levels in the energy subspace. Applying a unitary diagonalizing the system in every energy subspace is the first step in implementing an optimal unitary performing ergotropy extraction \eqref{ergo} (though on itself, it does not contribute directly to the change of the average energy of the system). This is accomplished by the second and final step: a permutation pushing every population $p_{k}$ (for $k\geq 2$) from its respective energy level $k$ to energy level $1$. Note that this is always possible, since there are ${N\choose{1}}=N$ degenerated energy levels with energy $1$, and only one of them is already occupied by $p_{1}$, so there is a place for $N-1$ ones from higher energy levels.

The state after the permutation is passive, hence no other unitary can lead to higher ergotropy. Based on the final permutation, we conclude that the ergotropy of the stationary state reads
\begin{align*}
    \mathcal{W}_B = \sum_{k=2}^{N} p_{k}^{st}(k-1) = N + \frac{\alpha_c}{1-\alpha_c}+\frac{N+\alpha_c}{-1+\alpha_c^{1+N}},\numberthis
\end{align*}
where we exploited Eqs. \eqref{q} and \eqref{p0}. The energy of the stationary state is obtained as
\begin{align*}\label{EBq}
    E_B = \sum_{k=1}^{N} p_{k}^{st}k = N+\frac{1}{1-\alpha_c}+\frac{N+1}{-1+\alpha_c^{1+N}}.\numberthis
\end{align*}
The first thing to note is that nonzero ergotropy necessitates having more than one cell in the QB, i.e., $N>1$. This is in fact expected because with one cell the steady state of the battery would be a Gibbs state which is a completely passive state with zero ergotropy \cite{Allahverdyan_2004}. Defining $\mathcal{R}_B\equiv E_B-\mathcal{W}_B$, as the residual passive energy in the QB after ergotropy extraction, it is seen that as both the temperature of the shared reservoir $T_R$ and the number of cells $N$ in the battery approach infinity $\mathcal{R}_B$ approaches 1. This implies that the minimum amount of quanta of energy left in the QB after ergotropy extraction can never be zero but one. This is explained by the fact that ground state lies in the first block of the initial density state (with the largest angular momentum $j$) (see Fig. \ref{ErgoPic}).

Furthermore, in this limit, both energy and ergotropy exhibit linear growth, demonstrating additive behavior. Figure \ref{Energy} illustrates the steady-state energy $E_B$ and ergotropy $\mathcal{W}_B$ plotted against the number of cells $N$ for various values of $q$. Typically with temperature going up the mixedness of the state should go up as well, which causes the ergotropy of the state to go down. But, here it is the complete opposite - the collective character of thermal energy structure completely suppresses the disadvantage of the noise: eventually, for the high temperature the state looks \textit{purer} from the perspective of the system energy Hamiltonian.
\begin{figure}[ht]
\center
\includegraphics[width=\textwidth]{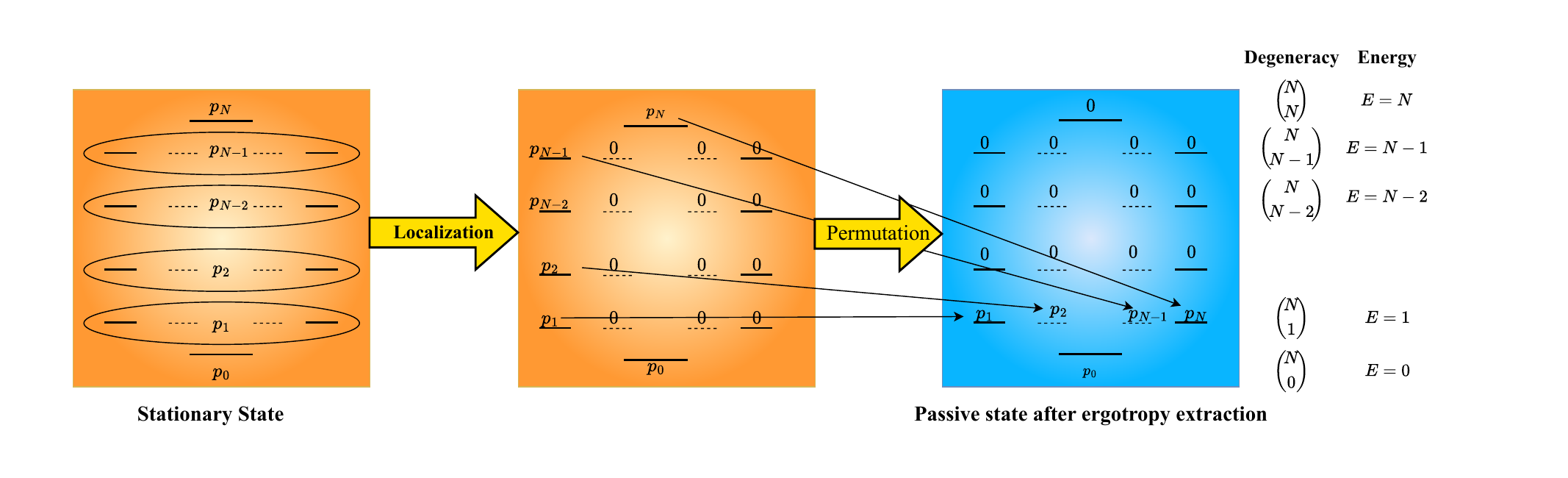}
\caption{Extraction of ergotropy $\mathcal{W}_B$ from the stationary state of the evolution initiated in the ground state (bottom line in each diagram, corresponding to state $|00\dots0\rangle$). Localized states of given energy are represented by lines, e.g. $|10\dots0\rangle$, $|01\dots0\rangle$, $\dots$, $|00\dots1\rangle$ for energy level $E=1$.   Evolution is restricted to the symmetric subspace and results in thermalization in the Dicke basis (left, with superpositions of the type $|10\dots0\rangle + |01\dots0\rangle + \dots + |00\dots1\rangle$ symbolized by oval contours). Then, localization of energy is obtained by a unitary acting independently in energy subspaces (middle), and a final permutation drives the system to the passive state.\justifying}\label{ErgoPic}
\end{figure}
\begin{figure}[t]
    \centering
    \begin{subfigure}[b]{0.32\textwidth}
        \includegraphics[width=\textwidth]{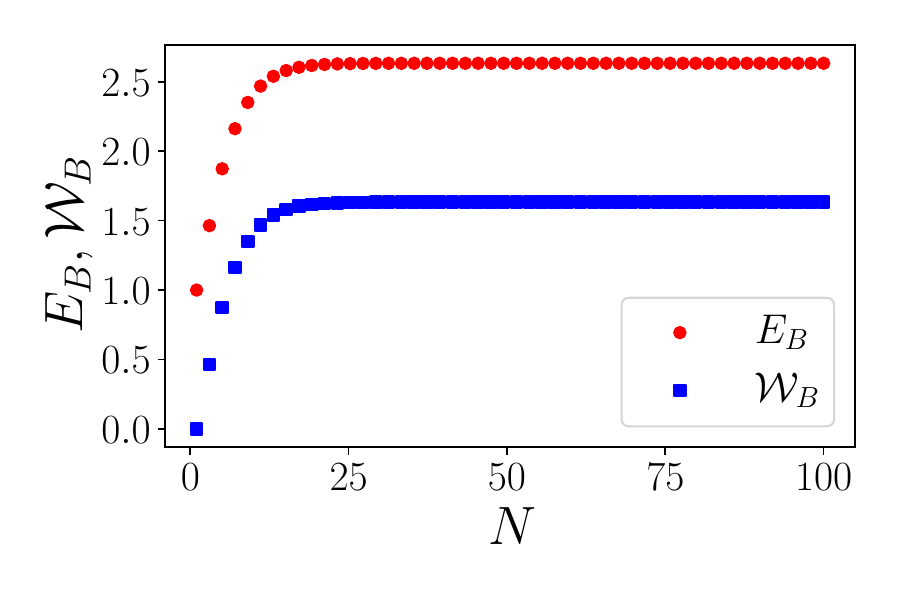}\caption{$\alpha_c=0.7$}
        \label{Energy-q=0.7}
    \end{subfigure}
    \begin{subfigure}[b]{0.32\textwidth}
\includegraphics[width=\textwidth]{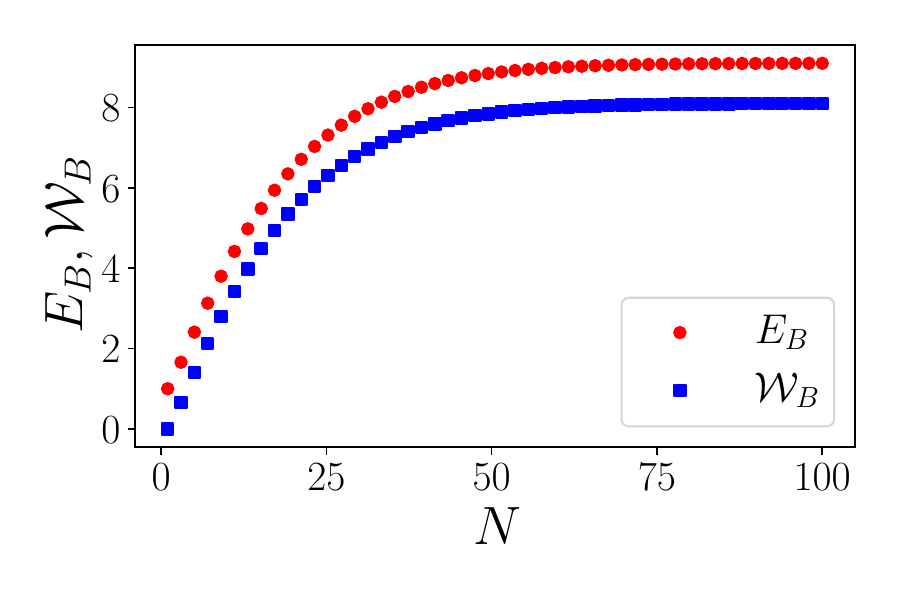}
        \caption{$\alpha_c=0.9$}\label{Energy-q=0.9}
    \end{subfigure}
    \begin{subfigure}[b]{0.32\textwidth}
\includegraphics[width=\textwidth]{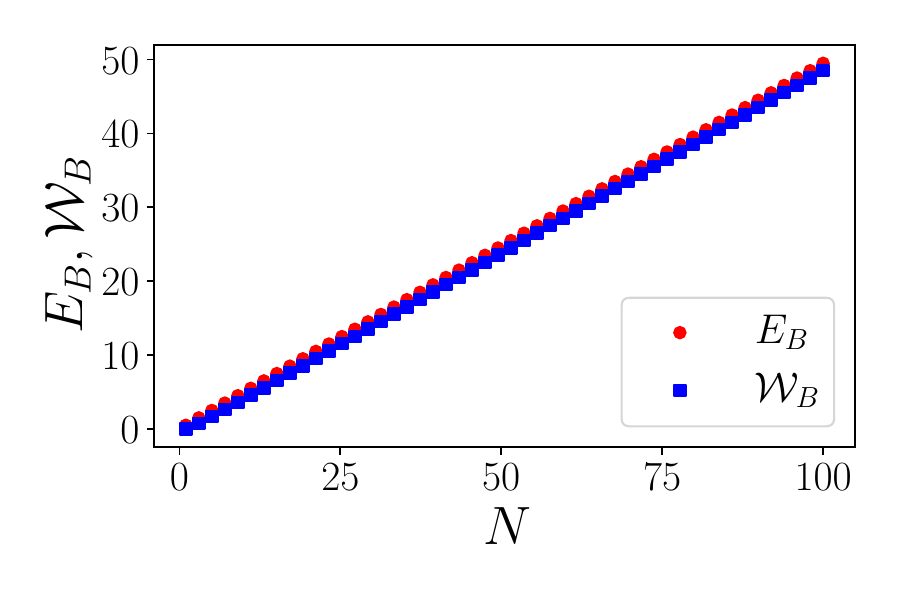}
        \caption{$\alpha_c=1$}\label{Energy-q=1}
    \end{subfigure}
    \captionsetup{justification=justified}
    \caption{Energy $E_{\mathrm{SS}}$ and ergotropy $\mathcal{W}$ of the QB in the steady state versus the number of cells $N$ inside the QB for various $\alpha_c$. As $\alpha_c\rightarrow1$ ($T_R\rightarrow\infty$) both ergotropy $\mathcal{W}$ and energy $E_{\mathrm{SS}}$ exhibit linear growth, illustrating their additivity behaviors. The collective character of thermal energy structure  suppresses completely the disadvantage of the noise. Here we have $\omega=1$ and $\gamma_c=0.05$.\justifying} 
    \label{Energy}
\end{figure}

\section{Optimal form of the initial state}\label{OptimalInitial}

In this section, we provide numerical evidence for the optimal form of the initial state of the dynamics, from the perspective of maximizing ergotropy in the stationary regime, with costs of initial state preparation taken into account. 
We start with the state of the form
\begin{align}
    \rho=\otimes_{i=1}^{N}\rho_{\beta_{q}},
\end{align}
where $\rho_{\beta_{q}}$ stands for local qubit Gibbs states with inverse temperature $\beta_{q}$ and
\begin{align}
    \rho_{\beta_{q}}=
\frac{1}{Z}\begin{pmatrix}
    1 & 0\\
    0 & e^{-\beta_{q}}
\end{pmatrix},
\end{align}
with $Z=1+e^{-\beta_{q}}$. A Haar-random unitary $U$ acting on the entire system is then applied to construct the initial state of the evolution:
\begin{align}
    \rho(t_0)=U\rho_{\beta{q}} U^{\dagger}
\end{align}
and ergotropic balance  $\Delta\mathcal{W}$ in the stationary limit is calculated taking into account energetic cost of the initial rotation
\begin{align}
    \Delta\mathcal{W}=\mathcal{W}(\lim_{t\rightarrow\infty}\rho(t))-\mathcal{W}(\rho(t_0))
\end{align}
(note that ergotropy of the product of Gibbs states is zero). In Fig. \ref{DeltaW} we present some more values of ergotropic balance for a system of $N=4$ qubits, with a series of random unitaries, local initialization inverse temperatures $\beta_{q}$, and an inverse temperature $\beta_c=\{0.1, 1\}$ of the common reservoir. 
\begin{figure}[htbp]
    \centering
    \begin{subfigure}{0.49\textwidth}
        \centering
        \includegraphics[width=\textwidth]{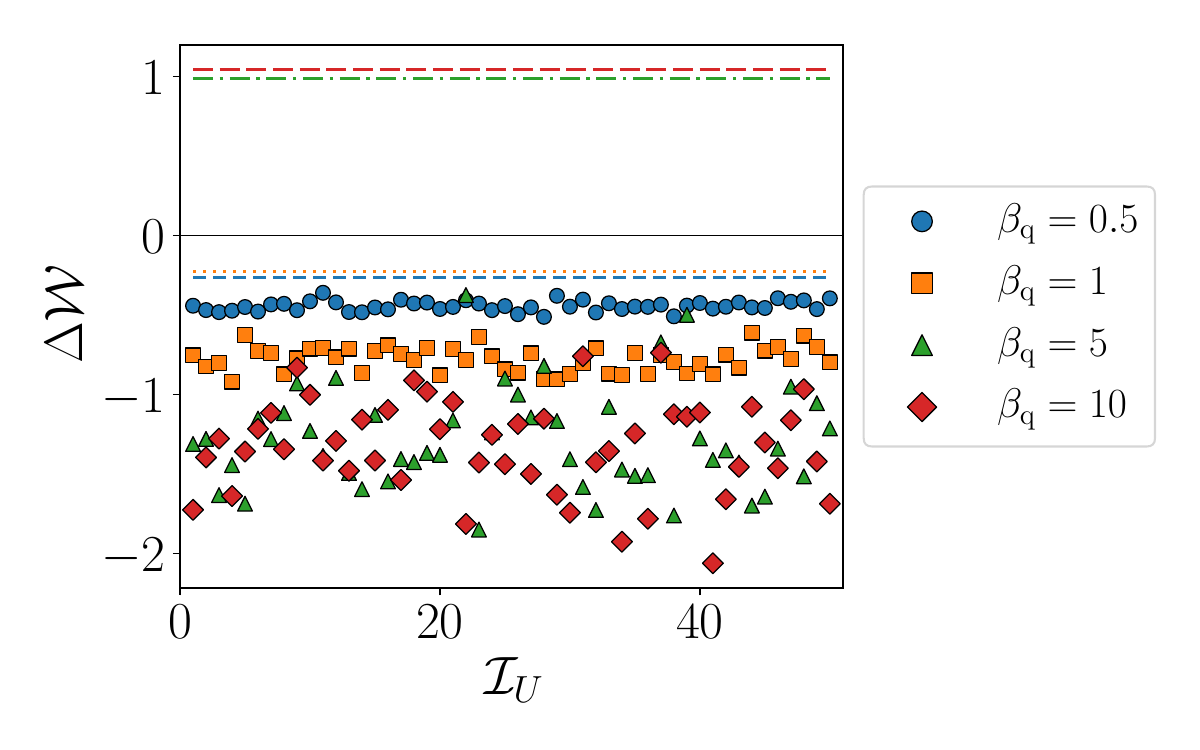}
        \caption{Ergotropic balance for $\beta_{c} = 0.1$}
        \label{fig:DW1}
    \end{subfigure}
    \hfill
    \begin{subfigure}{0.49\textwidth}
        \centering
        \includegraphics[width=\textwidth]{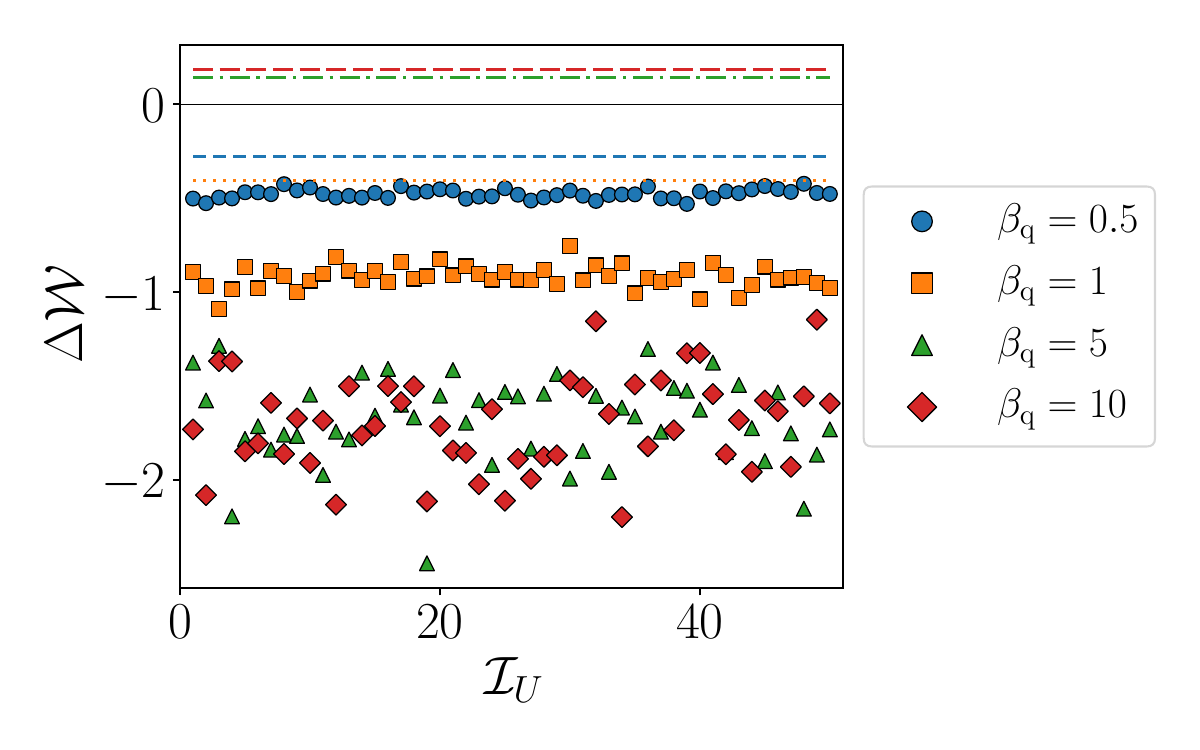}
        \caption{Ergotropic balance for $\beta_{c} = 1$}
        \label{fig:DW2}
    \end{subfigure}
    \caption{Further comparison of ergotropic balance $\Delta \mathcal{W}$ against the ordinality of the random unitary rotation $\mathcal{I}_{U}$ for different values of $\beta_{q} \in \{0.5, 1, 5, 10\}$ for collective bath temperatures $\beta_{c} = 0.1$ and $\beta_{c} = 1$.\justifying}
    \label{DeltaW}
\end{figure}

\section{Robustness against local noise}
\begin{figure}[t]
    \centering
    \begin{subfigure}[b]{0.32\textwidth}
        \includegraphics[width=\textwidth]{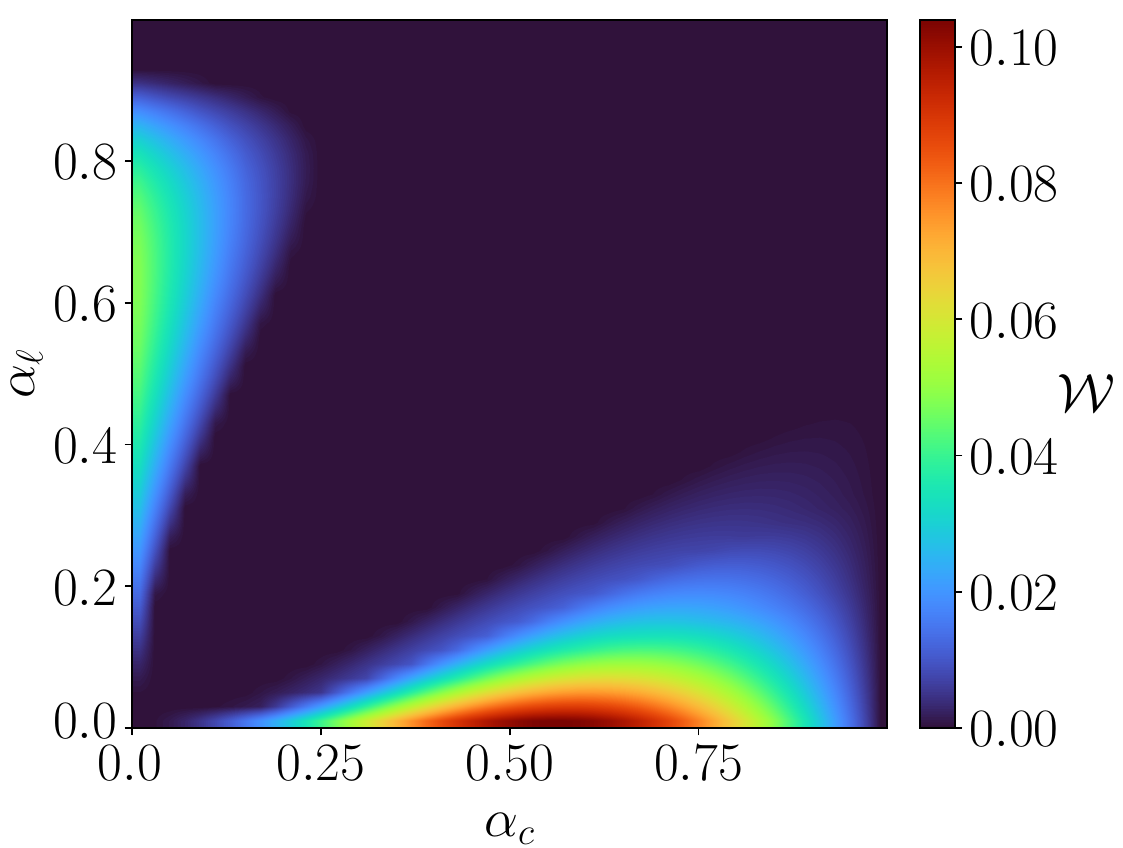}\caption{$N=4$}
        \label{N4_eta0.90}
    \end{subfigure}
    \begin{subfigure}[b]{0.32\textwidth}
\includegraphics[width=\textwidth]{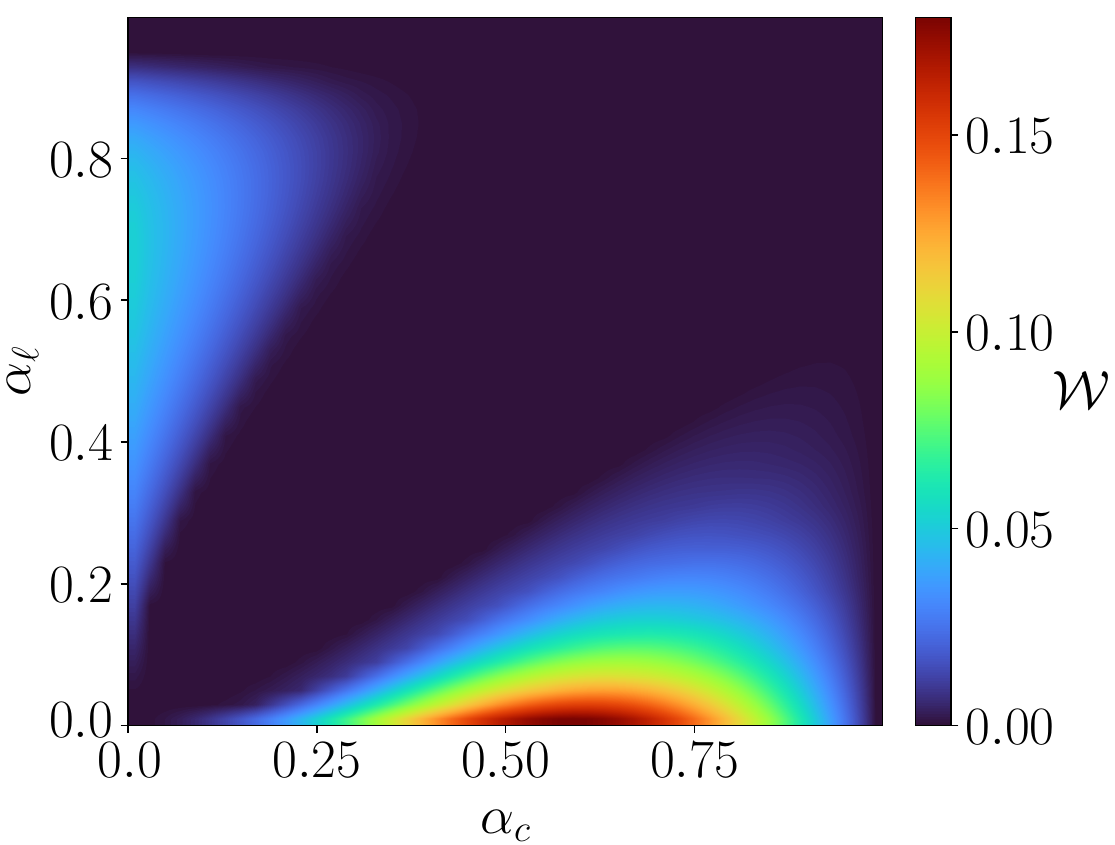}
        \caption{$N=5$}\label{N5_eta0.90}
    \end{subfigure}
    \begin{subfigure}[b]{0.32\textwidth}
\includegraphics[width=\textwidth]{N7_eta0.90.pdf}
        \caption{$N=7$}\label{N7_eta0.90}
    \end{subfigure}
    \captionsetup{justification=justified}
    \caption{\textbf{Finite-temperature optima under partial collectivity.}
    Steady-state ergotropy $\mathcal{W}$ for $N=4,5,7$, $\gamma_r=1$ (where $\gamma_r \equiv \gamma_\ell/\gamma_c$) with collective fraction $\eta=0.9$ as a function of the collective and local Bose ratios $(\alpha_c,\alpha_\ell)$. Unlike the ideal case, the maximum shifts to a finite \(\alpha_c^\star<1\) and a nonzero \(\alpha_\ell^\star>0\), reflecting a trade-off between interference-enabled collective pumping (favored by larger \(\alpha_c\)) and which-path information introduced by local channels (growing with \(\alpha_\ell\)) that degrades interference and drives the system toward passivity.\justifying} 
    \label{ErgoEta}
\end{figure}

\subsection{Partial collectivity (\(\eta=0.9\)) and finite-temperature optima}
With imperfections, we model the dynamics by the convex interpolation \(\mathcal L_\eta=\eta\,\mathcal L_{\rm coll}+(1-\eta)\,\mathcal L_{\rm local}\) (Sec.~“Robustness to imperfections” in the main text), where \(\mathcal L_{\rm coll}\) acts via \(J_\pm\) at Bose factor \(\alpha_c=e^{-\beta_c}\) and \(\mathcal L_{\rm local}\) via \(\{\sigma_\pm^{(i)}\}\) at \(\alpha_\ell=e^{-\beta_\ell}\).
The local channel breaks permutation symmetry, mixes total-spin sectors, and renders the evolution primitive \cite{Evans1977,Frigerio1978}, leading to a unique full-rank stationary state. Figure~\ref{ErgoEta} shows that, at fixed \(\eta=0.9\), the steady-state ergotropy \(\mathcal{W}\) is \emph{non-monotonic} in \((\alpha_c,\alpha_\ell)\): it peaks at a finite \(\alpha_c^\star<1\) and a nonzero \(\alpha_\ell^\star>0\).
This arises from a competition:
(i) increasing \(\alpha_c\) enhances collective upward pumping along bright ladders (which favors nonpassivity),
while (ii) increasing \(\alpha_\ell\) injects which-path information and local detailed balance that erode the interference-protected structure and drive the state toward passivity.
Consequently, when $\eta<1$ there is a trade-off: too small $\alpha_c$ underpopulates the bright ladders, while too large $\alpha_c$ in the presence of local noise overheats and undermines the interference-protected structure; likewise, a small but nonzero $\alpha_\ell$, with the help of interference caused by the collective reservoir, can facilitate transport across excitation sectors, but excessive local noise suppresses $\mathcal{W}$ (see Sec. \ref{overheating} for analytical detail). The lobes in Fig.~\ref{ErgoEta} at moderate–high $\alpha_c$ and small–moderate $\alpha_\ell$ captures precisely this compromise. As $N$ increases the interference within the QB increases paving the way for more energy transfer.

\subsection{Dependence of the activation lobe on the collective fraction \texorpdfstring{$\eta$}{eta}}

\subsubsection*{Activation-point shift for $\eta<0.8$}

Figure~\ref{ErgoEta_Shift} shows $\mathcal W$ versus $(\alpha_c,\alpha_\ell)$ for $\eta\in\{0.4,0.6,0.8\}$. A pronounced activation lobe appears along the $\alpha_c$-axis at small $\alpha_\ell$ in all panels.
As $\eta$ decreases below $0.8$, two effects are evident:
\begin{itemize}
\item \textbf{Peak shift.} The maximum moves to larger $\alpha_c$ (hotter common reservoir), indicating that reduced collectivity requires stronger collective pumping to populate the bright ladders.
\item \textbf{Nearly constant width along $\alpha_c$.} The \emph{span of the lobe in $\alpha_c$} remains approximately unchanged across the three panels, while its extent in $\alpha_\ell$ and its amplitude vary. Thus the dominant impact of lowering $\eta$ is to shift the optimum $\alpha_c^\star$ rather than to compress the lobe along the collective-temperature axis demonstrating robustness against imperfection.
\end{itemize}

This behavior reflects the balance in $\mathcal L_\eta=\eta\,\mathcal L_{\mathrm{coll}}+(1-\eta)\,\mathcal L_{\mathrm{local}}$: decreasing $\eta$ weakens the net collective pumping (shifting $\alpha_c^\star$ rightward) but does not substantially alter the \emph{range of $\alpha_c$} over which collective pumping outcompetes local mixing for small–moderate $\alpha_\ell$.
\begin{figure}[t]
    \centering
    \begin{subfigure}[b]{0.32\textwidth}
        \includegraphics[width=\textwidth]{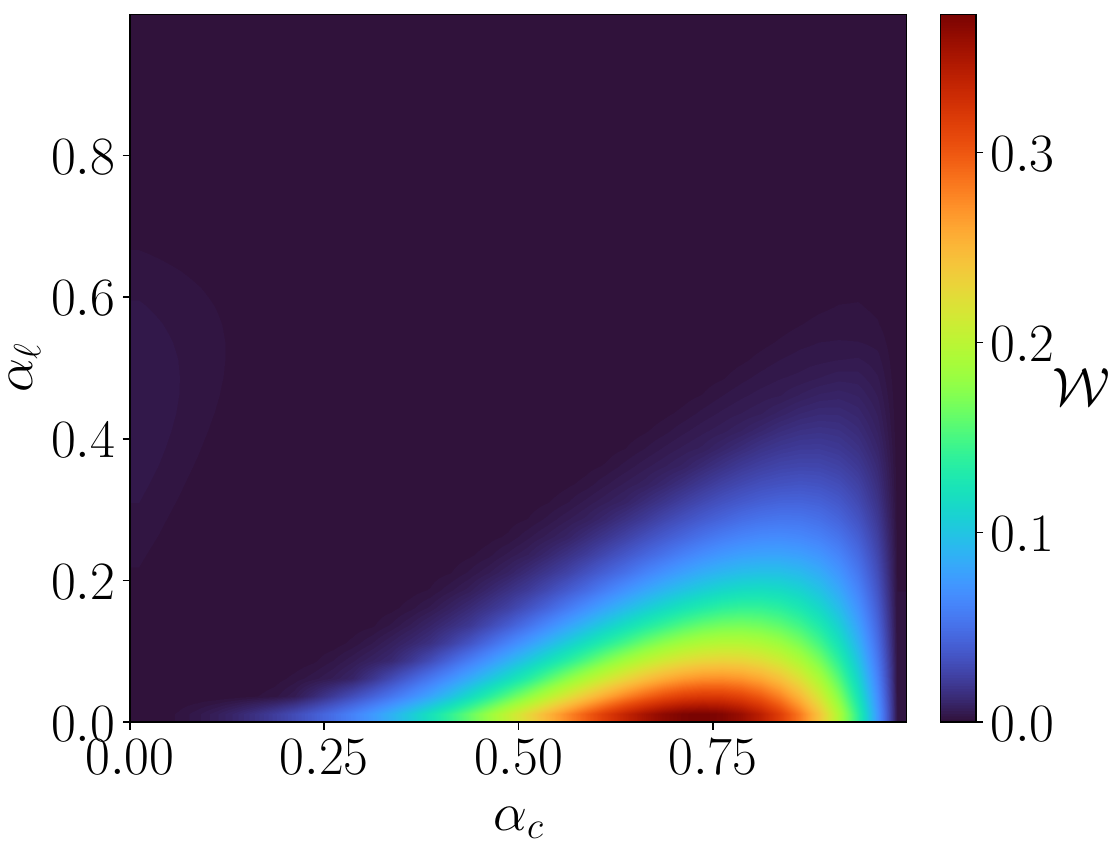}\caption{$\eta=0.4$}
        \label{N7_eta0.40}
    \end{subfigure}
    \begin{subfigure}[b]{0.32\textwidth}
\includegraphics[width=\textwidth]{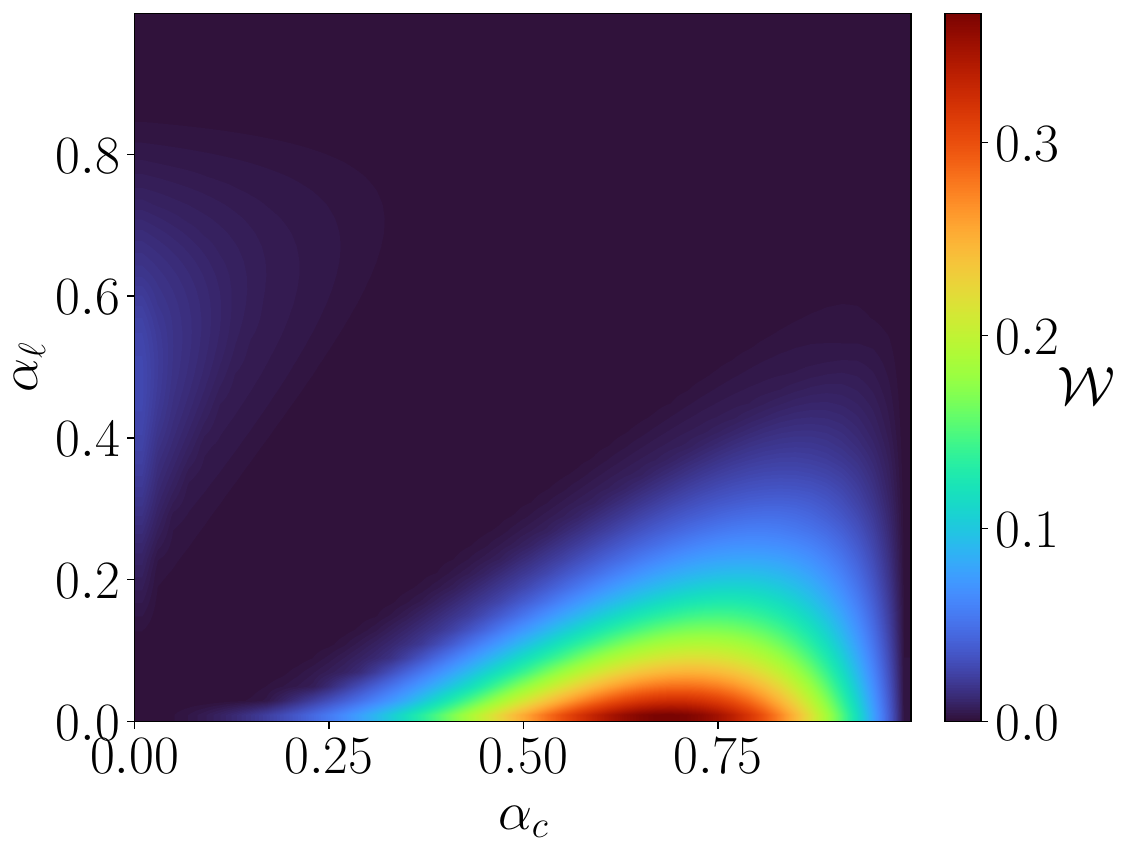}
        \caption{$\eta=0.6$}\label{N6_eta0.60}
    \end{subfigure}
    \begin{subfigure}[b]{0.32\textwidth}
\includegraphics[width=\textwidth]{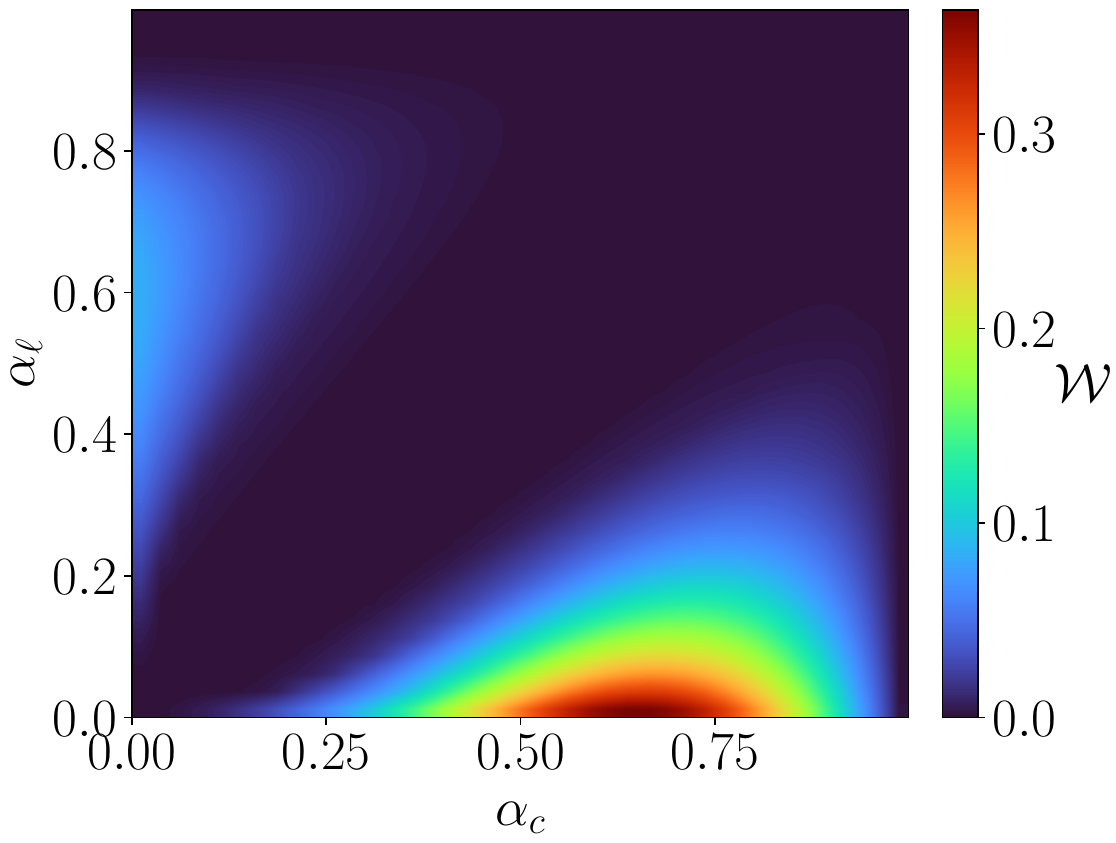}
        \caption{$\eta=0.8$}\label{N7_eta0.80}
    \end{subfigure}
    \captionsetup{justification=justified}
    \caption{\textbf{Activation shift with nearly constant $\alpha_c$ width as collectivity is reduced.}
    Steady-state ergotropy $\mathcal W$ for $N=7$, $\gamma_r=1$ versus $(\alpha_c,\alpha_\ell)$ for (a) $\eta=0.4$, (b) $\eta=0.6$, and (c) $\eta=0.8$.
    Lower $\eta$ shifts the activation maximum to higher $\alpha_c$ and lowers its amplitude, while the lobe’s width along the $\alpha_c$ axis remains approximately constant. 
    Hence, reduced collectivity mainly requires a hotter common reservoir to compensate for weaker collective pumping; the tolerance to $\alpha_c$ is largely preserved, whereas the dependence on $\alpha_\ell$ and the peak height are more sensitive to $\eta$.\justifying} 
    \label{ErgoEta_Shift}
\end{figure}

\subsubsection{No activation-point shift for $\eta\gtrsim0.8$}

Figure~\ref{ErgoEta_NShift} displays the steady-state ergotropy $\mathcal W_{\mathrm{SS}}$ as a function of the collective and local Bose ratios $(\alpha_c,\alpha_\ell)$ for three values of the collective fraction, $\eta\in\{0.8,0.9,0.95\}$. In all panels, a pronounced \emph{activation lobe} appears along the $\alpha_c$-axis at small $\alpha_\ell$, reflecting ergotropy generated by the collective channel $\mathcal L_{\mathrm{coll}}$ (via $J_\pm$) when which-path noise from the local channel is weak.
As $\eta$ decreases (from right to left), two systematic effects are visible:
\begin{itemize}
\item \textbf{Tolerance to local noise shrinks.} The lobe narrows in the $\alpha_\ell$ direction, indicating a reduced range of local temperatures that still preserves interference and nonpassivity when the dynamics contains a larger local component.
\item \textbf{The optimal collective temperature is only weakly $\eta$-dependent.} The location of the maximum, $\alpha_c^\star$, shifts only slightly (here within $\alpha_c^\star \approx 0.6$) as $\eta$ varies from $0.95$ to $0.8$ again reflecting the robustness against imperfection. Thus, once collectivity is reasonably high ($\eta\gtrsim 0.8$), the activation point is primarily set by the collective ladder structure.
\end{itemize}
These trends follow from the competition in $\mathcal L_\eta=\eta\,\mathcal L_{\mathrm{coll}}+(1-\eta)\,\mathcal L_{\mathrm{local}}$: collective pumping along Dicke ladders (favored by larger $\alpha_c$) versus local, which-path–resolving jumps that mix symmetry sectors and drive the state toward passivity as $\alpha_\ell$ or $1-\eta$ increase.
\begin{figure}[t]
    \centering
    \begin{subfigure}[b]{0.32\textwidth}
        \includegraphics[width=\textwidth]{N7_eta0.80.pdf}\caption{$\eta=0.8$}
        \label{N7_eta0.80}
    \end{subfigure}
    \begin{subfigure}[b]{0.32\textwidth}
\includegraphics[width=\textwidth]{N7_eta0.90.pdf}
        \caption{$\eta=0.9$}\label{N7_eta0.90}
    \end{subfigure}
    \begin{subfigure}[b]{0.32\textwidth}
\includegraphics[width=\textwidth]{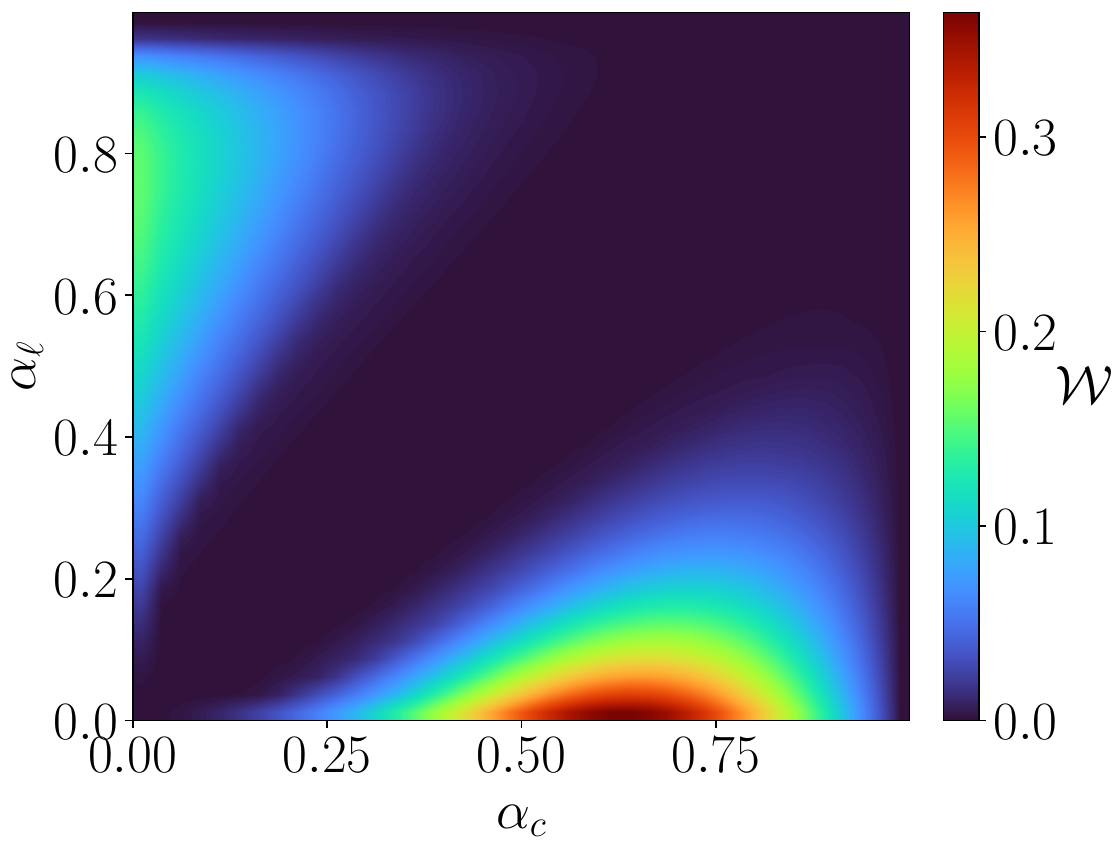}
        \caption{$\eta=0.95$}\label{N7_eta0.95}
    \end{subfigure}
    \captionsetup{justification=justified}
    \caption{\textbf{Dependence of the activation lobe on the collective fraction \(\eta\).}
    Steady-state ergotropy \(\mathcal W\) for $N=7$, $\gamma_r=1$ versus collective and local Bose ratios \((\alpha_c,\alpha_\ell)\) for (a) \(\eta=0.8\), (b) \(\eta=0.9\), and (c) \(\eta=0.95\).
    A pronounced activation lobe appears along the \(\alpha_c\)-axis at small \(\alpha_\ell\), evidencing ergotropy generated by the collective channel.
    As \(\eta\) decreases, the lobe on \(\alpha_\ell\) narrows (reduced tolerance to local which-path noise) and its peak value drops, while the optimal collective setting \(\alpha_c^\star\) moves only slightly within a narrow window (\(\alpha_c^\star \approx 0.6\)).\justifying} 
    \label{ErgoEta_NShift}
\end{figure}

\subsection{Effect of the local-to-collective rate ratio $\gamma_r$}

Figure~\ref{ErgoEta_gamma_r} shows $\mathcal W_{\mathrm{SS}}$ for $N=7$, $\gamma_r\in\{0.01,0.1,1\}$ where $\gamma_r \equiv \frac{\gamma_\ell}{\gamma_c}$. Two systematic trends are evident:
\begin{itemize}
\item \emph{Nearly no shift of the optimum:} the optimal collective setting $\alpha_c^\star$ is weakly affected as $\gamma_r$ grows.
\item \emph{Selective sensitivity to $\alpha_\ell$:} the lobe retains an approximately constant span along $\alpha_c$, while its extent in $\alpha_\ell$ shrinks; thus small–moderate local temperatures remain compatible with activation.
\end{itemize}
A further observation from Fig.~\ref{ErgoEta_gamma_r} is the contrasting role of the local channel. For weak local dissipation ($\gamma_\ell\ll\gamma_c$; see Figs.~\ref{N7_eta0.90_r0.01}–\ref{N7_eta0.90_r0.1}), the activation lobe along the $\alpha_\ell$-axis is broad and resilient: suppression sets in only at relatively large local temperatures $\alpha_\ell$. As $\gamma_\ell$ increases (compare $\gamma_r=0.01\to 1$; Fig.~\ref{N7_eta0.90_r1}), the same lobe is quenched much earlier—moderate $\alpha_\ell$ already erases the activation. This trend is consistent with the overheating analysis in Sec.~\ref{overheating}; notably, the $\alpha_c$ lobe’s location and width are almost unaffected. Together, these observations quantify the robustness window: ergotropy activation is sustained when the collective channel remains comparable to or stronger than the local one ($\gamma_r\lesssim1$), and it degrades smoothly as $\gamma_r$ increases.

\begin{figure}[t]
    \centering
    \begin{subfigure}[b]{0.32\textwidth}
        \includegraphics[width=\textwidth]{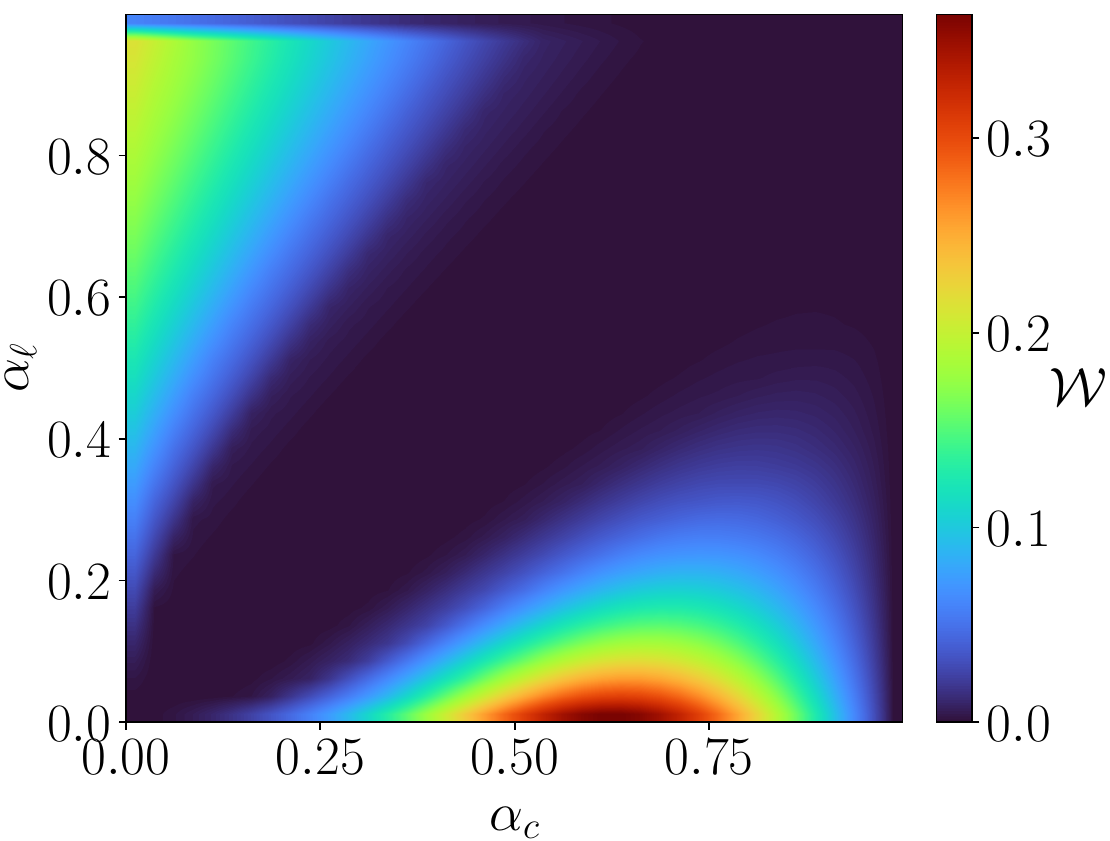}\caption{$\gamma_r=0.01$}
        \label{N7_eta0.90_r0.01}
    \end{subfigure}
    \begin{subfigure}[b]{0.32\textwidth}
\includegraphics[width=\textwidth]{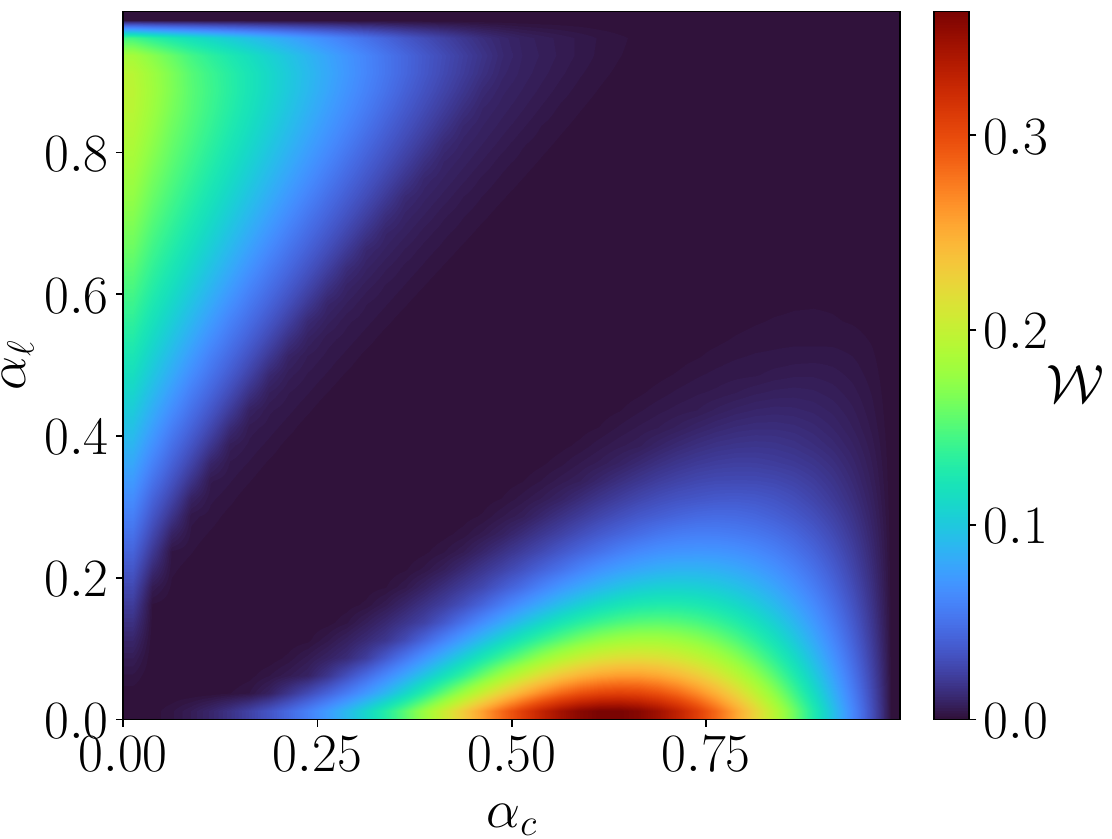}
        \caption{$\gamma_r=0.1$}\label{N7_eta0.90_r0.1}
    \end{subfigure}
    \begin{subfigure}[b]{0.32\textwidth}
\includegraphics[width=\textwidth]{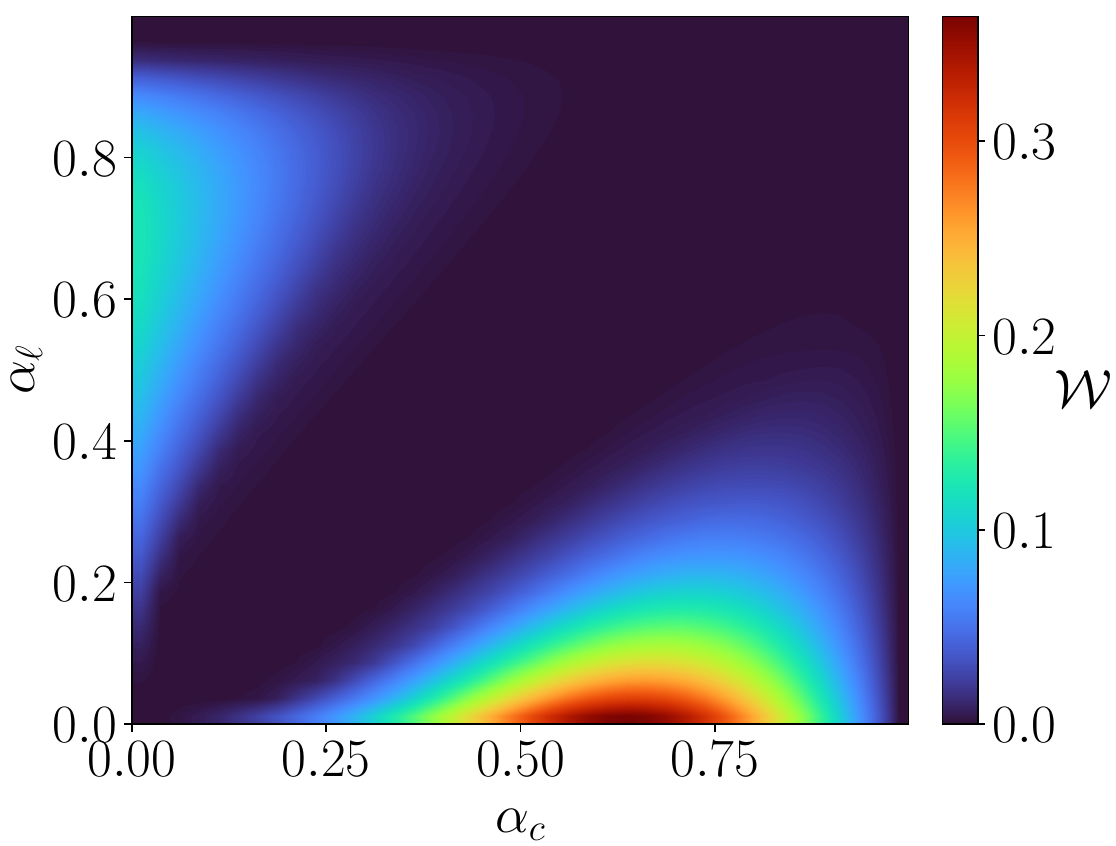}
        \caption{$\gamma_r=1$}\label{N7_eta0.90_r1}
    \end{subfigure}
    \captionsetup{justification=justified}
    \caption{\textbf{Competition between collective and local dissipation.}
    Steady-state ergotropy $\mathcal W$ versus the collective and local Bose ratios $(\alpha_c,\alpha_\ell)$ for $\eta = 0.9$, $N=7$ and three values of the dissipation ratio:
    (a) $\gamma_r=0.01$, (b) $\gamma_r=0.1$, and (c) $\gamma_r=1$.\justifying} 
    \label{ErgoEta_gamma_r}
\end{figure}

\section{The overheating mechanism}\label{overheating}

In this section we give a derivation of the leakage caused by the local (site-resolved) dissipator out of the fully symmetric Dicke sector and show how this induces a finite-temperature optimum of the collective bath when $\eta<1$.

\paragraph*{Setting and notation.}
Let $N$ two-level systems evolve under the GKLS generator
\begin{equation}
\mathcal L \;=\; \eta\,\gamma_c\!\left[(n_c{+}1)\mathcal D[J_-]+n_c\,\mathcal D[J_+]\right]
\;+\; (1{-}\eta)\,\gamma_\ell \sum_{i=1}^N\!\left[(n_\ell{+}1)\mathcal D[\sigma_-^{(i)}]+n_\ell\,\mathcal D[\sigma_+^{(i)}]\right]
\;-\; i[H,\cdot],
\label{Liov}
\end{equation}
with $\mathcal D[L]\rho=L\rho L^\dagger-\tfrac12\{L^\dagger L,\rho\}$, $0\le \eta\le1$, $\gamma_c,\gamma_\ell>0$, and Bose factors $n_x=\frac{\alpha_x}{(1-\alpha_x)}\in[0,\infty)$. The Hamiltonian $H$ is bounded and permutation covariant (e.g. $H_B{+}$Lamb shift), so it preserves total spin $j$ and does not contribute to leakage between SU(2) irreps. Let $P_{\rm sym}$ denote the projector onto the fully symmetric SU(2) irrep $j=\frac{N}{2}$ (the “bright” sector) $P_{\rm sym}=\mathbb I_{\{j(j+1)\}}(J^2)$, and $Q_{\rm sym}=\mathbb I-P_{\rm sym}$ its complement. Define the excitation-number operator
\begin{equation}
\hat{K} \;:=\; \sum_{i=1}^N \sigma_+^{(i)}\sigma_-^{(i)}, \qquad 0\le K\le N,
\label{excitation-K}
\end{equation}
where $K=\langle\hat{K}\rangle_\rho$ and the \emph{leakage functional} generated by $\dot\rho=\mathcal L[\rho]$,
\begin{equation}
\Lambda(\rho(t)) \;:= -\frac{d}{dt}\tr\!\big(P_{\rm sym}\rho(t)P_{\rm sym}\big)
=-\tr\!\big(P_{\rm sym}\mathcal L[\rho(t)]P_{\rm sym}\big).
\end{equation}
Thus $\Lambda(\rho(t))$ is the instantaneous loss rate of bright-sector weight \emph{at time $t$}.
\begin{lemma}\label{PJcommute}
Let $J_\pm=\sum_{i=1}^N \sigma_\pm^{(i)}$ and $J^2:=J_x^2+J_y^2+J_z^2$.
Let $P_{\rm sym}$ be the spectral projector of $J^2$ onto the eigenspace with
$j=\tfrac N2$ (the fully symmetric subspace). Then
\[
[P_{\rm sym},\,J_\pm]=0.
\]
\end{lemma}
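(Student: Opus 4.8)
The plan is to deduce the commutation $[P_{\rm sym}, J_\pm] = 0$ from the single well-known fact that the quadratic Casimir $J^2$ commutes with the ladder operators $J_\pm$, and then to promote this to the spectral projector by observing that $P_{\rm sym}$ is a polynomial in $J^2$.

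First I would verify $[J^2, J_\pm] = 0$. The collective operators $J_x, J_y, J_z$ are total angular-momentum generators built from the $N$ identical single-qubit spin-$\tfrac12$ representations, so they satisfy the $\mathfrak{su}(2)$ relations $[J_z, J_\pm] = \pm J_\pm$ and $[J_+, J_-] = 2J_z$. Writing $J^2 = J_z^2 + \tfrac12(J_+J_- + J_-J_+)$ and using these relations, a short direct computation gives $[J^2, J_+] = 0$ (and likewise for $J_-$); this is precisely the statement that the Casimir is central in the universal enveloping algebra, so I would quote it rather than expand the brackets in full.

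Next I would upgrade this to the projector. Because the Hilbert space $(\mathbb{C}^2)^{\otimes N}$ is finite-dimensional, $J^2$ is Hermitian with finitely many distinct eigenvalues $\lambda_j = j(j+1)$, one for each allowed $j$. The projector $P_{\rm sym}$ onto the $j = N/2$ eigenspace is then expressible through Lagrange interpolation as a polynomial $P_{\rm sym} = p(J^2)$ in $J^2$ alone. Any operator commuting with $J^2$ commutes with every power of $J^2$, hence with $p(J^2)$; applying this to $J_\pm$ yields $[P_{\rm sym}, J_\pm] = 0$, which is the claim.

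The argument is essentially routine, so there is no serious obstacle; the only points that warrant care are (i) confirming that the collective $J_\pm$ genuinely close into the $\mathfrak{su}(2)$ algebra—guaranteed because each site carries an identical spin-$\tfrac12$ irrep and total generators always do—and (ii) invoking finite-dimensionality to justify that $P_{\rm sym}$ is a polynomial in $J^2$ rather than merely a bounded Borel function, which renders the commutation manifest without analytic subtleties. An equivalent basis-level phrasing I could give instead is that $J_\pm$ maps each $J^2$-eigenspace into itself (since it commutes with $J^2$), so it is block-diagonal with respect to the orthogonal decomposition by $j$, and $P_{\rm sym}$ is exactly the projector onto one such block; block-diagonal operators commute with block projectors.
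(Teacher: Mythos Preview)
Your proposal is correct and follows essentially the same route as the paper: both establish $[J^2,J_\pm]=0$ from the $\mathfrak{su}(2)$ relations and then promote this to the spectral projector. The only cosmetic difference is that the paper invokes functional calculus (any bounded Borel function of $J^2$ commutes with $J_\pm$), whereas you use the finite-dimensional Lagrange interpolation to write $P_{\rm sym}$ as a polynomial in $J^2$; your version is slightly more elementary and sidesteps the analytic language, but the substance is identical.
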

\begin{proof}
The SU(2) commutation relations imply
$[J_z,J_\pm]=\pm J_\pm$ and $[J_+,J_-]=2J_z$, from which one obtains
\[
[J^2,\,J_\pm]=\big[J_x^2+J_y^2+J_z^2,\,J_\pm\big]=0.
\]
Hence $J_\pm$ commute with $J^2$ and leave every spectral subspace of $J^2$
invariant. By functional calculus, every bounded Borel function $f(J^2)$
commutes with $J_\pm$; in particular the spectral projector$P_{\rm sym}=\mathbb I_{\{j(j+1)\}}(J^2)$ onto the $j=\tfrac N2$ eigenspace
commutes with $J_\pm$. Therefore $[P_{\rm sym},J_\pm]=0$.
\end{proof}
Define the bright weight $p_{\rm sym}=\tr(P_{\rm sym}\rho P_{\rm sym})\in[0,1]$ and the normalized bright component $\rho_{\rm b}=\frac{P_{\rm sym}\rho P_{\rm sym}}{p_{\rm sym}}$. It should be noted that only the local channel changes the $p_{\rm sym}$ since
\begin{equation}
    \tr(P_{\rm sym}\mathcal{D}[J_{\mp}](\rho)P_{\rm sym}) = \tr(P_{\rm sym}J_{\mp}\rho J_{\pm} P_{\rm sym}) - \frac{1}{2}\tr(P_{\rm sym}J_{\pm}J_{\mp}P_{\rm sym}\rho P_{\rm sym}) - \frac{1}{2}\tr(P_{\rm sym}\rho P_{\rm sym}J_{\pm}J_{\mp}) = 0.
\end{equation}
Therefore, we have
\begin{equation}\label{lambdaP}
\Lambda(\rho(t)) \;= -\tr\!\big(P_{\rm sym}\mathcal L_\ell[\rho(t)]P_{\rm sym}\big).
\end{equation}
\begin{lemma}[Lowering-induced leakage]
\label{LemLower}
On the bright ladder $j=\frac{N}{2}$,
\begin{equation}
\sum_{i=1}^N \tr\!\big(P_{\rm sym}\,\mathcal D[\sigma_-^{(i)}]\rho P_{\rm sym}\big)
\;=\; -\frac{p_{\rm sym}}{N}\,\big\langle \hat{K}(\hat{K}{-}1)\big\rangle_{\rho_b}.
\label{Plower}
\end{equation}
\end{lemma}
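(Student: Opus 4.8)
The plan is to pass to the Heisenberg (adjoint) picture and collapse the whole $i$-sum onto a single symmetric-sector operator built from $J_+J_-$ and $\hat K$. Using $\mathcal D^\dagger[\sigma_-^{(i)}]O=\sigma_+^{(i)}O\,\sigma_-^{(i)}-\tfrac12\{\sigma_+^{(i)}\sigma_-^{(i)},O\}$ together with $\tr(O\,\mathcal D[L]\rho)=\tr((\mathcal D^\dagger[L]O)\rho)$ and $\tr(P_{\rm sym}A\,P_{\rm sym})=\tr(A\,P_{\rm sym})$, I would first rewrite the left-hand side (read as the bright-ladder contribution, i.e. with $\rho$ replaced by its bright part $P_{\rm sym}\rho P_{\rm sym}=p_{\rm sym}\rho_{\rm b}$) as $\tr\!\big(X\,P_{\rm sym}\rho P_{\rm sym}\big)$, where $X:=\sum_i \sigma_+^{(i)}P_{\rm sym}\sigma_-^{(i)}-\hat K P_{\rm sym}$. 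Here I have used $\sum_i\sigma_+^{(i)}\sigma_-^{(i)}=\hat K$ and $[\hat K,P_{\rm sym}]=0$ (which follows from Lemma~\ref{PJcommute}, since $\hat K=J_z+\tfrac N2$) to merge the two anticommutator pieces into $-\hat K P_{\rm sym}$.

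The core step is a single-site matrix-element identity. Because $\sigma_\pm^{(i)}$ shifts $J_z$ by exactly one unit, $P_{\rm sym}\sigma_\pm^{(i)}P_{\rm sym}$ connects only neighbouring Dicke states, and permutation symmetry forces $\langle D_N^{k\pm1}|\sigma_\pm^{(i)}|D_N^k\rangle$ to be independent of $i$; summing over the $N$ sites reproduces $\langle D_N^{k\pm1}|J_\pm|D_N^k\rangle$, so that
\[
P_{\rm sym}\,\sigma_\pm^{(i)}\,P_{\rm sym}=\tfrac1N\,J_\pm P_{\rm sym}.
\]
Sandwiching $X$ between projectors and factoring the quadratic term through the \emph{middle} projector, $P_{\rm sym}\sigma_+^{(i)}P_{\rm sym}\sigma_-^{(i)}P_{\rm sym}=(P_{\rm sym}\sigma_+^{(i)}P_{\rm sym})(P_{\rm sym}\sigma_-^{(i)}P_{\rm sym})$, the summand becomes $i$-independent and the $N$ equal terms give $\sum_i P_{\rm sym}\sigma_+^{(i)}P_{\rm sym}\sigma_-^{(i)}P_{\rm sym}=\tfrac1N J_+J_-P_{\rm sym}$. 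Hence $P_{\rm sym}XP_{\rm sym}=\tfrac1N J_+J_-P_{\rm sym}-\hat K P_{\rm sym}$.

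To finish, I would diagonalize this operator on the bright ladder using $J_+J_-\dicke{N}{k}=k(N-k+1)\dicke{N}{k}$ and $\hat K\dicke{N}{k}=k\dicke{N}{k}$, whence $\tfrac1N k(N-k+1)-k=-\tfrac1N k(k-1)$, i.e. $P_{\rm sym}XP_{\rm sym}=-\tfrac1N \hat K(\hat K-1)P_{\rm sym}$. Tracing against $P_{\rm sym}\rho P_{\rm sym}=p_{\rm sym}\rho_{\rm b}$ and again invoking $[\hat K,P_{\rm sym}]=0$ yields exactly $-\tfrac{p_{\rm sym}}{N}\langle \hat K(\hat K-1)\rangle_{\rho_{\rm b}}$, as stated in Eq.~\eqref{Plower}.

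The subtle point to get right is the reduction to the bright part. The operator $X$ is \emph{not} supported purely on the symmetric subspace: a purely dark input (e.g. the $N=2$ singlet) is pumped up into the bright sector by $\sigma_-^{(i)}$, so $Q_{\rm sym}XQ_{\rm sym}\neq 0$ and a generic $\rho$ would contribute an extra ``feeding'' term. This is precisely why the identity must be understood as the loss of bright weight \emph{sourced by the bright population} $P_{\rm sym}\rho P_{\rm sym}$, which is what produces the prefactor $p_{\rm sym}$ and the dependence on $\rho_{\rm b}$ alone. Establishing $P_{\rm sym}\sigma_\pm^{(i)}P_{\rm sym}=\tfrac1N J_\pm P_{\rm sym}$ (permutation symmetry plus the fact that Dicke states are the only symmetric states) and correctly factoring through the middle projector are the two places demanding care; the remainder is bookkeeping.
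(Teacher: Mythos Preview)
Your proof is correct and follows essentially the same route as the paper: both hinge on the single-site reduction $P_{\rm sym}\sigma_\pm^{(i)}P_{\rm sym}=\tfrac1N J_\pm P_{\rm sym}$ (permutation covariance on the bright irrep) together with the Dicke identity $J_+J_-=\hat K(N-\hat K+1)$ on $j=\tfrac N2$, after which the algebra $\tfrac1N k(N-k+1)-k=-\tfrac1N k(k-1)$ finishes the job. Your Heisenberg-picture repackaging into the single operator $X=\sum_i\sigma_+^{(i)}P_{\rm sym}\sigma_-^{(i)}-\hat K P_{\rm sym}$ is a cosmetic rearrangement of the paper's term-by-term computation of the jump and anticommutator pieces; no new idea is needed, and you are in fact more explicit than the paper about why the identity must be read with $\rho$ replaced by its bright part $P_{\rm sym}\rho P_{\rm sym}$ (without that restriction the jump-term step $\sum_i\tr(P_{\rm sym}\sigma_-^{(i)}\rho\,\sigma_+^{(i)}P_{\rm sym})=\tfrac1N\tr(P_{\rm sym}J_-\rho J_+P_{\rm sym})$ fails, as your $N=2$ singlet remark correctly anticipates).
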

\begin{proof}
Permutation covariance on the bright irrep implies
$P_{\rm sym}\sigma_-^{(i)}P_{\rm sym}=\tfrac{1}{N}P_{\rm sym}J_-P_{\rm sym}$ for each $i$. Hence
\begin{align*}
\sum_i \tr\!\big(P_{\rm sym}\sigma_-^{(i)}\rho\,\sigma_+^{(i)}P_{\rm sym}\big)
= \frac{1}{N}\tr\!\big(P_{\rm sym}J_-\,\rho\,J_+P_{\rm sym}\big)
= \frac{1}{N}\tr\!\big(J_-\,P_{\rm sym}\rho P_{\rm sym}\,J_+\big)
= \frac{p_{\rm sym}}{N}\,\langle J_+J_-\rangle_{\rho_{\rm b}},
\end{align*}
where in the last inequality we used Lemma \ref{PJcommute}.  Since $\sum_i^N \tr(P_{\rm sym}\sigma_+^{(i)}\sigma_-^{(i)}\rho\,P_{\rm sym})=\tr(\hat{K}P_{\rm sym}\rho\,P_{\rm sym})=p_{\rm sym}\langle \hat{K}\rangle_{\rho_b}$,
\[
-\frac{1}{2}\sum_{i=1}^N \tr\left(P_{\rm sym}\left\{\sigma_+^{(i)}\sigma_-^{(i)}, \, \rho\right\}P_{\rm sym}\right) = -p_{\rm sym}\langle \hat{K}\rangle_{\rho_b},
\]
where we used the fact that $[\hat{K},P_{\rm sym}]=0$ since $\hat{K}=J_z + \frac{N}{2}\mathbb I$. On the bright irrep $j=\frac{N}{2}$, one has the Dicke identity $J_+J_-=\hat{K}(N-\hat{K}+1)$ (operator identity when restricted to $P_{\rm sym}$). Substituting yields Eq. \eqref{Plower}.
\end{proof}
\begin{lemma}[Raising-induced leakage]\label{LemRaise}
On the bright ladder $j=\frac{N}{2}$,
\begin{equation}
\sum_{i=1}^N \tr\!\big(P_{\rm sym}\,\mathcal D[\sigma_+^{(i)}]\rho P_{\rm sym}\big)
\;=\; -\frac{p_{\rm sym}}{N}\,\big\langle \hat{H}(\hat{H}-1)\big\rangle_{\rho_b}.
\label{Praise}
\end{equation}
\end{lemma}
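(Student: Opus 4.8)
The plan is to mirror the proof of Lemma~\ref{LemLower} under the particle--hole exchange $\hat K\leftrightarrow\hat H$, where $\hat H:=N-\hat K=\sum_{i=1}^N\sigma_-^{(i)}\sigma_+^{(i)}$ counts the de-excited (ground) qubits. First I would expand the single-site raising dissipator, $\mathcal D[\sigma_+^{(i)}]\rho=\sigma_+^{(i)}\rho\,\sigma_-^{(i)}-\tfrac12\{\sigma_-^{(i)}\sigma_+^{(i)},\rho\}$, sum over $i$, sandwich with $P_{\rm sym}$, and separate the result into a jump (gain) term and an anticommutator (loss) term, handling each exactly as in Lemma~\ref{LemLower}.

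For the gain term I would reuse the permutation-covariance reduction on the bright irrep: since every vector of the $j=\tfrac N2$ irrep is permutation invariant, $P_{\rm sym}\sigma_+^{(i)}P_{\rm sym}=\tfrac1N P_{\rm sym}J_+P_{\rm sym}$ for each $i$. Combining this with Lemma~\ref{PJcommute} ($[P_{\rm sym},J_\pm]=0$) to pull the collective operators through the projectors gives $\sum_i\tr\!\big(P_{\rm sym}\sigma_+^{(i)}\rho\,\sigma_-^{(i)}P_{\rm sym}\big)=\tfrac1N\tr\!\big(J_+P_{\rm sym}\rho P_{\rm sym}J_-\big)=\tfrac{p_{\rm sym}}{N}\langle J_-J_+\rangle_{\rho_b}$, the exact dual of the $\tfrac{p_{\rm sym}}{N}\langle J_+J_-\rangle_{\rho_b}$ obtained for the lowering channel.

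For the loss term I would observe that $\sum_i\sigma_-^{(i)}\sigma_+^{(i)}=\hat H=\tfrac N2\mathbb I-J_z$ commutes with $P_{\rm sym}$, so the anticommutator collapses to $-\tfrac12\tr\!\big(P_{\rm sym}\{\hat H,\rho\}P_{\rm sym}\big)=-p_{\rm sym}\langle\hat H\rangle_{\rho_b}$. Finally I would close the algebra with the particle--hole dual of the Dicke identity $J_+J_-=\hat K(N-\hat K+1)$, namely $J_-J_+=\hat H(N-\hat H+1)$ on the bright ladder, so that $\langle J_-J_+\rangle_{\rho_b}=(N+1)\langle\hat H\rangle_{\rho_b}-\langle\hat H^2\rangle_{\rho_b}$. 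Adding gain and loss, the linear terms combine through $\tfrac{N+1}{N}-1=\tfrac1N$, leaving $-\tfrac{p_{\rm sym}}{N}\big(\langle\hat H^2\rangle_{\rho_b}-\langle\hat H\rangle_{\rho_b}\big)=-\tfrac{p_{\rm sym}}{N}\langle\hat H(\hat H-1)\rangle_{\rho_b}$, which is Eq.~\eqref{Praise}.

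The hard part will not be any single computation but the particle--hole bookkeeping: verifying that the raising anticommutator produces the hole operator $\hat H=N-\hat K$ rather than $\hat K$, and deriving the dual ladder identity $J_-J_+=\hat H(N-\hat H+1)$ with the correct ordering, since a sign or offset slip here would spoil the clean cancellation that yields $\hat H(\hat H-1)$. The permutation-covariance step that replaces each single-site jump by its collective counterpart acting on the bright component $\rho_b$ is inherited verbatim from Lemma~\ref{LemLower}, so no new justification is needed there.
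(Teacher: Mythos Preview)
Your proposal is correct and follows essentially the same route as the paper, which simply states that the proof is identical to Lemma~\ref{LemLower} with the permutation-covariance identity $P_{\rm sym}\sigma_+^{(i)}P_{\rm sym}=\tfrac1N P_{\rm sym}J_+P_{\rm sym}$ and the bright-ladder relation $J_-J_+=(\hat K{+}1)(N{-}\hat K)$ in place of their lowering counterparts. Your formulation via the hole operator $\hat H=N-\hat K$ and the dual identity $J_-J_+=\hat H(N{-}\hat H{+}1)$ is algebraically equivalent to the paper's $(\hat K{+}1)(N{-}\hat K)$, so there is no substantive difference.
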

with $\hat{H}:=N-\hat{K}$ the "hole"-number operator.
\begin{proof}
Identical to Lemma~\ref{LemLower}, using
$P_{\rm sym}\sigma_+^{(i)}P_{\rm sym}=\tfrac{1}{N}P_{\rm sym}J_+P_{\rm sym}$ and the identity $J_-J_+=(\hat{K}{+}1)(N{-}\hat{K})$ on the bright irrep.
\end{proof}
\noindent\emph{Identities on $j=\tfrac N2$.} On the fully symmetric irrep one has $\hat{K}=J_z+\tfrac N2$ and $J_+J_- = \hat{K}(N-\hat{K}+1)$, $J_-J_+=(\hat{K}+1)(N-\hat{K})$. Moreover, by permutation covariance and Schur’s lemma,
$P_{\rm sym}\sigma_\mp^{(i)}P_{\rm sym}=\tfrac1N P_{\rm sym}J_\mp P_{\rm sym}$ for every site $i$.
\begin{proposition}\label{proposition}
The functional leakage from the bright sector is described as
\begin{equation}\label{functional}
\boxed{\ 
\Lambda(\rho)
=\frac{(1-\eta)p_{\rm sym}\gamma_\ell}{N}\left[(n_\ell{+}1)\,\big\langle \hat{K}(\hat{K}{-}1)\big\rangle_{\rho_b}
+ n_\ell\,\big\langle \hat{H}(\hat{H}{-}1)\big\rangle_{\rho_b} \right].
\ }
\end{equation}
\end{proposition}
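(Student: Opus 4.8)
The plan is to obtain Proposition~\ref{proposition} as a direct assembly of the two leakage lemmas, since all the structural input has already been prepared. First I would invoke the observation leading to Eq.~\eqref{lambdaP}: because $[P_{\rm sym},J_\pm]=0$ by Lemma~\ref{PJcommute}, cyclicity of the trace gives $\tr\big(P_{\rm sym}\mathcal D[J_\mp]\rho\,P_{\rm sym}\big)=0$, so the collective dissipators do not change the bright weight, and the permutation-covariant Hamiltonian preserves each spin-$j$ sector and likewise drops out. Consequently only the local part of the generator in Eq.~\eqref{Liov} survives, and $\Lambda(\rho)=-\tr\big(P_{\rm sym}\mathcal L_\ell[\rho]P_{\rm sym}\big)$.

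Next I would expand $\mathcal L_\ell$ into its de-excitation and excitation channels and evaluate each summed trace with the corresponding lemma:
\[
\Lambda(\rho)=-(1-\eta)\gamma_\ell\sum_{i=1}^N\Big[(n_\ell{+}1)\,\tr\big(P_{\rm sym}\mathcal D[\sigma_-^{(i)}]\rho\,P_{\rm sym}\big)+n_\ell\,\tr\big(P_{\rm sym}\mathcal D[\sigma_+^{(i)}]\rho\,P_{\rm sym}\big)\Big].
\]
Substituting Lemma~\ref{LemLower} into the first term and Lemma~\ref{LemRaise} into the second, each supplies a factor $-p_{\rm sym}/N$ multiplying $\langle \hat K(\hat K{-}1)\rangle_{\rho_b}$ and $\langle \hat H(\hat H{-}1)\rangle_{\rho_b}$ respectively. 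The two minus signs---one from the definition of $\Lambda$ and one carried by each lemma---cancel, delivering exactly the boxed expression in Eq.~\eqref{functional}.

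The genuine content rests in the two lemmas, not in this bookkeeping; if I had to reprove them the pivotal step would be the Schur reduction $P_{\rm sym}\sigma_\mp^{(i)}P_{\rm sym}=\tfrac1N P_{\rm sym}J_\mp P_{\rm sym}$, which holds for every site $i$ by permutation covariance on the bright irrep, together with the Dicke identities $J_+J_-=\hat K(N{-}\hat K{+}1)$ and $J_-J_+=(\hat K{+}1)(N{-}\hat K)$ restricted to $j=\tfrac N2$. The main place demanding care---and the most likely source of error---is the anticommutator contribution $-\tfrac12\{\sigma_\pm^{(i)}\sigma_\mp^{(i)},\rho\}$: one must note that $\sum_i\sigma_+^{(i)}\sigma_-^{(i)}=\hat K=J_z+\tfrac N2\mathbb I$ commutes with $P_{\rm sym}$, so its trace collapses to $-p_{\rm sym}\langle\hat K\rangle_{\rho_b}$, after which the algebraic cancellation $N-\hat K+1-N=-(\hat K-1)$ is what factors the result into $\hat K(\hat K-1)$. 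I expect no real obstacle beyond tracking these signs and the normalization by $p_{\rm sym}$ consistently.
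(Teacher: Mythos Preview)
Your proposal is correct and mirrors the paper's approach exactly: the paper's proof is simply ``Inserting Eqs.~\eqref{Plower}--\eqref{Praise} into Eq.~\eqref{lambdaP} we readily get Eq.~\eqref{functional},'' which is precisely the assembly you describe. Your additional commentary on how one would reprove the lemmas (Schur reduction, Dicke identities, handling the anticommutator) is accurate and more detailed than anything the paper writes for this step.
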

\begin{proof}
Inserting Eqs. \eqref{Plower}–\eqref{Praise} into Eq. \eqref{lambdaP} we readily get Eq. \eqref{functional}.
\end{proof}
Proposition \ref{proposition} states that the leakage from the bright sector increases with the excitation density via the term $\langle \hat{K}(\hat{K}-1)\rangle{\rho_b}$ when the state is close to the top of the ladder. By particle–hole symmetry, it likewise increases with hole density via $\big\langle \hat{H}(\hat{H}{-}1)\big\rangle_{\rho_b}$ when the state is close to the bottom of the ladder.

\textbf{Jensen's inequality}:
Since $\mathrm{Var}(\hat{K})=\mathbb{E}[\hat{K}^2]-\mathbb{E}[\hat{K}]^2\ge 0$, where $\mathbb{E}[X]$ is the expectation value (mean) of the random variable $X$,
\[
\mathbb{E}[\hat{K}(\hat{K}-1)]
=\mathbb{E}[\hat{K}^2]-\mathbb{E}[\hat{K}]
\;\ge\; \mathbb{E}[\hat{K}]^2-\mathbb{E}[\hat{K}]
=\mathbb{E}[\hat{K}]\;\big(\mathbb{E}[\hat{K}]-1\big),
\]
and similarly with $K$ replaced by $N{-}K$:
\[
\mathbb{E}\!\big[(N{-}\hat{K})(N{-}\hat{K}{-}1)\big]
=\mathbb{E}\!\big[(N{-}\hat{K})^2\big]-\mathbb{E}[N{-}\hat{K}]
\;\ge\; (N-\mathbb{E}[\hat{K}])^2-(N-\mathbb{E}[\hat{K}]).
\]

\noindent Since $K, H\in\{0,\dots,N\}$, where $H=\langle \hat{H}\rangle_{\rho_b}$, Jensen’s inequality gives
\begin{equation}
\big\langle \hat{K}(\hat{K}{-}1)\big\rangle_{\rho_b} \ge \langle \hat{K}\rangle_{\rho_b}(\langle \hat{K}\rangle_{\rho_b}{-}1),\qquad
\big\langle \hat{H}(\hat{H}{-}1)\big\rangle_{\rho_b} \ge \langle \hat{H}\rangle_{\rho_b}(\langle \hat{H}\rangle_{\rho_b}{-}1),
\label{Jensen}
\end{equation}
hence $\Lambda(\rho)$ is strictly increasing as the excitation density $K$ grows for $K\gtrsim1$ when near the top of the ladder, and similarly increases as the hole density $H$ grows when near the bottom of the ladder, while near the bottom: small $K$ and large $H$. Near the top: large $K$ and small $H$. In sum, by particle–hole symmetry, the first term dominates and grows as the excitation density increases toward the top ($K\rightarrow N$), while the second term dominates and grows as the hole density increases toward the bottom ($H\rightarrow N$).

In the “underheated” regime $\langle \hat{H}\rangle_{\rho_b}=\Theta\!\left(\,N\right)$ (scales linearly with $N$), with proportionality also linear in $\gamma_\ell$ and $n_\ell$, and "overheated" regime $\langle \hat{K}\rangle_{\rho_b}=\Theta\!\left(\,N\right)$, with proportionality also linear in $\gamma_\ell$ and $2n_\ell+1$ as well. Therefore Eq. \eqref{functional}, in these regimes, scales respectively as
\begin{equation}\label{scaling}
\begin{aligned}
\Lambda(\rho) &= \Theta\!\left((1-\eta)p_{\rm sym}\gamma_\ell n_\ell\langle \hat{H}\rangle_{\rho_b} \right) = (1-\eta)p_{\rm sym}\gamma_\ell n_\ell\Theta\!\left(\,N\right),\\
\Lambda(\rho) &= \Theta\!\left((1-\eta)p_{\rm sym}\gamma_\ell(n_\ell{+}1)\langle \hat{K}\rangle_{\rho_b} \right) = (1-\eta)p_{\rm sym}\gamma_\ell(n_\ell{+}1)\Theta\!\left(\,N\right).
\end{aligned}
\end{equation}
These scalings explain the trends in Fig.~\ref{ErgoEta_gamma_r}. 
Because the leakage rate grows linearly with $\gamma_\ell$ and with the “local temperature factor” $(2n_\ell{+}1)$ [Eq.~\eqref{scaling}], the $\alpha_\ell$–activation ridge is robust for small $\gamma_r$ (suppressed only at large $\alpha_\ell$), but it is quenched already at moderate $\alpha_\ell$ when $\gamma_r$ is increased. 
By contrast, the location and width of the $\alpha_c$ lobe are nearly unchanged: the collective detailed balance within the bright sector is unaffected, and $\gamma_\ell$ enters mainly by reducing $p_{\rm sym}$ via leakage rather than by shifting the collective optimum.

\section{Transient charging dynamics near the optimal collective temperature}
\begin{figure}[t]
    \centering
    \begin{subfigure}[b]{0.32\textwidth}
        \includegraphics[width=\textwidth]{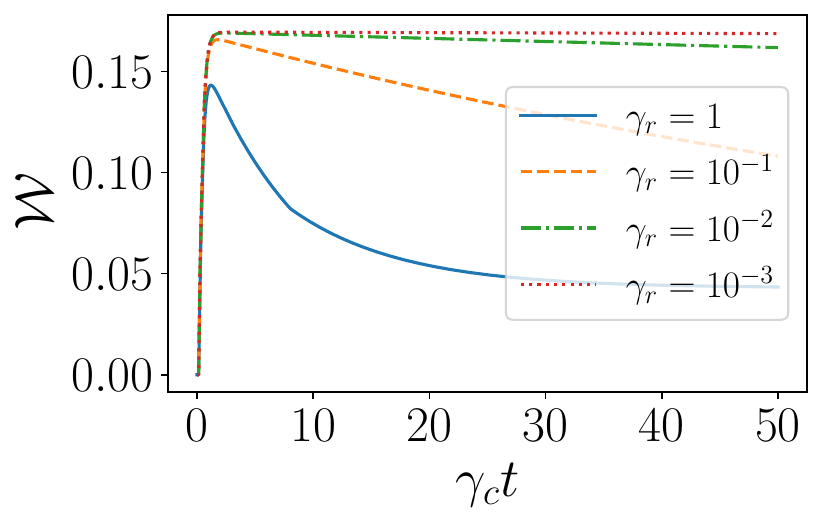}\caption{$N=3$}
        \label{Ergotropy3_gamma_r1}
    \end{subfigure}
    \begin{subfigure}[b]{0.32\textwidth}
\includegraphics[width=\textwidth]{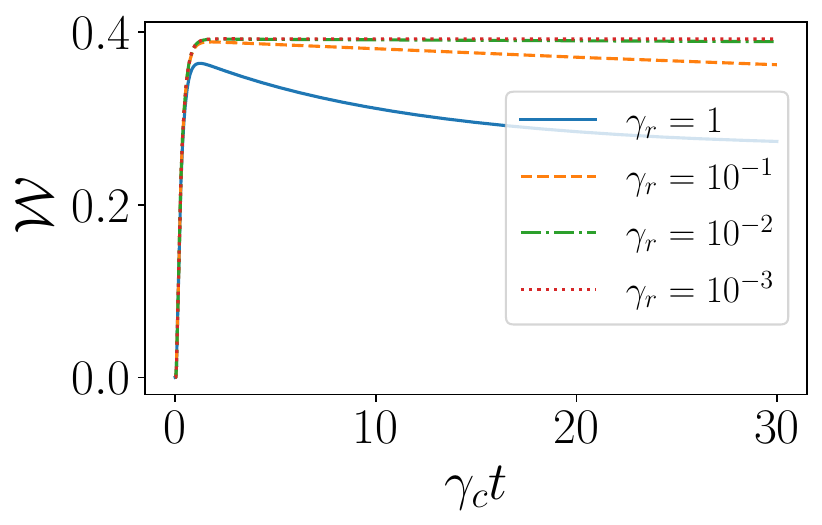}
        \caption{$N=6$}\label{Ergotropy6_gamma_r1}
    \end{subfigure}
    \begin{subfigure}[b]{0.32\textwidth}
\includegraphics[width=\textwidth]{Ergotropy10_gamma_r1.pdf}
        \caption{$N=10$}\label{Ergotropy10_gamma_r1}
    \end{subfigure}
    \captionsetup{justification=justified}
    \caption{\textbf{Charging dynamics near the activation point.}
    Ergotropy $\mathcal W(t)$ versus rescaled time $\gamma_c t$ for $N=3,6,10$ and dissipation ratios $\gamma_r$. The collective bath is set close to its optimal value $\alpha_c^\star$; $\alpha_\ell=0$ is fixed across curves.\justifying} 
    \label{Ergo_gamma_r}
\end{figure}
Figure~\ref{Ergo_gamma_r} shows the time evolution of the ergotropy $\mathcal W(t)$ of the QB when the collective bath is set close to its optimal value $\alpha_c^\star$ (the activation point), for three system sizes $N\in\{3,6,10\}$ and several ratios $\gamma_r$ (with $\alpha_\ell=0$ fixed across curves). Two robust dynamical regimes appear from the competition
\begin{equation}
    \mathcal L \;=\; \eta\gamma_c\!\left[(n_c{+}1)\,\mathcal D[J_-]+n_c\,\mathcal D[J_+]\right]
    \;+\; (1-\eta)\gamma_\ell \sum_{i=1}^N\!\left[(n_\ell{+}1)\,\mathcal D[\sigma_-^{(i)}]+n_\ell\,\mathcal D[\sigma_+^{(i)}]\right].
\end{equation}
\paragraph*{Early-time collective pumping (interference-enabled).}
For $\gamma_c t\ll 1$, the evolution is dominated by the collective jumps $J_\pm$. Because these jumps are \emph{indistinguishable} across emitters, path amplitudes add and produce constructive interference (superradiant matrix elements). Population is driven \emph{up} the bright Dicke ladders with rates $(j-m)(j+m+1)$ [equivalently $(N{-}k)(k{+}1)$], yielding an initial rise of $\mathcal W(t)$ largely \emph{insensitive} to $\gamma_\ell$ at fixed $\alpha_\ell=0$—hence the near collapse of curves at early times. More precisely: if we write the generator as $\mathcal{L} = \eta\gamma_c \mathcal{L}_{\text{coll}} + (1-\eta)\gamma_\ell \mathcal{L}_{\text{loc}}$ and expand
\begin{equation}
\rho(t) = \rho(t_0) + t \mathcal{L}[\rho(t_0)] + \frac{t^2}{2} \mathcal{L}^2[\rho(t_0)] + \cdots \,,
\end{equation}
then, plotted versus the rescaled time $x = \gamma_c t$,
\begin{equation}
\rho(t) = \rho(t_0) + x \eta\mathcal{L}_{\text{coll}}[\rho(t_0)] + x (1-\eta){\gamma_r} \mathcal{L}_{\text{loc}}[\rho(t_0)] + O(x^2).
\end{equation}
Starting from ground state, $\mathcal{L}_{\text{loc}}[\rho(t_0)]\approx0$ for $\alpha_\ell=0$ while $\mathcal{L}_{\text{coll}}$ keeps pumping energy into the battery for $\alpha_c^*$. Thus the leading slope of $\mathcal{W}(t)$ at small times is set by $\mathcal{L}_{\text{coll}}$, and the dependence on $\gamma_r$ for $x = \gamma_c t \ll 1$ is negligible, so the early-time traces of $\mathcal{W}(t)$ for different $\gamma_r$ are nearly identical---hence the curves nearly collapse.

\paragraph*{Crossover, interference erosion, and steady state.}
At later times the local channel injects which-path information (site-resolved jumps $\sigma_\pm^{(i)}$) that \emph{degrades the interference} sustaining collective transport and mixes permutation sectors. The competition sets two coarse time scales,
\begin{equation}\nonumber
\tau_{\rm coll}^{-1}\sim \gamma_c\,\Psi(\alpha_c,N),\qquad
\tau_{\rm loc}^{-1}\sim \gamma_\ell\,\Phi(\alpha_\ell,N),
\end{equation}
with $\Psi$ the interference-enabled collective pumping and $\Phi$ the local which-path mixing. Large $\gamma_r$ pushes the dynamics into a regime where interference is quickly eroded and the asymptotic ergotropy drops; small $\gamma_r$ leaves collective pumping dominant and the plateau value essentially unchanged.

\paragraph*{Size dependence.}
Sensitivity to $\gamma_r$ at steady state diminishes rapidly with $N$:
\begin{itemize}
\item For $N=3$ [Fig.~\ref{Ergotropy3_gamma_r1}], $\mathcal W(t)$ exhibits an overshoot and decay when $\gamma_r=1$, while for $\gamma_r\le 10^{-2}$ it approaches a larger steady value monotonically.
\item For $N=6$ [Fig.~\ref{Ergotropy6_gamma_r1}], the post-peak decay is weaker and steady-state values for $\gamma_r\le 10^{-1}$ are very close.
\item For $N=10$ [Fig.~\ref{Ergotropy10_gamma_r1}], the curves nearly coincide: enhanced collective matrix elements scale with $N$, so $\tau_{\rm coll}\ll \tau_{\rm loc}$ in the shown range, and the steady-state ergotropy becomes almost insensitive to $\gamma_r$.
\end{itemize}
These trends agree with the steady-state maps: near $\alpha_c^\star$ the \emph{location} of the activation window is set by the collective detailed balance (an interference-protected feature, hence only weakly dependent on $\gamma_r$), while the \emph{height} decreases when local which-path processes undermine interference. Growing $N$ amplifies interference-enabled collective rates and suppresses the influence of $\gamma_r$ on both transient dynamics and the asymptote.

\twocolumngrid

\bibliography{References}

\end{document}